\newtheorem{theorem}{Theorem}
\newtheorem{fact}[theorem]{Fact}
\newtheorem{corollary}{Corollary}
\newtheorem{lemma}{Lemma}
\newtheorem{proposition}{Proposition}
\titlespacing*{\section}{0pt}{1.1\baselineskip}{\baselineskip}
\def\prob#1#2{\mbox{Pr}_{#1}\left[ #2 \right]}
\def\expec#1#2{{\mathbb{E}}_{#1}\left[ #2 \right]}
\def\R{\mathbb{R}}
\newcommand{\beq}{\begin{equation}}
\newcommand{\eeq}{\end{equation}}
\newcommand{\alg}{\text{\sc Alg}}
\newcommand{\timebound}{\text{\sc Time}}
\newcommand{\permutations}[1]{\mathcal{S}_{#1}}
\newcommand{\closure}[1]{\text{cl}(#1)}
\pretocmd\@bibitem{\color{black}\csname keycolor#1\endcsname}{}{\fail}
\newcommand\citecolor[1]{\@namedef{keycolor#1}{\color{blue}}}
\newcommand \sge[1] {{\color{black} #1}}
\definecolor{darkslategray}{rgb}{0.18, 0.31, 0.31}
\newcommand \mw[1] {{\color{red} #1}}
\newcommand \track[1] {{\color{black} #1}}
\renewcommand\sout[1]{{\iffalse #1 \fi}}
\tikzstyle{vertex}=[fill=black, draw=black, shape=circle, minimum size = 5pt,inner sep=0pt]
\begin{document}

\title{Hardness and Approximation of Submodular Minimum Linear Ordering Problems\thanks{}}

\author[1]{Majid Farhadi}
\author[2]{Swati Gupta\footnote{Corresponding Author: Swati Gupta,  Affiliation: Massachusetts Institute of Technology, 
E-mail Address: swatig@mit.edu. A part of this work was done while all the authors were at Georgia Institute of Technology.}}
\author[1]{Shengding Sun}
\author[4]{Prasad Tetali}
\author[1]{Michael C. Wigal}

\affil[1]{\small Georgia Institute of Technology \protect \\
{\small \tt \{farhadi, ssun313, wigal\}@gatech.edu}}
\affil[2]{\small Sloan School of Management, Massachusetts Institute of Technology\protect \\ {\small \tt swatig@mit.edu}}
\affil[3]{\small Cargnegie Mellon University\protect \\
{\small \tt tetali@cmu.edu}}

\maketitle

\begin{abstract}
The minimum linear ordering problem (MLOP) generalizes well-known combinatorial optimization problems such as minimum linear arrangement and minimum sum set cover. MLOP seeks to minimize an aggregated cost $f(\cdot)$ due to an ordering $\sigma$ of the items (say $[n]$), i.e., $\min_{\sigma} \sum_{i\in [n]} f(E_{i,\sigma})$, where $E_{i,\sigma}$ is the set of items mapped by $\sigma$ to indices $[i]$. Despite an extensive literature on MLOP variants and approximations for these, it was unclear whether the graphic matroid MLOP was NP-hard. We settle this question through non-trivial reductions from mininimum latency vertex cover and minimum sum vertex cover problems. We further propose a new combinatorial algorithm for approximating monotone submodular MLOP, using the theory of principal partitions. This is in contrast to the rounding algorithm by Iwata, Tetali, and Tripathi [ITT2012], using Lov\'asz extension of submodular functions. We show a $(2-\frac{1+\ell_{f}}{1+|E|})$-approximation for monotone submodular MLOP where $\ell_{f}=\frac{f(E)}{\max_{x\in E}f(\{x\})}$ satisfies $1 \leq \ell_f \leq |E|$. Our theory provides new approximation bounds for special cases of the problem, in particular a $(2-\frac{1+r(E)}{1+|E|})$-approximation for the matroid MLOP, where $f = r$ is the rank function of a matroid. We further show that \track{minimum latency vertex cover (MLVC)} is $\frac{4}{3}$-approximable, by which we also lower bound the integrality gap of its natural LP relaxation, which might be of independent interest. 
\end{abstract}

\section{Introduction}

In the Minimum Linear Ordering Problem (MLOP), \sout{formally introduced by \cite{ITT12}, }given a finite set of elements $E$, and a function over the subsets, $f : 2^E \rightarrow \R$, one seeks an ordering of the elements, i.e., a bijection $\sigma: E \rightarrow \{1, \ldots, |E|\}$, that minimizes the aggregated cost over prefixes (or equivalently suffixes) of the ordering. \track{In other words, MLOP is of the form}
$\min_{\sigma \in \permutations{E}} \sum_{i = 0}^{|E|} f( E_{i,\sigma} )$,
where $E_{i,\sigma} = \{e \in E: \sigma(e) \le i \}$, and $\permutations{E}$ is the set of permutations of $E$. This is in contrast to the classical phenomenon of minimizing a cost function over a combinatorial subset of the powerset of the elements, for example, as in the set cover problem or the minimum spanning tree problem. \sout{General MLOP seems to be intractable, thus one often assumes further properties on $f$, for example, submodularity, monotonicty, symmetry, and so on.}

\track{It is known that the MLOP is NP-hard even with additional assumptions, for example, when the set function $f(\cdot)$ is monotone and submodular, or symmetric and submodular, or supermodular (see Table \ref{tab:hardness}).  Despite a rich literature on hardness of MLOP variants, it is unclear whether the problem remains NP-hard for many structured cases, for instance when  $f(\cdot)$ is the rank function of a matroid (i.e., submodular, monotone, bounded by set size, and integral). Furthermore, much is still unknown about the related approximation guarantees.} In this work, we push the \track{envelope} of hardness and approximability for variants of submodular MLOP. \track{In particular, we show the following:}

\begin{enumerate}
\item \track{Matroid MLOP, graphic matroid MLOP, co-graphic matroid MLOP, {\it and} minimum latency vertex cover (MLVC) are NP-hard.}
\item \track{Graphic matroid is polynomially-solvable for some classes of graphs.}
\item \track{We improve the approximation factors for matroid MLOP to $2 - \frac{1+r(E)}{1+|E|}$ and  minimum latency set cover (MLSC) to $2-\theta$, where $\theta \geq \frac{2}{|E|+1}$, and it depends on the instance, by exploiting the theory of principal partitions. These results provide a refinement of the previously best-known factors for these problems \cite{ITT12}.} 
\item \track{We also show that MLVC can be approximated to $\frac{4}{3}$, improving upon the previously best approximation achieving a factor 2 \cite{ITT12}. We analyze the fractional dimension of a related poset to achieve this bound. We further lower bound the integrality gap of the natural LP relaxation for MLVC.}
\end{enumerate}

\track{Here, matroid variants of MLOP are when $f(\cdot)$ is the rank function of the corresponding matroid, and the minimum latency set (vertex) cover problems are defined on a hypergraph (graph) where the vertices must be ordered so that the sum of the maximum indices at which every hyperedge (edge) is covered must be minimized. We include precise definitions of these problems in Section \ref{sec:prelims}.}

We summarize our hardness and approximation results \track{on MLOP} in Tables \ref{tab:hardness} and \ref{tab:approximation}. The paper is structured as follows: 
\track{We give an overview of our results and techniques in Section \ref{sec:results}, discuss related work on MLOP variants in Section \ref{sec:related_work}, and present preliminaries in Section \ref{sec:prelims}. We discuss detailed proofs of our results in Sections \ref{sec:MMLOP} to \ref{sec:monotone}. We finally conclude the paper with open problems in Section \ref{sec:future}. }

\sout{We further show that some special cases of graphic MLOP are indeed polynomially solvable. }

\section{Overview of \track{results and} techniques}\label{sec:results}

\track{We next present an overview of our results and techniques.}

\paragraph{1. Hardness of matroid MLOP:}

\sout{ We show that the optimal solutions for matroid MLOP for uniform matroids have a specific {\it flat-like} structure.} 
\sout{  Detecting this flat-like structure is equivalent to checking if a matroid is uniform,}

\track{We first show in Section \ref{sec:MMLOP} the NP-hardness of matroid MLOP, by observing the fact that a uniform matroid on a ground set $E ~(|E| = n)$ with rank $k$ has the unique property (up to isomorphism) of having ${n \choose k}$ independent sets of size $k$. We will show that {\it any} optimal matroid MLOP solution can detect this, thereby reducing the ``uniform matroid isomorphism"
 problem (known to be NP-hard \cite{OW02}) to matroid MLOP. }

\begin{restatable}{theorem}{MMLOPTHM}
\label{thm:MMLOP}
Matroid MLOP is NP-hard.
\end{restatable}
Furthermore, we show that matroid MLOP in decision form on a family of matroids shares the same complexity class with matroid MLOP in decision form on the matroidal {\it dual} family. This observation will be useful for upcoming results.

\paragraph{2. Hardness of graphic matroid MLOP:} Next, in Section~\ref{sec:Graphic_MLOP}, we further restrict $f(\cdot)$ to the special case of the rank function of any {\it graphic} matroid. Tutte \cite{tutte1959matroids} gave a complete minor-free characterization for graphic matroids. In particular, graphic matroids are {\it regular}, i.e., representable using a totally unimodular matrix, and in particular, do not contain a rank-2 uniform matroid over 4 elements as a minor (e.g., see \cite{O06}). Therefore for graphic matroids, the reduction from uniform matroid isomorphism does not suffice. We show that it is NP-hard using a series of reductions beginning at \track{the minimum sum vertex cover (MSVC) on simple graphs $G$, which we show reduces to the minimum latency vertex cover (MLVC) on the complement graph $\bar{G}$, which we show finally reduces to the graphic matroid MLOP.} 

\begin{restatable}{theorem}{GMMLOPTHM}
\label{thm:GMMLOP}
Graphic matroid MLOP is NP-hard.
\end{restatable}

To reduce MLVC to the graphic matroid MLOP with graph $G = (V,E)$, we first create an auxiliary graph $H$ by adding a new vertex $z$ to $V$, and a weighted star graph $T$ centered at $z$, connected to each vertex in $V$. We choose the edge weights for $T$ in such a way that they each induce a distinct flat in any optimal ordering for weighted graphic MLOP. This implies solving (weighted) graphic matroid MLOP for $H$ is equivalent to solving MLVC on $G$. As we can keep the magnitude of the weights controlled, this allows us to reduce MLVC to graphic matroid MLOP, thereby showing hardness of the latter. \track{As a by product, we also show that MLVC and the co-graphic matroid MLOP are NP-hard, which was not known before our work.}





\paragraph{3. Improved approximation of MLSC:} 

\track{We present two different approximation algorithms for minimum latency set cover problem, both of which refine the best-known constant 2-approximation with improvements in different settings, and use different techniques. 
The MLSC can be modeled as a covering problem on a hypergraph (see Section \ref{sec:related_work} for details), and in Section \ref{sec:MLVC}, we present our first randomized approximation algorithm based on scheduling theory, whose approximation factor depends on rank of the hypergraph, i.e. the maximum cardinality of its hyperedges (or in other words, the maximum cardinality of the candidate sets) (Theorem \ref{thm:MLVC}). Our second approach simply applies the approximation for monotone submodular MLOP to MLSC, as the latter is a special case of the former (Corollary \ref{cor:MLSC_MSMLOP}). Both the resultant approximation factors depend on the properties of the instance, and none of them dominate the other on all instances.} They both improve on the previous best-known approximation bound for MLSC of $2$, using a reduction to the single machine scheduling problem with precedence constraints \cite{chekuri1999precedence,hall1997scheduling,margot2003decompositions}.


\begin{restatable}{theorem}{MLVCTHM}
\label{thm:MLVC}
    There is a randomized polynomial time algorithm that approximates MLSC within factor $2-\frac{2}{1+\ell}$, \track{where $\ell$ is the maximum cardinality among all hyperedges of $H$}.
\end{restatable}

The idea for achieving \track{our improved} approximation bound for MLSC is to exploit the structural complexity of the precedence constraints (corresponding to a poset) for the subsequent scheduling instance. Bounding the fractional dimension of this poset by $1+\ell$ allows us to utilize a state-of-the-art scheduling algorithm by \track{Amb\"{u}hl et al.} \cite{AMMS11} to approximate the objective by a factor of $2-\frac{2}{1+\ell}$. \track{For the special case where the input is a graph, this algorithm gives a factor $\frac{4}{3}$ \-approximation for MLVC. 

\begin{restatable}{corollary}{MLVCCOR}
\label{cor:MLVC}
    There exists a randomized polynomial time factor $\frac{4}{3}$-approximation algorithm for MLVC.
\end{restatable}

To the best of our knowledge, this is the current best approximation factor for MLVC.} \track{For $\ell$-uniform regular hypergraphs, i.e., where each hyperedge has size $\ell$ and each vertex is contained in the same amount of hyperedges, we show that a simple LP relaxation also achieves the $2-\frac{2}{1+\ell}$ approximation factor. In particular, the LP relaxation gives a factor $\frac{4}{3}$-approximation algorithm for MLVC on regular graphs. From this result, we raise the question whether the LP relaxation for MLSC on $\ell$-uniform hypergraphs has the same $2-\frac{2}{1+\ell}$ approximation factor. Indeed, a better approximation factor seems unlikely, as we observe a lower bound of $2-\frac{2}{1+\ell}$ on integrality gap of the LP relaxation for MLSC on $\ell$-uniform hypergraphs, matching our current approximation result. }


\track{In Section \ref{subsec:MLSC_MSMLOP}, we discuss the use of principal partitions to obtain an approximation for MLSC as a special case.}

\begin{restatable}{corollary}{MLSCMSMLOP}
\label{cor:MLSC_MSMLOP}
    There is a deterministic factor $(2-\frac{\Delta+|E|}{\Delta(1+|V|)})$-approximation algorithm for MLSC, where $\Delta$ is the maximum degree of hypergraph $H = (V,E)$. 
\end{restatable}

\track{Note that $\Delta=\max_{v\in V}|\{e\in E:v\in e\}|$. Together Theorem \ref{thm:MLVC} and Corollary \ref{cor:MLSC_MSMLOP} imply that MLSC can be approximated within factor $2-\theta$, where $\theta=\max\{\frac{2}{1+\ell},\frac{\Delta+|E|}{\Delta(1+|V|)}\}$. Note $\theta$ can be very small, for example, for $\ell$-uniform hypergraphs where $\ell$ is large. However, since $\theta \geq \frac{2}{n+1}$, we get a slight improvement over 2.}

\paragraph{4. Polynomially solvable instances of matroid MLOP:} In Section \ref{subsec:cactus_graphs}, we propose a novel characterization of matroid MLOP, wherein one can search through bases and permutations of bases, rather than permutations of the ground set. In particular, whenever the number of bases of a matroid is small (polynomial in $|E|$) and the rank of the matroid is also small (constant), we show that matroid MLOP becomes polynomial time solvable.

\begin{restatable}{theorem}{SmallBasisMLOP}
\label{thm:small_basis_MLOP}
    Let ${\cal X}$ be a family of matroids \track{such that for all $M = (E,r_M) \in {\cal X}$ with $|E| = m$, the number of bases of $M$ is $|\mathcal{B}(M)| \in O(g(m))$, and the rank of $M$ is $r_M(m) \in O(h(m))$, for some }$g, h : \mathbb{Z}_+ \to \mathbb{Z}_+$. \sout{such that $g(m)$ bounds the size of ${\cal B}(M)$, and let $h : \mathbb{Z}_+ \to \mathbb{Z}_+$ such that $h(m)$ bounds the size of $r(E)$ {[\color{blue} what is the size of $r(E)$? isn't that always just at most $m$? why not simply say that $|\mathcal{B}(M)| \in O(g(m))$ and $r_M(E) \in O(h(m))$]}.} Then, \track{every matroid MLOP instance in} ${\cal X}$ can be solved in time $O(g(m) \cdot poly(m,g(m)) \cdot (h(m))!)$ In particular, if $g$ is polynomial in $m$ and $h$ is bounded by a constant, then matroid MLOP for ${\cal X}$ is in $P$.
\end{restatable}

For the special case of graphic matroid MLOP on cactus graphs\footnote{A graph $G$ is a cactus graph if every maximal 2-connected subgraph of $G$ is a cycle or an edge. }, we show that an optimal MLOP ordering can be found by fixing any spanning tree of the cactus graph. To find an ordering of the edges of the spanning tree, we show that a greedy ordering on the cycles of the cactus graph suffices (even though the size of the basis may not be logarithmic in size with respect to the ground set).

\begin{restatable}{theorem}{CacMLOPTHM}
\label{thm:CacMLOP}
Given a simple cactus graph $G$, there is a polynomial time algorithm that solves graphic matroid MLOP on $G$.
\end{restatable}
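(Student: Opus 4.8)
The plan is to show that, on a cactus, the MLOP objective collapses---up to an additive constant depending only on $G$---to a simple combinatorial quantity that a greedy rule optimizes. Let $r$ be the rank function of the graphic matroid of $G$, so that adding one edge to a subgraph changes $r$ by $0$ or $1$; set $m=|E|$, and let $C_1,\dots,C_k$ be the (pairwise edge-disjoint) cycles of $G$, with $\ell_i:=|C_i|$, where $k=m-r(E)$ is the cyclomatic number. First I would rewrite the objective for a fixed ordering $\sigma$: call the step from $E_{i-1,\sigma}$ to $E_{i,\sigma}$ a \emph{flat step} if $r$ does not increase there. Since $r(E_{0,\sigma})=0$ we get $r(E_{i,\sigma})=i-\#\{\text{flat steps among }1,\dots,i\}$, and there are exactly $m-r(E)=k$ flat steps; summing over $i$,
\[
\sum_{i=1}^{m} r(E_{i,\sigma}) \;=\; \binom{m+1}{2} - k(m+1) + \sum_{j=1}^{k} f_j ,
\]
where $f_1<\dots<f_k$ are the positions of the flat steps. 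So minimizing the MLOP objective is the same as minimizing $\sum_j f_j$.

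Next I would use the cactus structure to pin down the flat steps. A step at position $i$ is flat exactly when the edge $e=\sigma^{-1}(i)$ lies in the span of $E_{i-1,\sigma}$, i.e.\ when $E_{i-1,\sigma}\cup\{e\}$ contains a cycle through $e$; in a cactus the only cycle through $e$ is the unique block cycle $C_i$ containing $e$ (and a bridge lies in no cycle). Hence bridges never give flat steps, and each $C_i$ contributes exactly one flat step---at the position of its last-placed edge---so $\sum_j f_j=\sum_{i=1}^{k}\max\sigma(C_i)$. Because the $C_i$ are edge-disjoint, this value depends only on the partition of the positions $[m]$ into the sets $S_i:=\sigma(C_i)$ ($|S_i|=\ell_i$) together with the remaining bridge positions, and conversely every such partition is realized by some ordering (place the edges of $C_i$ into the positions of $S_i$ in any internal order; a spanning tree is obtained by discarding, from each $S_i$, the edge in the largest position). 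Thus the problem reduces to: choose pairwise disjoint $S_1,\dots,S_k\subseteq[m]$ with $|S_i|=\ell_i$ minimizing $\Phi:=\sum_{i=1}^{k}\max S_i$.

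Finally I would solve this assignment problem. Relabeling so $\max S_1<\dots<\max S_k$, we have $\max S_p\ge\max(S_1\cup\cdots\cup S_p)\ge|S_1\cup\cdots\cup S_p|=|S_1|+\dots+|S_p|$, hence $\Phi\ge\sum_{p=1}^{k}(k-p+1)\,|S_p|\ge\sum_{p=1}^{k}(k-p+1)\,\ell_{(p)}$ by the rearrangement inequality, where $\ell_{(1)}\le\dots\le\ell_{(k)}$ is the sorted list of cycle lengths. Equality holds for the ordering that lists the edges of the shortest cycle first (consecutively), then those of the next shortest, and so on, and finally all bridges: then $\max S_{(p)}=\ell_{(1)}+\dots+\ell_{(p)}$ and $\Phi=\sum_p(k-p+1)\ell_{(p)}$. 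So this ordering is optimal, and it is computed in polynomial time by finding the block decomposition of $G$ and sorting its cycles by length, which proves the theorem.

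The routine parts are the block decomposition and the last scheduling argument; the crux is the middle step---arguing that on a cactus the flat steps are in bijection with the cycles (one per cycle, at its last-placed edge), so that the objective depends on $\sigma$ only through $\sum_i\max\sigma(C_i)$. This is exactly where edge-disjointness of the cycles of a cactus is used, and it is what fails for general graphic matroids. I would also record the degenerate checks (a tree has $k=0$, so every ordering is optimal; a single cycle is handled directly) and note that no step above uses connectedness of $G$, so the statement holds for all simple cactus graphs.
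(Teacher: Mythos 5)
Your proof is correct, and it takes a genuinely different route from the paper's. The paper proves this via its fixed-basis reformulation (Proposition \ref{prop:fixed_basis_MLOP}): it first shows by an exchange argument that the choice of spanning tree is immaterial, then orders the tree edges block by block with blocks sorted by size, and asserts that ``it is not hard to verify'' that this ordering is optimal. You instead work directly with the prefix ranks: the identity $\sum_{i=1}^m r(E_{i,\sigma})=\binom{m+1}{2}-k(m+1)+\sum_j f_j$ (with $f_j$ the positions of the rank-flat steps) holds for any matroid, and the cactus structure enters only through the observation that the circuits of $M[G]$ are exactly the edge-disjoint block cycles, so the flat steps are in bijection with the cycles, each occurring at its last-placed edge; this reduces the problem to minimizing $\sum_i \max S_i$ over disjoint position sets of prescribed sizes, which the rearrangement inequality solves exactly. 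Your approach buys a complete optimality proof of the greedy rule (the step the paper leaves to the reader), a closed-form optimum $\binom{m+1}{2}-k(m+1)+\sum_{p}(k-p+1)\ell_{(p)}$, and a transparent explanation of where edge-disjointness of cycles is essential; the paper's approach buys the reusable bases-and-permutations characterization that it also exploits in Theorem \ref{thm:small_basis_MLOP} and in Open Question \ref{open:fixed_basis}. Both yield orderings with the same objective value (yours places each cycle's edges consecutively; the paper's interleaves the non-tree edge of each block immediately after its tree edges), and both run in polynomial time via the block decomposition.
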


Furthermore, in Section \ref{subsec:MLVC_poly}, we show how if a graph is regular, then the optimal objective values for MLA, MSVC, and MLVC are all related by linear shifts \track{in the objective parameterized by the number and degree of the vertices}. As many instances of regular graphs have polynomial time algorithms (e.g. see \cite{diaz2002survey,petit2013addenda,bezrukov1999edge,lai1999survey}) this leads to many new polynomial time algorithms for MSVC and MLVC for many instances of regular graphs.

\paragraph{5. Improved approximation for monotone submodular MLOP:} \track{For} monotone submodular MLOP, Iwata, Tetali, and Tripathi \cite{ITT12} provided a $(2-\frac{2}{|E|+1})$-approximation algorithm for monotone submodular MLOP based on Lov{\'a}sz extension in 2012. Another natural approach is to use the theory of principal partitions induced by a given submodular function \cite{PrincipalPartitionKK,PrincipalPartitionF}. The principal partition of a ground set $E$ of a monotone submodular function is a chain of subsets $\mathcal{C} = \emptyset \subseteq S_1 \subseteq \hdots \subseteq S_k = E$, such that each $S_i$ is the unique maximal minimizer of $f(S) - \lambda_i |S|$ for some $\lambda_i \in \mathbb{R}_+$. As early as 1992, Pisaruk considered completing the chain $\mathcal{C}$ randomly to add subsets of missing cardinality  (~\cite{pisaruk1992boundaries}, c.f. \cite{fokkink2019submodular}). Later in 2019, Fokkink et al. \cite{fokkink2019submodular} considered the same algorithm for the submodular search problem, which includes monotone submodular MLOP as a special case. They showed that this algorithm has an approximation ratio based on the total curvature\footnote{The total curvature of a set function $f$ is defined to be $\max_{x\in E}\frac{f(\{x\})+f(E-x)-f(E)}{f(\{x\})}$, e.g., see \cite{fokkink2019submodular}.} of the submodular function, and is always at most 2.

It was not known how these two results compare, as they use very different techniques. We show that the algorithm based on principal partitions always has better approximation guarantee than the $(2-\frac{2}{|E|+1})$ bound of Lov{\'a}sz extension relaxation proven in \cite{ITT12}.

\begin{restatable}{theorem}{MSMLOP}
\label{thm:MSMLOP}
Let $f:2^E \to \R$ be a non-trivial, normalized and monotone submodular function. There exists a factor $(2-\frac{1+\ell_{f}}{1+|E|})$-approximation algorithm to MLOP with $f(\cdot)$ in polynomial time, where $\ell_{f}=\frac{f(E)}{\max_{x\in E}f(\{x\})}$. 
\end{restatable}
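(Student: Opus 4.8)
The plan is to follow the principal-partition/kernel-rounding paradigm of Iwata–Tetali–Tripathi but to extract a sharper fractional solution. First I would form the continuous relaxation: minimize the Lovász extension of the MLOP objective over the permutahedron $P(f)$ of the ground set $E$. Since the MLOP objective $\sigma \mapsto \sum_i f(E_{i,\sigma})$ is (up to the standard identification) the value of the Lovász extension of $f$ at a point in the permutahedron, minimizing over the permutahedron gives a convex program solvable in polynomial time, whose optimum lower bounds $\mathrm{OPT}$. The key new ingredient is to analyze this relaxation through the \emph{principal partition sequence} of $f$: the nested family of minimizers of $f(S) - \lambda|S|$ as $\lambda$ ranges over $\mathbb{R}_+$ gives a chain $\emptyset = F_0 \subsetneq F_1 \subsetneq \cdots \subsetneq F_k = E$ together with critical values $\lambda_1 > \cdots > \lambda_k$. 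I would argue that there is an optimal fractional point supported on (a refinement of) this chain, so that the fractional optimum can be written explicitly in terms of the $\lambda_j$ and the layer sizes $|F_j| - |F_{j-1}|$.

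Second, I would set up the rounding. Given the chain, pick a uniformly random threshold (or use a deterministic "kernel" threshold as in \citet{ITT12}) and order elements so that lower-$\lambda$ principal-partition layers come first, breaking ties within a layer uniformly at random. The cost incurred by an element of layer $F_j \setminus F_{j-1}$ when it is inserted is bounded using submodularity by its marginal, and the marginals inside a layer are controlled by the critical value $\lambda_j$. Summing the expected prefix costs layer by layer, I expect to get an upper bound on the rounded solution of the form $\big(2 - \tfrac{1+c}{1+|E|}\big)$ times the fractional optimum, where the improvement over the factor $2 - \tfrac{2}{1+|E|}$ comes from the fact that the first layer (containing a maximizer singleton) contributes a larger additive saving the flatter $f$ is. The parameter $\ell_f = f(E)/\max_x f(\{x\})$ enters because $f(E) = \sum_j (\text{layer contributions})$ while $\max_x f(\{x\}) \ge \lambda_1$ bounds the first layer's per-element cost; chaining these inequalities converts the layer-wise bound into the stated closed form. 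I would also need the easy bounds $1 \le \ell_f \le |E|$ (the lower bound from monotonicity and subadditivity applied to $E$ as a union of singletons, the upper bound from $f(E) \le \sum_x f(\{x\}) \le |E|\max_x f(\{x\})$), so that the approximation ratio is genuinely in $[2 - \tfrac{2}{1+|E|},\ 2 - \tfrac{2}{1+|E|}\cdot\tfrac{|E|+1}{2})$ — i.e., never worse than the prior bound.

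The main obstacle, I expect, is the tight accounting in the rounding step: one must show that the expected prefix-sum cost of the randomly (or deterministically) rounded ordering is bounded by the fractional optimum times exactly $2 - \tfrac{1+\ell_f}{1+|E|}$, not merely by $2\cdot\mathrm{OPT}_{\mathrm{frac}}$ with an unquantified lower-order gain. This requires pinning down which fractional point the relaxation actually returns (hence the need to prove the optimum lies on the principal-partition chain) and then matching, layer by layer, the discrete insertion cost against the corresponding piece of the continuous objective, using the critical values $\lambda_j$ as the bridge. A secondary technical point is handling ties and the boundary layers (the singleton-maximizer layer and the top layer $E$) so that the additive terms $1 + \ell_f$ and $1 + |E|$ come out with the correct constants; I would isolate these as small self-contained lemmas about the permutahedron and the Lovász extension before assembling the final bound.
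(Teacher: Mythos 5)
Your algorithm coincides with the paper's: compute the principal partition $\emptyset=\Pi_0\subsetneq\cdots\subsetneq\Pi_s=E$ and output any linear extension of that chain. But as a proof the proposal has a genuine gap: the entire quantitative content of the theorem --- that this ordering is within a factor $2-\frac{1+\ell_f}{1+|E|}$ of optimal --- is deferred (``I would argue\ldots'', ``I expect to get\ldots'', ``The main obstacle, I expect, is the tight accounting''). That accounting is the whole proof. Moreover, your route creates extra unproven obligations that the paper never incurs. You lower-bound $\mathrm{OPT}$ by the Lov\'asz-extension relaxation over the permutahedron and then need the structural claim that its fractional optimum is supported on the principal-partition chain; the paper instead lower-bounds the \emph{integral} optimum directly: for every permutation $\sigma$ and every prefix $E_{j,\sigma}$ with $|\Pi_{i-1}|<j\le|\Pi_i|$, the inequality $f(E_{j,\sigma})-\lambda_i|E_{j,\sigma}|\ge f(\Pi_i)-\lambda_i|\Pi_i|$ with $\lambda_i=\frac{f(\Pi_i)-f(\Pi_{i-1})}{|\Pi_i|-|\Pi_{i-1}|}$ gives, after summing, $\mathrm{OPT}\ge\frac{1}{2}(|E|+1)f(E)-\frac{1}{2}\sum_i\bigl(f(\Pi_i)|\Pi_{i-1}|-f(\Pi_{i-1})|\Pi_i|\bigr)$. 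No convex program is solved and no support claim is needed.

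Your intuition for where $\ell_f$ enters is also mislocalized. The gain does not come from ``the first layer containing a maximizer singleton''; it comes from the upper bound on the output ordering, in which on \emph{every} segment $(\Pi_{i-1},\Pi_i]$ the function value is at most $\min\{f(\Pi_{i-1})+\kappa_f(j-|\Pi_{i-1}|),\,f(\Pi_i)\}$, i.e.\ a ramp of slope $\kappa_f$ followed by a plateau. Summing these ramps produces the term $-\frac{f(E)^2}{2\kappa_f}$ in the upper bound $f(E)|E|-\frac{f(E)^2}{2\kappa_f}+\frac{f(E)}{2}-\sum_i(f(E)-f(\Pi_i))(|\Pi_i|-|\Pi_{i-1}|)+\sum_i\frac{f(\Pi_{i-1})(f(\Pi_i)-f(\Pi_{i-1}))}{\kappa_f}$, and it is the ratio of the non-summation terms, $\bigl(f(E)|E|-\frac{f(E)^2}{2\kappa_f}+\frac{f(E)}{2}\bigr)\big/\bigl(\frac{1}{2}(|E|+1)f(E)\bigr)=2-\frac{1+\ell_f}{1+|E|}$, that yields the stated factor; a separate inequality (reducible to $\delta_j\le\kappa_f\Delta_j$ after rewriting in differentials) shows the remaining summation terms only help. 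You would also need the preprocessing step the paper proves as a lemma --- contracting the unique maximal set $U$ with $f(U)=0$, which must be a prefix of any optimal ordering --- before the principal partition machinery (which requires $f(S)>0$ for $S\ne\emptyset$) applies. Until the ramp-and-plateau upper bound and the critical-value lower bound are actually derived and compared term by term, the claimed constant is unsubstantiated.
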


As $\ell_f$ is bounded \track{below by 1}, the above result is a refinement of the previous $(2 - \frac{2}{|E| + 1})$-approximation \cite{ITT12}. \track{Our result is also independent from the analysis in Fokkink et al. \cite{fokkink2019submodular} using total curvature, and leads to nice approximation bounds for some classes of matroids where $\ell_f$ is large.} 
For example, for graphic matroid MLOP on \track{connected} graphs of \track{bounded} maximum degree \track{$\Delta$ with $\Delta > 1$}, we obtain a $(2 - \frac{2}{\Delta})$-approximation \track{asymptotically}. \track{This constant factor improvement from 2 cannot be obtained using either the Lov{\'a}sz extension bound in \cite{ITT12} or the total curvature bound in \cite{fokkink2019submodular}. }

Our results have led to multiple open questions which may be of independent interest, and are discussed in Section~\ref{sec:future}.

\section{Related work}\label{sec:related_work}

MLOP was formally introduced by Iwata et al. \cite{ITT12}, generalizing many \track{well-known} combinatorial optimization problems. \track{In this section, we describe related work in combinatorial optimization that can be viewed as different instances of MLOP.} Some of these \track{MLOP variants (e.g., minimum latency set cover)} will be utilized in our proof that the graphic MLOP is NP-hard, as \track{depicted} in Figure \ref{fig:summary}.
\sout{such as the classical minimum linear arrangement (MLA) problem. }

\begin{figure}[t]
\centering
\begin{tikzpicture}[->,>=stealth',shorten >=1pt,auto,thick]

        \node[ellipse,draw,minimum size=1.5cm] (MLOP) {MLOP};
        
        \node[ellipse,draw,text width=2cm,align=center] (supMLOP) [below=1cm of MLOP] {supermodular MLOP};
        
        \node[ellipse,draw,text width=1cm,minimum size=1cm,align=center] (MSSC) [left=0.5cm of supMLOP] {MSSC};
        
        \node[ellipse,draw,text width=1cm,minimum size=1cm,align=center] (MSVC) [left=0.5cm of MSSC] {MSVC};
        
        \node[ellipse,draw,text width=1.5cm,minimum size=1cm,align=center] (MIR) [left=0.5cm of MLOP] {MIR/GMSSC};
        
        \node[ellipse,draw,text width=1cm,minimum size=1cm,align=center] (MLSC) [left=0.5cm of MIR] {MLSC};
        
        \node[ellipse,draw,text width=1cm,minimum size=1cm,align=center] (MLVC) [left=0.5cm of MLSC] {MLVC};
        
        \node[ellipse,draw,text width=2cm,align=center] (subMLOP) [above=1cm of MLOP] {submodular MLOP};
        
        
        \node[ellipse,draw,text width=1cm,minimum size=1cm,align=center] (MSMLOP) [left=0.5cm of subMLOP] {Mon.\ Sub.\ MLOP};
        
        \node[ellipse,draw,text width=1cm,minimum size=1cm,align=center] (MMLOP) [left=0.5cm of MSMLOP] {Matroid MLOP};
        
        \node[ellipse,draw,text width=1cm,minimum size=1cm,align=center] (GMLOP) [left=0.5cm of MMLOP] {Graphic MLOP};
        
        
        \path (GMLOP) edge (MMLOP)
            (MMLOP) edge (MSMLOP)
            (MSMLOP) edge (subMLOP)
            (subMLOP) edge (MLOP);
            
        \path (MLVC) edge (MLSC)
            (MLSC) edge (MIR)
            (MIR) edge (MLOP);
        \path (MSVC) edge (MSSC)
            (MSSC) edge (supMLOP)
            (supMLOP) edge (MLOP);
        
        
        \path (MSSC) edge (MIR);
        
        \path (MLSC) edge (MSMLOP);
        
        
        \path (MSVC) edge[<->,dashed] (MLVC)
            (MLVC) edge[->,dashed] (GMLOP);
		
\end{tikzpicture}
\caption{\small Overview of related problems. A solid arrow from problem $A$ to $B$ indicates that $B$ generalizes $A$. A dashed arrow from problem $A$ to $B$ denotes that computation of $A$ can be polynomially reduced to computation of $B$, using our gadgets.}
\label{fig:summary}
\end{figure}

\paragraph{Minimum linear arrangement (MLA)}

Motivated by applications in coding theory, Harper \cite{harper1964optimal} introduced minimum linear arrangement (MLA) in 1964, which seeks to find an arrangement of the vertices of a given graph $G = (V,E)$ such that the total ``stretch'' of each edge is minimized, i.e., MLA on a graph $G = (V,E)$ is the following,
\begin{align*}
    \min_{\pi \in \permutations{V}}\sum_{(u,v) \in E} |\pi(u) - \pi(v)|.
\end{align*}
Note that any permutation $\pi \in \permutations{V}$ naturally induces a chain on $V$ with prefix sets $V_{i,\pi} = \{v \in V : \pi(v) \le i\}$. Let $\phi$ be the cut function of the graph, i.e., for all $S \subseteq V$, $\phi(S)$ is the number of edges with exactly one end in $S$. Note then for any permutation $\pi \in \permutations{V}$ if an edge $(u,v) \in E$ is stretched to a value $k = |\pi(u) - \pi(v)|$, it must cross the cut of exactly $k$ prefix sets in the chain $V_{0,\pi} \subsetneq V_{1,\pi} \subsetneq \cdots \subsetneq V_{n-1,\pi} \subsetneq V_{n,\pi}$ where $|V(G)| = n$. Thus, MLA on a graph $G = (V,E)$ is equivalent to \vspace{-0.2cm}
\begin{align*}
    \min_{\pi \in \permutations{V}}\sum_{i = 0}^{n} \phi(V_{i,\pi}),
\end{align*}
which is an instance of MLOP with $\phi$ being a symmetric submodular function. Solving MLA for specific instances of graphs has received considerable attention due to its many applications, see surveys \cite{diaz2002survey,petit2013addenda,bezrukov1999edge,lai1999survey}. While MLA is polynomial time solvable for some classes of graphs, for example trees \cite{MLATREE, chung1984optimal},  its \track{decision form} has been known to be NP-complete since 1974 \cite{garey1974some}.\sout{ In practice, MLA is a notorious problem with the best known approximation bound of $O(\sqrt{\log n} \log \log n)$}
\track{The best known approximation bound for MLA is $O(\sqrt{\log n} \log \log n)$} (\cite{FL07}, \cite{charikar2010}). \sout{ For hardness of approximation, see \cite{ambuhl2011inapproximability}. } \track{ Under the \track{exponential time hypothesis\cite{impagliazzo2001complexity}} that there does not exist a \sge{randomized algorithm} to solve SAT in time $2^{n^{\varepsilon}}$ where $n$ is the instance size and $\varepsilon > 0$ is arbitrarily small, it is also known that MLA is inapproximable to some constant \cite{ambuhl2011inapproximability}}.  \sout{This illustrates that while the minimization of a given cost function may be easy (i.e., finding a minimum cut in a graph), the corresponding MLOP with the same function can be a difficult problem. }

\begin{table}[t]
\footnotesize
\centering
\caption{Previously known results and our results on NP-hardness of different MLOP variants} \label{tab:hardness}
\begin{tabular}{ |c|c|c|c| }
\hline 
 MLOP Class & Problem & Hardness & Source\\\hline\hline 
 general  & multiple intents ranking (MIR) & NP-hard & Azar et al. \cite{AGY09} \\\hline
 \multirow{3}{*}{monotone supermodular} & minimum sum set cover (MSSC) & NP-hard 
 & Feige et al. \cite{FLT04}\\ \cline{2-4} 
   & minimum sum vertex cover (MSVC) & NP-hard 
 & Feige et al. \cite{FLT04}\\\hline
 \multirow{3}{*}{monotone submodular}  & matroid MLOP & {\bf NP-hard}  & \textbf{Theorem~\ref{thm:MMLOP}} \\  \cline{2-4} 
   & graphic matroid MLOP & {\bf NP-hard}  & {\bf Theorem~\ref{thm:GMMLOP}} \\\cline{2-4} 
   & co-graphic matroid MLOP & {\bf NP-hard} &{\bf Corollary~\ref{cor:CGMMLOP}}\\\cline{2-4} 
  & graphic matroid MLOP for cactus graphs & {\bf P}  & {\bf Theorem \ref{thm:CacMLOP}}\\ \cline{2-4}
  & minimum latency set cover (MLSC) & NP-hard & Hassin and Levin \cite{HL05} \\ \cline{2-4} 
   & minimum latency vertex cover (MLVC) & {\bf NP-hard}  & {\bf Theorem \ref{theo:MLVC_MSVC_equivalence}}\\ \hline 
submodular & sum cut (SUMCUT) & NP-hard &  \cite{diaz1991minsumcut,lin1994profile}\\ \hline
symmetric submodular & minimum linear arrangement (MLA) & NP-hard  & \cite{garey1974some, ES75}\\ \hline
\end{tabular}
\end{table}



\paragraph{Minimum latency set cover (MLSC)}

\track{MLSC was introduced by Hassin and Levin \cite{HL05} with motivations from problems in job scheduling, and they provided an $e$-approximation. The best known approximation constant for MLSC is 2 \cite{HL05,AGY09}. Later in our work, we show that MLSC can be viewed as an instance of monotone submodular MLOP, for which Iwata, Tetali and Tripathi \cite{ITT12} gave a factor $(2-\frac{2}{|E|+1})$ approximation algorithm using the Lov{\'a}sz extension. We give a more refined approximation algorithm for monotone submodular MLOP using principal partitions, which applies to MLSC as well.} \sout{Monotone submodular MLOP is generalized by the submodular search problem, which was studied in \cite{fokkink2019submodular}.}
\begin{table}[t]
\footnotesize
\centering
\caption{\track{Summary of approximation factors known for MLOP variants. For MLSC, $\theta=\max(\frac{2}{1+\ell},\frac{\Delta+|E|}{\Delta(1+|V|)})$, where $\ell$ is the maximum cardinality of hyperedges, and $\Delta$ is the maximum degree in the graph. For graphic MLOP, we assume that the graph is connected.}} \label{tab:approximation}
\begin{tabular}{ |c|c|c| }
\hline 
Problem & Approximation & Source\\\hline
\hline & &\\[-8pt]
  Matroid MLOP & {\bf \footnotesize  $\mathbf{2-\frac{1+r(E)}{1+|E|}}$} & \textbf{Corollary 
\ref{cor:matroid_MLOP_approximation}}\\[3pt]
\hline & &\\[-8pt]
Graphic MLOP\footnote{\track{Assuming graph is connected.}} & {\bf \footnotesize  $\mathbf{2-\frac{|V(G)|}{1 + |E(G)|}}$} & \textbf{Corollary \ref{cor:matroid_MLOP_approximation}}\\ [3pt]
\hline & &\\[-8pt]
Monotone Submodular MLOP & {\bf \footnotesize  $\mathbf{2-\frac{1+\ell_f}{1+|E|}}$} & \textbf{Theorem \ref{thm:MSMLOP}}\\[3pt]
\hline & &\\[-8pt]
  MLA & {\footnotesize $O(\sqrt{\log n} \log \log n)$} & {\track{Feige and Lee} \cite{FL07}}\\[1pt]  
\hline & &\\[-8pt]
MLVC & {\bf \footnotesize \ $\mathbf{\frac{4}{3}}$} & \textbf{Theorem \ref{thm:MLVC}}\\[3pt] 
\hline & &\\[-8pt] 
MLSC & {\bf \footnotesize  $\mathbf{2-\theta}$} & \textbf{ Theorem \ref{thm:MLVC} \& Corollary \ref{cor:MLSC_MSMLOP}}\\
\hline & &\\[-8pt] 
MIR & {\footnotesize $4.642$} & \track{{Bansal et al. \cite{BBFT21}}}\\ 
\hline & &\\[-8pt] 
SUMCUT & {\footnotesize $O(\log n)$} & {\track{Rao and Richa} \cite{rao2005new}}\\
\hline & &\\[-8pt] 
Supermodular MLOP & {\footnotesize $4$} & \track{Iwata, Tetali, Tripathi} \cite{ITT12}\\
\hline & &\\[-8pt] 
MSSC & {\footnotesize $4$} & \track{Feige, Lov{\'a}sz, Tetali \cite{FLT04}}\\
\hline & &\\[-8pt]
 MSVC & {\footnotesize $\frac{16}{9}$} & \track{Bansal et al. \cite{BBFT21}}\\[2pt] \hline
\end{tabular}
\end{table}

\paragraph{Minimum sum set cover}

Minimum sum set cover (MSSC) was introduced by Feige, Lov{\'a}sz, and Tetali \cite{FLT04}, \track{who also presented} a greedy algorithm that provides a $4$-approximate solution to MSSC, and showed it is NP-hard to do better. Later, Iwata, Tetali, and Tripathi \cite{ITT12} showed \track{that MSSC is an instance of supermodular MLOP, and the greedy algorithm for MSSC can be generalized to approximate supermodular MLOP within factor 4.}

\track{MSSC can be formulated as follows, using the notation of hypergraphs: given a hypergraph $H=(V(H),E(H))$, MSSC seeks to find a permutation of vertices that minimizes the total costs of all hyperedges, where the cost of each hyperedge is the minimum of its vertex labels, i.e.,

\begin{align*}
    \min_{\pi \in \permutations{V(H)}} \sum_{e \in E(H)} \min_{v\in e}\pi(v). 
\end{align*} }

\track{The special case when $H$ is a graph is the well-known minimum sum vertex cover (MSVC). Independent from MSSC, MSVC was introduced earlier by} Burer and Monteiro \cite{BM01} as a heuristic in solving semidefinite relaxation of the Max-Cut problem. Feige, Lov{\'a}sz, and Tetali \cite{FLT04} later showed MSVC has a 2-approximation based on linear programming rounding, and \track{also showed that it is NP-hard to approximate for an unknown constant $\epsilon$, where $1<\epsilon<2$}. Later, Barenholz, Feige and Peleg \cite{BFP06} improved to a 1.9946-approximation, and recently, Bansal et al. \cite{BBFT21} gave a $\frac{16}{9}$-approximation for MSVC. \track{The best possible approximation constant for MSVC is still unknown.} For the special case of regular graphs, Feige, Lov{\'a}sz, and Tetali \cite{FLT04} gave a $\frac{4}{3}$-approximation. \track{This approximation guarantee for regular graphs was later improved by Stankovi{\'c} \cite{stankovic2022some} to 1.225.}

\track{These problems concern supermodular functions, but we only consider submodular functions in this work.}

\paragraph{Other variants of MLOP} 

Another variant of MLOP is called the multiple intents ranking (MIR), and has been studied in  \cite{AGY09,BGK10,SW11,ISV14,BBFT21}. Azar, \track{Gamzu, and Yin} \cite{AGY09} gave a 2-approximation for MIR, \sge{for the case when the weight vector for each hyperedge is monotonically non-decreasing. This variant of MIR} includes MLSC as a special case. 
These problems have found a broad spectrum of applications in query results diversification \cite{tsaparas2011selecting}, motion planning for robots \cite{LCS16}, cost-minimizing search \cite{FLV19}, and optimal scheduling \cite{happach2020min}, among others.

\track{Another example of an instance of submodular MLOP is the sum cut problem (SUMCUT). The problem was independently introduced by D{\'\i}az et al. \cite{diaz1991minsumcut} and also Yixun and Jinjiang \cite{lin1994profile} to study circuit layouts. SUMCUT is NP-complete \cite{diaz1991minsumcut,lin1994profile} and Rao and Richa \cite{rao2005new} gave a $O(\log n)$-approximation algorithm for SUMCUT using a divide-and-conquer approach.} 

Recently, Happach, Hellerstein, and Lidbetter \cite{HHL20} viewed MLOP under the umbrella of minimum sum ordering/permutation problem, and generalized results of Feige, Lov{\'a}sz, and Tetali \cite{FLT04}.

\section{Preliminaries}\label{sec:prelims}

\track{We now present notation and background useful for parsing this work. We refer an interested reader to \cite{Schrijver03} for further reading.} 

\paragraph{1. Submodular set functions}

 For a set of elements $S$ and elements $x\notin S,y\in S$, we use $S+x,S-y$ to denote $S\cup \{x\},S\setminus \{y\}$ respectively. Let $f:2^E\to\R$ be a set function. We say $f$ is \textit{submodular} if for all $S,T\subseteq E$, $f(S)+f(T)\ge f(S\cup T)+f(S\cap T)$. An equivalent definition is $f(S+e)-f(S)\ge f(T+e)-f(T)$ for all $S\subseteq T$ and $e\notin T$. This property is sometimes called the diminishing return property. A set function $f$ is \textit{supermodular} if $-f$ is submodular, and is \textit{symmetric} if $f(S)=f(E\setminus S)$ for all $S\subseteq E$ and is \textit{monotone} if $f(S)\le f(T)$ for all $S\subseteq T\subseteq E$. We say $f$ is \textit{normalized} if $f(\emptyset)=0$. A normalized monotone submodular function $f$ is \textit{non-trivial} if $f(E)\ne 0$, i.e., $f$ is not identically zero. For a normalized non-rivial monotone submodular function $f$, we define the \textit{steepness} of $f$ as $\kappa_{f}=\max_{x\in E}f(\{x\})$, which is the maximum function value of any singleton. We further define the \textit{linearity} of $f$ as $\ell_{f}=\frac{f(E)}{\kappa_{f}}$. For a normalized symmetric submodular function $f$, and $s,t \in E$, an \textit{$s - t$ cut} is a subset $X \subseteq E$ such that $s \in X$, $t \not \in X$ or $s \not \in X$, $t \in X$. The cut is said to have value $f(X) = f(E \setminus X)$.

For a finite set $E$ with size $m>0$, we define $\permutations{E}$ to be the set of all bijective functions $\sigma : E \rightarrow \{1, \ldots, |E|\}$. For every $\sigma \in \permutations{E}$, we define the prefix sets $E_{i,\sigma} = \{e \in E : \sigma(e) \le i \}$. Note $|E_{i,\sigma}|=i$ for all $1\le i\le m$ and $\emptyset \subsetneq E_{1,\sigma} \subsetneq E_{2,\sigma} \subsetneq \cdots \subsetneq E_{m,\sigma} = E$. 

\paragraph{2. Matroids}

A {\it rank} function \track{(for a matroid)} is an \track{integer-valued nonnegative} monotone submodular function \sout{$r: 2^{E} \rightarrow \mathbb{Z}_+$}\track{$r: 2^{E} \rightarrow \mathbb{Z}_{\ge 0}$}, such that $r(A) \leq |A|$ for all $A\subseteq E$. A pair $M = (E,r)$ where $r$ is a rank function on $E$ is a \textit{matroid}. There \track{are} multiple \sout{cryptomorphic}\track{equivalent} definitions for matroids and we refer to \cite{O06} for other equivalent definitions and basic theory. Note if $r$ is a rank function \track{of a non-trivial matroid (where $r(E)>0$)} then $\kappa_{r}=1$ and $\ell_{r}= r(E)$. The set $E$ is the \textit{ground set} of the matroid $M$, which we also denote $E(M)$. A set $I \subseteq E$ is \textit{independent} if $r(I) = |I|$, and is \textit{dependent} otherwise. A maximal independent set is a \textit{basis}, and the set of all bases of $M$ is denoted as ${\cal B}(M)$. A \textit{circuit} of $M$ is a minimally dependent set. Let $e,e' \in E(M)$ for some matroid $M$, then $e$ is a \textit{loop} if $\{e\}$ is a circuit and $e$ and $e'$ are \textit{parallel} if $\{e,e'\}$ is a circuit. Let $B$ be a basis for $M$ and note for all $e \in E \setminus B$, $B + e$ is a dependent set. It is well known that $B + e$ contains a unique circuit, called the \textit{fundamental circuit} of $e$ with respect to $B$, which will we denote $C(B,e)$. A \textit{flat} of a matroid is a subset $X \subset E$ that is maximal with respect to its rank. The \textit{closure} of a set $S \subseteq E$, is $\closure{S} = \{x \in E : r(S \cup \{x\}) = r(S)\}$.

A matroid is \textit{uniform} of rank $k$ if its bases consists of all subsets of size $k$. We denote a uniform matroid of size $m$ and of rank $k$ as $U_k^m$. If the independent sets of a matroid $M$ is the family of acyclic sets of a graph $G$, then $M$ is a \textit{graphic matroid}, which we denote $M = M[G]$. If $M$ is a matroid with rank function $r$, then its \textit{corank} function is the following, $r^*(X) = |X| - r(M) + r(E \setminus X)$. It is well known, see \cite{O06}, that $r^*$ is also a rank function for a matroid $M^*$ on $E(M)$, and we let $M^*$ denote the \textit{dual matroid} of $M$. An element is a \textit{coloop} if it is a loop in the dual matroid. A \textit{cographic matroid} is a matroid whose dual matroid is graphic.

For a positive integer $m$, let $[m] = \{1,2, \ldots, m\}$. Given a $k\times m$ matrix $A$ with integer entries, the vector matroid of $A$, denoted by $M[A]$, is defined as follows: the ground set is $[m]$, and the rank function of $J\subset [m]$ is the \track{(matrix)} rank of $k\times |J|$ submatrix $A_J$, which is obtained from $A$ by deleting columns whose index is not in $J$.

\track{Given a matroid $M = (E, r)$, matroid MLOP solves  $\min_{\sigma \in \permutations{E}} \sum_{i = 0}^{|E|} r( E_{i,\sigma} )$,
where $E_{i,\sigma} = \{e \in E: \sigma(e) \le i \}$, and $\permutations{E}$ is the set of permutations of $E$.}

\paragraph{3. Graphs, \track{hypergraphs and partial orders}}

A \textit{graph} $G$ over a set of vertices $V(G)$, can be defined by a multiset of edges $E(G) \subseteq V \times V$. We allow graphs to have \track{multiedges and }loops, and a graph is \textit{simple} if it does not have multiedges or loops. A graph is a \textit{clique} if every pair of distinct vertices has a single edge joining them. The clique or complete graph on $n$ vertices is denoted $K_n$. The \textit{complement} of a simple graph $G$, denoted $\overline{G}$, is the graph \track{where $V(\overline{G}) = V(G)$ and for all distint $u,v \in V(G)$,  $(u,v) \in E(\overline{G})$ if and only if $(u,v) \not \in E(G)$.} A \textit{block} of a graph $G$ is a maximal connected subgraph, without a cut vertex. Note that \sout{a }the blocks of $G$ \sout{is}\track{are} either an edge or a 2-connected subgraph. It is well known that every pair of distinct blocks are either disjoint or intersect at a cut vertex of $G$. A \textit{cactus graph} is graph $G$ in which every block of $G$ is an edge or a circuit. \track{A hypergraph $H = (V, E)$ is a generalization of graphs that allows each edge $e \in E$ to be a subset of vertices $V$, where each such subset is referred to as a hyperedge. A graph is a special case of a hypergraph where all hyperedges have size 2.}

\track{We now define the minimum latency set cover (MLSC). Given a hypergraph, MLSC asks to find a permutation on its vertices that minimizes the aggregated cost of the hyperedges, where the cost of an hyperedge is the maximum label of its vertices.
\begin{align*}
    \min_{\pi \in \permutations{V(H)}} \sum_{e \in E(H)} \max_{e \in E(H)} \pi(v)
\end{align*}
The minimum latency vertex cover (MLVC) is an instance of MLSC where the input is restricted to being a graph.}

\track{A \textit{partially ordered set} or \textit{poset} is a pair $(P,<_P)$ where $P$ is a set and $<_P$ is an antisymmetric and transitive relation on $P$, i.e., such that for all distinct $x,y,z\in P$, we have that
\begin{enumerate}
    \item if $x <_P y$ and $y <_P z$ then $x <_P z$, and 
    \item if $x <_P y$ then $y \not <_P x$.  
\end{enumerate}
For any $x,y \in P$ we say $x$ and $y$ are \textit{comparable} if $x <_P y$, $y <_P x$ or $x = y$. A \textit{chain} is a subset $S \subseteq P$ of pairwise comparable elements of $(P,<_P)$. A partial order $(P,<_P)$ is a \textit{total order} if $P$ is a chain. A poset $(P,\prec_P)$ is an \textit{extension} of a poset $(P,<_P)$ if for all $x,y \in P$, if $x <_P y$ implies $x \prec_P y$. An extension $(P,\prec_P)$ is \textit{linear} if $(P,\prec_P)$ is a total order.
}

\paragraph{4. Principal partitions}

We refer the readers to \cite{PrincipalPartitionF,PrincipalPartitionN} for the general theory on principal partitions. Here we state some properties of principal partitions on monotone submodular functions. 

\begin{theorem}[\cite{PrincipalPartitionN}]\label{thm:pp}
Let $f$ be a monotone submodular function such that $f(A)=0$ if and only if $A=\emptyset$. Then there exist positive integer $s\ge 1$ and nested sets $\emptyset=\Pi_0\subsetneq \cdots \subsetneq\Pi_s=E$, called principal partitions of $f$, as well as real numbers $\lambda_0<\lambda_1< \cdots <\lambda_{s+1}$, called critical values, such that for all $0\le i\le s$, $\Pi_i$ is the unique maximal optimal solution to $\min_{X\subseteq E} f(X)-\lambda |X|$, for all $\lambda\in(\lambda_i,\lambda_{i+1})$. Furthermore,  $\{\Pi_i\}_{0\le i\le s}$ as well as $\{\lambda_i\}_{1\le i\le s}$ can be computed in polynomial time. 
\end{theorem}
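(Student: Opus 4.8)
The plan is to analyze the parametric family of submodular minimization problems $g(\lambda) = \min_{X \subseteq E} f_\lambda(X)$, where $f_\lambda(X) := f(X) - \lambda|X|$. For each fixed $\lambda$ the added modular term $-\lambda|X|$ preserves submodularity, so $f_\lambda$ is submodular and its set of minimizers is closed under union and intersection; in particular it has a unique maximal minimizer, which I denote $\Pi(\lambda)$ (it equals the union of all minimizers). The chain $\Pi_0 \subsetneq \cdots \subsetneq \Pi_s$ will be recovered as the distinct values taken by $\Pi(\lambda)$ as $\lambda$ ranges over $\mathbb{R}$, and the critical values as the breakpoints of $g$.

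The existence core is a monotonicity (nesting) lemma: if $\lambda < \lambda'$ then $\Pi(\lambda) \subseteq \Pi(\lambda')$. To see this, let $A$ minimize $f_\lambda$ and $B$ minimize $f_{\lambda'}$, and add the optimality inequalities $f_\lambda(A) \le f_\lambda(A\cap B)$ and $f_{\lambda'}(B) \le f_{\lambda'}(A\cup B)$. Invoking submodularity $f(A\cap B) + f(A\cup B) \le f(A)+f(B)$ together with the modular identity $|A\cap B|+|A\cup B| = |A|+|B|$, all $f$-terms cancel and one is left with $(\lambda'-\lambda)|A| \le (\lambda'-\lambda)|A\cap B|$; since $\lambda'-\lambda > 0$ this forces $|A| \le |A \cap B|$, i.e. $A \subseteq B$. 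Applied to the maximal minimizers this gives the nesting, and it also shows that all the discovered sets lie on a single increasing chain.

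For existence of the endpoints and the bound on $s$: as $\lambda \to -\infty$ the penalty $|\lambda|\,|X|$ forces $\Pi(\lambda) = \emptyset$ (using $f(A)=0 \iff A=\emptyset$), while for $\lambda$ large a direct comparison of $f_\lambda(E)$ with any proper subset gives $\Pi(\lambda) = E$; by the nesting lemma the maximal minimizers therefore interpolate $\emptyset = \Pi_0 \subsetneq \cdots \subsetneq \Pi_s = E$. Since $g(\lambda)$ is an infimum of the finitely many affine functions $\lambda \mapsto f(X)-\lambda|X|$, it is concave and piecewise linear with finitely many breakpoints; on the segment where $\Pi(\lambda)=\Pi_i$ its slope is $-|\Pi_i|$, an integer in $\{0,\ldots,|E|\}$, so distinct segments have distinct slopes and $s \le |E|$ (with $s \ge 1$ as $\emptyset \ne E$). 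Declaring the breakpoints, together with sentinels $\lambda_0 = -\infty$ and $\lambda_{s+1} = +\infty$, to be the critical values $\lambda_0 < \cdots < \lambda_{s+1}$, the maximal minimizer on each open interval $(\lambda_i,\lambda_{i+1})$ is the constant $\Pi_i$, which is unique by union-closedness.

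Finally, for polynomial-time computability I would couple polynomial submodular function minimization with a parametric line search. For fixed $\lambda$, the maximal minimizer $\Pi(\lambda)$ is recovered with $O(|E|)$ SFM calls by testing, for each $e \in E$, whether minimizing $f_\lambda$ over sets containing $e$ matches the global minimum $g(\lambda)$. To locate all breakpoints I would recurse on $\lambda$: given $\lambda_a < \lambda_b$ with known $A = \Pi(\lambda_a) \subsetneq B = \Pi(\lambda_b)$, compute the abscissa $\lambda^\ast$ where the supporting lines of $A$ and $B$ meet and evaluate $g(\lambda^\ast)$ by SFM; if $g(\lambda^\ast)$ equals the common line value the interval is breakpoint-free and $\lambda^\ast$ is a genuine critical value, otherwise $\Pi(\lambda^\ast)$ is a new set strictly between $A$ and $B$ and I recurse on both halves. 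Each recursion either certifies a segment or adds a new member to the chain, and since the chain has at most $|E|+1$ members only polynomially many (each polynomial) SFM calls occur. The main obstacle I anticipate is the bookkeeping of this parametric search, namely robustly extracting maximal minimizers and distinguishing true breakpoints from coincidental line crossings within a polynomial oracle budget; the nesting lemma is precisely what makes the search well-founded, since it guarantees the computed sets form one increasing chain rather than an uncontrolled family.
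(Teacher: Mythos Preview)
The paper does not prove this statement: it is quoted with a citation to \citep{PrincipalPartitionN} in the preliminaries section and used as a black box thereafter. So there is no in-paper proof to compare your attempt against.

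That said, your argument is essentially the standard one and is correct. A couple of minor remarks. First, in the nesting lemma your phrase ``all $f$-terms cancel'' is slightly loose: adding the two optimality inequalities and invoking submodularity yields
\[
0 \;\le\; \bigl[f(A)+f(B)-f(A\cap B)-f(A\cup B)\bigr] \;\le\; (\lambda-\lambda')\,|A\setminus B|,
\]
so the $f$-terms disappear through an inequality, not a cancellation; since $\lambda-\lambda'<0$ this forces $A\subseteq B$ as you claim. Second, the theorem asks for real $\lambda_0<\lambda_{s+1}$ rather than $\pm\infty$; this is harmless, since $\emptyset$ is the unique minimizer for every $\lambda\le 0$ (here you use the hypothesis $f(A)=0\iff A=\emptyset$) and $E$ is the unique maximal minimizer for all sufficiently large $\lambda$, so any finite sentinels below $\lambda_1$ and above $\lambda_s$ work. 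Third, your justification that $\Pi(\lambda)$ is constant on each open segment of $g$ deserves one extra line: if $X$ minimizes $f_\lambda$ at an interior point $\lambda_0$ of a linear piece of slope $-k$ but $|X|\ne k$, then the affine function $\lambda\mapsto f(X)-\lambda|X|$ dips strictly below $g$ on one side of $\lambda_0$, contradicting the definition of $g$; hence every minimizer on that segment has the same size and value, and the minimizer set (in particular its maximal element) is constant there.

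Your polynomial-time scheme is also standard and sound: the lattice property lets you recover the maximal minimizer with $O(|E|)$ SFM calls, and the Eisner--Severance/Megiddo-style breakpoint search terminates because each recursive split produces a strictly intermediate member of the chain, of which there are at most $|E|-1$.
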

Some authors refer to $\{\Pi_i\}_{0\le i\le s}$ as the principal sequence of partitions, and/or use the minimal (which is also unique) instead of maximal optimal solution. Note that the principal partitions minimize the function value among subsets of the same size.

\section{Matroid MLOP is as hard as uniform matroid isomorphism}\label{sec:MMLOP}

Before we show NP-hardness of graphic matroid MLOP, we will first show in this section that the more general case that matroid MLOP is indeed NP-hard using a reduction to \track{the} uniform matroid isomorphism problem. Although the uniform matroid is one of the simplest matroids, it turns out that determining whether a given matroid is uniform is NP-hard \cite{OW02}. Formally, the uniform matroid isomorphism problem is the following: 
 
 \begin{center}
 {\it Given a $k \times m$ matrix $A$ with integer entries, is $M[A]$ isomorphic to $U^k_m$?}
 \end{center}
where $M[A]$ denotes the vector matroid of $A$. 

\track{In the} following lemma \track{we argue} that the \track{optimal} matroid MLOP value is unique for each uniform matroid. This provides a reduction to the uniform matroid isomorphism problem.

\begin{lemma}\label{lem:mlop_uniform}
Let $M =(E, r)$ be a matroid, \track{of size $|E| = m$}, and rank at most \track{$k \ge 1$}. We have $$\min_{\sigma \in \permutations{E}}\sum_{i = 1}^m r(E_{i,\sigma}) = {k + 1 \choose 2} + k(m-k),$$ if and only if $M$ is isomorphic to \track{the uniform matroid} $U_k^m$. 
\end{lemma}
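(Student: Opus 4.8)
My plan is to establish both directions by first computing the MLOP value exactly for a uniform matroid, and then showing that this value can only be attained by a uniform matroid. For the ``if'' direction, suppose $M \cong U_k^m$. Then $r(E_{i,\sigma}) = \min\{i, k\}$ for every permutation $\sigma$, since every subset of size $i$ has rank $\min\{i,k\}$. Hence $\sum_{i=1}^m r(E_{i,\sigma}) = \sum_{i=1}^{k} i + \sum_{i=k+1}^{m} k = \binom{k+1}{2} + k(m-k)$, independent of $\sigma$; in particular the minimum equals this quantity. This also shows the value is uniquely determined by $k$ and $m$.

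For the ``only if'' direction, I would argue the contrapositive structurally. First, I claim $\binom{k+1}{2} + k(m-k)$ is a lower bound for the MLOP value of \emph{any} matroid of rank at most $k$ on $m$ elements: since $r$ is monotone, integral, and satisfies $r(A) \le |A|$, we have $r(E_{i,\sigma}) \ge $ — wait, that inequality goes the wrong way, so instead I use that $r(E_{i,\sigma}) \le \min\{i,k\}$ always, meaning $\binom{k+1}{2}+k(m-k)$ is in fact an \emph{upper} bound on $\sum_i r(E_{i,\sigma})$ for every $\sigma$ when $\mathrm{rank}(M)\le k$. So the hypothesis is that the MLOP value equals its maximum possible value $\binom{k+1}{2}+k(m-k)$; equivalently, \emph{every} ordering $\sigma$ achieves $r(E_{i,\sigma}) = \min\{i,k\}$ for all $i$ (if some ordering fell short at some index, by monotonicity the whole sum would be strictly smaller, contradicting that the minimum equals the max). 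The key step is then to deduce from ``$r(S) = \min\{|S|, k\}$ for every $S$ appearing as a prefix $E_{i,\sigma}$ over all $\sigma$'' — which since prefixes range over all subsets, means $r(S) = \min\{|S|,k\}$ for \emph{all} $S \subseteq E$ — that $M$ is exactly $U_k^m$. But $r(S) = \min\{|S|,k\}$ for all $S$ is literally the definition of the rank function of $U_k^m$ (no loops since $r(\{e\}) = 1$; no rank deficiency below size $k$; rank capped at $k$), so $M \cong U_k^m$.

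The main subtlety — and the step I expect to require the most care — is the logical passage from ``the \emph{minimum} over $\sigma$ equals $\binom{k+1}{2}+k(m-k)$'' to ``\emph{every} $\sigma$ attains $r(E_{i,\sigma}) = \min\{i,k\}$ for all $i$.'' This uses that $\binom{k+1}{2}+k(m-k)$ is simultaneously an upper bound on every term-wise sum (so no $\sigma$ exceeds it) and that it is attained as the minimum; hence no $\sigma$ is strictly below it either, forcing equality $\sum_i r(E_{i,\sigma}) = \binom{k+1}{2}+k(m-k)$ for every $\sigma$, and then term-wise equality since each term is individually bounded by $\min\{i,k\}$. Once that is nailed down, one must also handle the edge case $\mathrm{rank}(M) < k$: if $r(E) < k$ then even the best ordering gives a strictly smaller sum (the last several terms are capped below $k$), so the hypothesized value forces $r(E) = k$ exactly, ruling out lower-rank matroids. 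I would present this as a short explicit inequality comparison. The remaining bookkeeping — that ``$r(S)=\min\{|S|,k\}$ for all $S$'' characterizes $U_k^m$ up to isomorphism — is immediate from the cryptomorphism between rank functions and matroids stated in the preliminaries.
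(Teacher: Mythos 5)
Your proof is correct and rests on the same key fact as the paper's, namely that $r(E_{i,\sigma})\le\min\{i,k\}$ for every ordering, so that $\binom{k+1}{2}+k(m-k)$ is the largest possible MLOP value for a rank-at-most-$k$ matroid and is attained exactly when every subset has full rank. The paper packages the converse contrapositively by placing a rank-deficient $k$-subset first to exhibit a strictly cheaper ordering, while you derive term-wise equality for every $\sigma$ and recover the uniform rank function directly; these are the same argument seen from two sides.
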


\begin{proof} \track{For any uniform matroid $U_{k}^m$ on ground set $E$, and for any ordering $\sigma$ of elements in $E$} we have, $\sum_{i = 1}^{m} r(E_{i,\sigma}) = {k + 1 \choose 2} + k(m - k)$, \track{for prefix sets $E_{i, \sigma}$ of the ordering}. \track{If $M$ is not isomorphic to the rank-k uniform matroid $U_{k}^m$, then it must have} some subset $S \subseteq E$ of $k$ elements with rank less than $k$. As $E$ has rank at most $k$, ordering elements in $S$ first, followed by elements in $E\setminus S$ arbitrarily \track{constructs a solution with} the matroid MLOP value less than the optimal solution for $U_k^m$. The claim follows. 
 \end{proof}

 Note if $A$ is a $k \times m$ matrix, then $M[A]$ is a matroid of size $m$ and rank at most $k$. By Lemma \ref{lem:mlop_uniform}, if we solve matroid MLOP for $M[A]$, we can determine if $M[A]$ is isomorphic to $U_m^k$. By NP-hardness of uniform matroid, we have the following theorem.

\MMLOPTHM*

For matroid MLOP, the next lemma shows that solving matroid MLOP \track{for any matroid $M = (E,r)$} is as hard as solving matroid MLOP for the \track{dual matroid $M^* = (E, r^*)$}. This will be useful to show the hardness of matroid MLOP for cographic matroids.

\begin{lemma}\label{lem:matroid_MLOP_dual}
Let $M = (E,r)$ be a matroid \track{with $|E| = m$} and \track{consider an ordering} $\sigma \in \permutations{E}$, then
\begin{align*}
    \sum_{i = 1}^m r^*(E_{i,\sigma}) = {m + 1 \choose 2} - r(M)|E(M)| + \sum_{i = 1}^m r(E_{i,\sigma^*}),
\end{align*}
\sout{for the permutation $\sigma^* = |E| + 1 - \sigma \in \permutations{E}$}
\track{where $\sigma^*$ is the reverse permutation, i.e., $\sigma^* = |E| + 1 - \sigma \in \permutations{E}$}.
\end{lemma}

\begin{proof}

One can easily verify that $\sigma^* \in \permutations{E}$. As $r^*(X) = |X| - r(M) + r(E \setminus X)$ it follows,
    \begin{align*}
        \sum_{i = 1}^m r^*(E_{i,\sigma})
        &= \sum_{i = 1}^m \big( |E_{i,\sigma}| - r(M) + r(E \setminus E_{i,\sigma}) \big)\\
        &= {m + 1 \choose 2} - r(M)|E(M)| + \sum_{i = 1}^m r(E \setminus E_{i,\sigma})\\
        &= {m + 1 \choose 2} - r(M)|E(M)| + \sum_{i = 1}^m r(E_{i,\sigma^*}). 
    \end{align*}
 \end{proof}

\track{Therefore, any optimal ordering of $E$ for matroid MLOP for a given matroid $M =(E,r)$ also gives an optimal ordering for matroid MLOP on the dual matroid $M^* = (E, r^*)$.}

\begin{corollary}\label{cor:matroid_MLOP_dual}
    Matroid MLOP is NP-hard on a family of matroids ${\cal X}$ if and only if matroid MLOP is NP-hard on \track{the dual family} ${\cal X}^* = \{X^*: X \in {\cal X}\}$.
\end{corollary}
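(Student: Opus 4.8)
The plan is to upgrade Lemma~\ref{lem:matroid_MLOP_dual} into a polynomial-time many-one reduction between the decision versions of matroid MLOP on ${\cal X}$ and on ${\cal X}^*$, run in both directions. Recall the decision version takes a matroid $M=(E,r)$ (given, say, by a representing matrix, or by a rank oracle together with $m=|E|$) and an integer $k$, and asks whether $\min_{\sigma\in\permutations{E}}\sum_{i=1}^m r(E_{i,\sigma})<k$.

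First I would note that $\sigma\mapsto\sigma^*:=|E|+1-\sigma$ is an involution on $\permutations{E}$, hence a bijection, so minimizing over $\sigma$ and over $\sigma^*$ ranges over the same set. Together with Lemma~\ref{lem:matroid_MLOP_dual} this yields
\[
\min_{\sigma\in\permutations{E}}\sum_{i=1}^m r(E_{i,\sigma})
= \binom{m+1}{2} - r(M)\,|E(M)| + \min_{\tau\in\permutations{E}}\sum_{i=1}^m r^*(E_{i,\tau}),
\]
so the optimal MLOP value of $M$ and that of $M^*$ differ by the explicit constant $c(M):=\binom{m+1}{2}-r(M)|E(M)|$, which is an integer depending only on $m$ and $r(M)=r(E)$ and is therefore computable in polynomial time from the input.

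Next I would exhibit the reduction itself. Given an instance $(M,k)$ with $M\in{\cal X}$, construct the dual $M^*\in{\cal X}^*$ and set $k':=k-c(M)$. Dualization is polynomial time in the relevant encoding: for a vector matroid $M[A]$ a representing matrix of $M[A]^*$ is obtained by Gaussian elimination, and for a rank oracle the corank function $r^*(X)=|X|-r(E)+r(E-X)$ is evaluated with a constant number of calls to the oracle for $r$. By the displayed identity, $(M,k)$ is a yes-instance of matroid MLOP on ${\cal X}$ if and only if $(M^*,k')$ is a yes-instance of matroid MLOP on ${\cal X}^*$; hence NP-hardness of the latter implies NP-hardness of the former.

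Finally, since $(M^*)^*=M$ and ${\cal X}^{**}={\cal X}$, the same construction with the roles of ${\cal X}$ and ${\cal X}^*$ interchanged gives the converse implication, establishing the equivalence. I do not expect any real obstacle: the only points requiring care are that the dualization step is polynomial time in whatever fixed encoding of matroids the problem uses, and that $c(M)$ (in particular $r(M)$) is efficiently computable, both of which are standard.
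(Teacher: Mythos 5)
Your proposal is correct and follows exactly the route the paper intends: the corollary is stated without proof as an immediate consequence of Lemma~\ref{lem:matroid_MLOP_dual}, and your write-up simply makes explicit the constant shift $\binom{m+1}{2}-r(M)|E(M)|$, the bijection $\sigma\mapsto\sigma^*$, and the polynomial-time dualization that the paper leaves implicit. No discrepancy with the paper's argument.
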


\section{Graphic matroid MLOP is NP-hard}\label{sec:Graphic_MLOP}

We next consider the complexity of graphic matroid MLOP. This turns out to be non-trivial, involving a series of reductions from minimum sum vertex cover, to minimum latency vertex cover, to weighted graphic matroid MLOP, to matroid MLOP.

To show these reductions, we first argue that an optimal chain of matroid MLOP has a useful structure of flats of the matroid, in Lemma \ref{lem:flat}. Next, in Lemma \ref{lem:MLOP_weighted}, we reduce weighted matroid MLOP to matroid MLOP. In section \ref{subsec:graphic_matroid_reduction_MLVC}, we provide a reduction from minimum latency vertex cover (MLVC) to weighted graphic matroid MLOP. Finally in \sout{section}\track{Section} \ref{subsec:MLVC_MSVC_equivalence}, we argue that MLVC and minimum set vertex cover (MSVC) are equivalent in decision form, thus completing the proof that graphic matroid MLOP is NP-hard. In Section \ref{subsec:cactus_graphs} we give an alternative characterization to matroid MLOP. In matroid MLOP we optimize over permutations of the ground set, while in this new formulation, we optimize over bases and then permutations of those bases. Using this characterization, we argue that graphic matroid MLOP for cactus graphs has a polynomial time algorithm.

\subsection{Weighted graphic matroid MLOP}

In this section, we first argue that \track{any optimal matroid MLOP solution on a ground set $E$ of size $m$ has a nice ``flat-like" structure, i.e., for any} optimal permutation $\sigma \in \permutations{E}$, the set $~\bigcup \{ E_{j,\sigma} : r(E_{j,\sigma}) \le i \}$ is a flat for all $i \in [m]$. This is a useful structural result for optimal solutions and is necessary step towards showing the hardness of graphic matroid MLOP.

\begin{lemma}\label{lem:flat}
Let $M = (E,r)$ be a matroid of size \track{$m$} and rank $k$ and let $\sigma \in \permutations{E}$ be a permutation that minimizes matroid MLOP. Then, there exists a basis $B = \{b_1, \ldots, b_k\} \in {\cal B}(M)$ and a partition $\{X_0, X_1, \ldots, X_k\}$ of $E$ such that (i) $b_i \in X_i$, and (ii) $\bigcup_{i = 0}^j X_i$ is a flat for all $0 \le j \le k$, and (iii) $\sigma(e)<\sigma(e^\prime)$ for $e \in X_i, e^\prime \in X_l$, and $i<l$.
\end{lemma}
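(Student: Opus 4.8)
The plan is to analyze the structure imposed by optimality of $\sigma$ and extract the claimed basis and partition inductively on the rank levels. Let $\sigma$ be an optimal permutation. First I would observe that along the chain $E_{1,\sigma} \subsetneq \cdots \subsetneq E_{m,\sigma}$, the rank sequence $r(E_{1,\sigma}), r(E_{2,\sigma}), \ldots, r(E_{m,\sigma})$ is nondecreasing, changes by at most $1$ at each step (since adding one element increases rank by at most $1$), and goes from $0$ (with $E_{0,\sigma} = \emptyset$) to $k$. So for each $i \in \{1, \ldots, k\}$ there is a first index $j_i$ at which the rank hits $i$; the element $b_i := \sigma^{-1}(j_i)$ is the $i$-th rank-increasing element, and $B := \{b_1, \ldots, b_k\}$ is an independent set of size $k$, hence a basis. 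Define $X_i$ to be the block of elements positioned strictly after $b_{i-1}$ (i.e.\ after $j_{i-1}$, with $j_0 = 0$) and at position at most $j_i$; equivalently $X_i = E_{j_i,\sigma} \setminus E_{j_{i-1},\sigma}$, and $X_k$ additionally absorbs everything after $j_k$ if $j_k < m$. This makes $\{X_1, \ldots, X_k\}$ a partition of $E$ satisfying (i) by construction and (iii) because within each $X_i$ the elements are consecutive in $\sigma$-order and the blocks are ordered by index.

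The crux is property (ii): $\bigcup_{i=1}^j X_i = E_{j_j,\sigma}$ must be a flat, i.e.\ no element outside $E_{j_j,\sigma}$ has rank $j$ when added to $E_{j_j,\sigma}$. The key step here is an exchange/rearrangement argument using optimality of $\sigma$. Suppose for contradiction that $E_{j_j,\sigma}$ is not a flat for some $j$: then there is an element $e$ with $\sigma(e) > j_j$ such that $r(E_{j_j,\sigma} + e) = j$. I would then move $e$ to sit immediately after position $j_j$ (equivalently, insert $e$ right at the start of block $X_{j+1}$, shifting the intervening elements back by one). Because $r(E_{j_j,\sigma}+e) = j = r(E_{j_j,\sigma})$, submodularity/monotonicity of the rank function implies that every prefix between old position $j_j$ and old position $\sigma(e)$ has its rank either unchanged or decreased by $1$ after the move, while all other prefixes are unaffected. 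Hence the total MLOP value does not increase, and in fact it strictly decreases at any prefix that strictly contained $e$ before the move and had rank exceeding $j$ — and such a prefix exists unless $e$ was already effectively adjacent, which would contradict the failure of the flat property (if $r$ never strictly exceeded $j$ until $e$'s old position, then $e$ was already in the rank-$j$ level and we could have chosen it inside $E_{j_j,\sigma}$, contradiction). This yields either a strictly better ordering (contradicting optimality) or shows we may assume $E_{j_j,\sigma}$ is already a flat. Iterating over $j$ from small to large (and noting moves for level $j$ don't disturb the flatness established at lower levels, since they only permute within levels $\ge j$), we conclude (ii).

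The main obstacle I anticipate is making the exchange argument airtight: one must verify carefully that relocating $e$ forward never increases any prefix rank, and locate a prefix where it strictly decreases, which requires a clean case analysis on whether $r$ stays at $j$ or climbs to $j+1$ along the affected interval. The cleanest route is probably to argue that among all optimal $\sigma$ we may choose one for which $\sum_i \sigma(b_i)$ is lexicographically smallest (i.e.\ the rank-increasing elements are pushed as early as possible); then any violation of (ii) produces a move that keeps the objective optimal but decreases this secondary potential, the desired contradiction. I would also double-check the degenerate case $j = k$ and the handling of parallel elements and loops (loops sit in $X_1$ harmlessly since $r(\{$loop$\}) = 0$, and $\emptyset$ is a flat iff $M$ has no loops — one should state (ii) for $1 \le j \le k$ with the understanding that $\bigcup_{i=1}^{0} X_i = \emptyset$ is not required to be a flat, matching the lemma's indexing). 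With the exchange lemma in hand, assembling $B$, the $X_i$, and verifying (i)--(iii) is routine.
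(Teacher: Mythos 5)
Your construction of the basis is correct and clean: the rank--jump elements $b_i=\sigma^{-1}(j_i)$ do form an independent set of size $k$, hence a basis, and this is arguably slicker than the paper's inductive exchange construction. The exchange idea for (ii) is also the right tool and is essentially what the paper does. However, there is a genuine error in how you define the partition, and it makes claim (ii) false for your blocks. You set $X_i=E_{j_i,\sigma}\setminus E_{j_{i-1},\sigma}$, so that $\bigcup_{i\le j}X_i=E_{j_j,\sigma}$ is the \emph{minimal} prefix of rank $j$. That set need not be a flat even in an optimal ordering. Take the rank-$2$ matroid on $\{a,a',b\}$ where $a,a'$ are parallel: the unique optimal type of ordering is $(a,a',b)$ with rank sequence $1,1,2$, giving $j_1=1$ and $E_{j_1,\sigma}=\{a\}$, which is not a flat since $r(\{a,a'\})=1$. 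Your contradiction argument breaks down exactly here: the witness $e=a'$ already sits immediately after position $j_1$, the proposed move is vacuous, and your fallback (``$e$ was already in the rank-$j$ level and we could have chosen it inside $E_{j_j,\sigma}$, contradiction'') is not a contradiction of anything --- it is simply the statement that $e$ belongs in block $j$ rather than block $j+1$, i.e.\ that your block boundaries are off by one level. The secondary lexicographic potential does not rescue this, since in the example every optimal ordering has the same defect.

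The fix is the paper's definition: let $X_i=\{e: r(E_{\sigma(e),\sigma})=i\}$, i.e.\ group elements by the rank of their prefix, so that $\bigcup_{i\le j}X_i$ is the \emph{maximal} prefix of rank $j$. With that choice your exchange argument goes through: if some $e$ outside $\bigcup_{i\le j}X_i$ satisfies $r\bigl(\bigcup_{i\le j}X_i+e\bigr)=j$, then $e$ lies in a strictly higher rank level, so the interval of positions displaced by moving $e$ forward necessarily contains a rank jump, and at that position the prefix rank strictly decreases while no prefix rank increases --- contradicting optimality. (Properties (i) and (iii) still hold for the corrected blocks, since $b_i=\sigma^{-1}(j_i)$ has prefix rank $i$ and each rank level occupies a consecutive interval of positions.) As written, though, the proof establishes flatness for the wrong family of prefixes, so the argument for (ii) does not stand.
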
 

\begin{proof}

\track{We may suppose $k \ge 1$, as otherwise the statement is trivial.} Let \track{$\sigma \in \permutations{E}$ be a permutation that minimizes matroid MLOP and $X_i := \{ e_j : r(E_{j,\sigma}) = i \}$} for $i \ge 0$ and note that $\{X_0, X_1, \ldots, X_k\}$ partitions the ground set $E$. Furthermore $X_i \neq \emptyset$ for all $1 \le i \le k$ as for all $e \in E$ and $X \subseteq E$, we have $r(X + e) \le r(X) + 1$. \track{For each $1 \le i \le k$, let $b_i \in X_i$ be the element $e$ in $X_i$ with the lowest index $\sigma(e)$.} 

\track{For each $1 \le i \le k$, we claim $\{b_1,\ldots,b_i\}$ is an independent set. For $i = 1$, this is clear. Suppose the claim holds for all positive integers less than $j$, and $r(\{b_1,\ldots, b_j\}) = j-1$. Note that $b_j \in \closure{\{b_1, \ldots, b_{j-1}\}} = \closure{\bigcup_{i = 0}^{j-1} X_i}$. As $r(\closure{\{b_1, \ldots, b_{j-1}\}}) = j-1$, this contradicts the fact that $r(\bigcup_{i = 0}^{j-1} X_i \cup \{b_j\}) = j$. In particular, this implies that $\{b_1,\ldots, b_k\}$ is a basis of $M$ and (i) holds.}

\sout{ To construct the basis $B$ we proceed as follows, we let $b_1 \in X_1$ be arbitrary. Say for some $1 \le j < k$ we have constructed an independent set $\{b_1, \ldots, b_j\}$ such that $b_i \in X_i$ for all $1 \le i \le j$.  As $r(\bigcup_{i = 1}^{j + 1} X_{i}) = j + 1$, $\bigcup_{i = 1}^{j + 1} X_{i}$ contains an independent set of size $j+1$, say $I$. As $|\{b_1,\ldots, b_j\}| \le |I|$, we have there exists $e \in I$ such that $\{b_1, \ldots, b_j,e\}$ is also independent. As $r(\bigcup_{i = 1}^j X_i) = j$ and $\{b_1, \ldots, b_j\} \subseteq \bigcup_{i = 1}^j X_i$, we have $e \not \in \bigcup_{i = 1}^j X_i$. Thus $e \in X_{j+1}$ and we let $b_{j+1} = e$. By induction, there exists a basis $\{b_1, \ldots, b_k\}$ such that $b_i \in X_i$ for all $1 \le i \le k$.}

\sout{ We now argue for all $i$, we may have \sout{chose} \track{chosen} $b_i$ such that $\sigma(b_i) \le \sigma(e)$ for all $e \in X_i$. \sout{ By assumption of the optimality of $\sigma$, we have that $\sum_{i = 1}^m r(E_{i,\sigma}) = \min_{\pi \in \permutations{E}} \sum_{i = 1}^m r(E_{i,\pi})$.} Suppose $e,e^\prime \in X_i$, and $\sigma(e) = j < \sigma(e^\prime) = l$. As $e,e^\prime \in X_i$, we have $r(E_{j,\sigma}) = r(E_{l,\sigma}) = i$. Let $\sigma^\prime \in \permutations{E}$ such that $\sigma^\prime( e^\prime) = j$, $\sigma^{\prime}(e) = l$, and $\sigma^\prime(e^{\prime\prime}) = \sigma(e^{\prime \prime})$ for all $e^{\prime\prime} \in E \setminus \{e,e^{\prime}\}$. We now argue that $\sigma$ and $\sigma'$ have the same matroid MLOP values. To show this we only need to show $r(E_{k,\sigma^{\prime}}) = i$ for all $j \le k \le l$. By monotonicity,
$$r(E_{j,\sigma^{\prime}}) \le  r(E_{k,\sigma^{\prime}})  \le r(E_{l,\sigma^{\prime}}) = r(E_{l,\sigma}) = i.$$
Furthermore as we assumed $\sigma$ is optimal, we have $r(E_{j,\sigma^{\prime}}) \ge i$ as well. Thus without loss of generality, we may assume for all $i$, $\sigma(b_i) \le \sigma(e)$ for all $e \in X_i$.}

We now show that $\bigcup_{i=0}^j X_i$ is a flat for each $j < k$. Suppose for \track{$e' \in X_{j'}$ for $j' > j$, that $r(\bigcup_{i = 0}^j X_i \cup \{e'\}) = j$.} \sout{$\{e', b_1, \ldots b_i\}$ is dependent for some $i < j$} Let $\sigma' \in \permutations{E}$ be the permutation where we place $e'$ before $b_{j+1}$ in $\sigma$. That is, 
\vspace{-0.4em}
\begin{align*}
    \sigma'(e) = 
    \begin{cases}
    \sigma(e) &\text{ if } \sigma(e) < \sigma(b_{j+1}),\\
    \sigma(b_{j+1}) &\text{ if } e = e',\\
    \sigma(e) + 1 &\text{ if } \sigma(e) \ge \sigma(b_{j+1}) \text { and } \sigma(e) < \sigma(e^{\prime}),\\
    \sigma(e) &\text{ if } \sigma(e) \ge \sigma(e^{\prime}).
    \end{cases}.
\end{align*}

 Note $\sum_{i = 1}^m r(E_{i,\sigma'}) < \sum_{i = 1}^m r(E_{i,\sigma})$. This contradicts the optimality of $\sigma$, thus such an $e'$ cannot exist. It follows each $\bigcup_{i = 0}^j X_i$ is a flat \track{for each $j$, and hence (ii) holds. As for all $e \in X_i$ and $e' \in X_l$ with $i < l$, we have $\sigma(e) < \sigma(b_{l}) \le \sigma(e')$, thus (iii) holds as well.}
 \end{proof}

\sout{ A consequence of Lemma \ref{lem:flat} is that loops must appear at the beginning of optimal matroid MLOP chain. Combining Lemma \ref{lem:flat} with Lemma \ref{lem:matroid_MLOP_dual} we also have that coloops must appear at the end of an optimal MLOP chain.} We next introduce a weighted matroid MLOP, which given positive integer costs $c:E\rightarrow \mathbb{Z}_+$ to the elements $E$ of a matroid $M = (E, r)$, checks if there exists a permutation $\sigma\in S_E$ with weighted MLOP cost at most $K$, i.e.,   
\[ \min_{\sigma \in \permutations{E}} \sum_i r(E_{i,\sigma})c(\sigma^{-1}(i)) \le K.\] 

We now argue that weighted matroid MLOP for a matroid $M = (E,r)$ reduces to matriod MLOP as long as the total integer costs are bounded by a polynomial in $|E|$. This simply follows by duplicating an element $e\in E$ a $c(e)$ number of times, and solving unweighted matroid MLOP on the modified instance. For each duplication of $e$, $r(E_{i,\sigma})$ is counted $c(e)$ times for each duplicate for any permutation $\sigma \in \permutations{E}$.

\begin{lemma}\label{lem:MLOP_weighted}
	\sout{Let $\alg(\cdot)$ be an algorithm to solve matroid MLOP and let $\timebound(|M|)$ be the time needed by $\alg$ based on the input size of a matroid $M$.  Let $c : E \rightarrow \mathbb{N} - \{0\}$ be a given cost function and let $c(E) := \sum_{e \in E} c(e)$. Then weighted matroid MLOP with cost function $c$ can be solved in $\timebound(c(E))$ time.}

    \track{Weighted matroid MLOP with cost function $c$ can be reduced to the matroid MLOP in time $poly(|E|,c(E))$ where $c(E) := \sum_{e \in E}c(e)$.}
\end{lemma}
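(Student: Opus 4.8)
The plan is to reduce weighted matroid MLOP to unweighted matroid MLOP by the natural element-duplication construction, and then verify that the MLOP objective is preserved exactly. Given a matroid $M = (E,r)$ with cost function $c : E \to \mathbb{N}\setminus\{0\}$, I would build a new matroid $M' = (E', r')$ where each element $e \in E$ is replaced by $c(e)$ parallel copies $e^{(1)}, \ldots, e^{(c(e))}$; formally, $E'$ is the disjoint union of these copies and $r'(S') = r(\{e \in E : S' \text{ contains at least one copy of } e\})$. One checks that $r'$ is indeed a rank function (it is the rank function of the matroid obtained from $M$ by replacing each element with $c(e)$ parallel elements — a standard operation), so $M'$ is a genuine matroid with $|E'| = c(E)$, and its description size is polynomial in $c(E)$.

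The key computation is that $\min_{\tau \in \permutations{E'}} \sum_{i=1}^{c(E)} r'(E'_{i,\tau})$ equals $\min_{\sigma \in \permutations{E}} \sum_i r(E_{i,\sigma}) c(\sigma^{-1}(i))$. For one direction, given $\tau \in \permutations{E'}$, I would first argue we may assume the $c(e)$ copies of each $e$ appear consecutively in $\tau$: since parallel copies have identical effect on the rank of any prefix containing at least one of them, a standard exchange argument (analogous to the one used in the proof of Lemma~\ref{lem:flat} to bunify elements of a block $X_i$) shows that moving all copies of $e$ to be adjacent does not increase the objective. Once the copies are consecutive, $\tau$ induces an ordering $\sigma$ of $E$, and the contribution of the block of copies of $e = \sigma^{-1}(i)$ to $\sum r'(E'_{j,\tau})$ is exactly $c(e)$ times $r(E_{i,\sigma})$, because each of the $c(e)$ prefixes ending inside that block has the same rank $r(E_{i,\sigma})$ (adding more copies of an already-present element does not change the rank, and the first copy brings the rank to $r(E_{i,\sigma})$). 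Summing over $i$ gives $\sum_i r(E_{i,\sigma}) c(\sigma^{-1}(i))$. Conversely, any $\sigma \in \permutations{E}$ yields a $\tau$ by expanding each $e$ into its block of copies, with the same objective value by the identical calculation. Hence the optima coincide.

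Finally, the algorithmic conclusion: to solve weighted matroid MLOP on $M$ with cost $c$, construct $M'$ (polynomial time in $c(E)$, and in particular the construction and the rank oracle for $M'$ are trivially derived from those of $M$) and run $\alg$ on $M'$; this takes $\timebound(|M'|) = \timebound(c(E))$ time, and by the equality above the returned value equals the weighted MLOP value. Comparing against $K$ settles the decision version.

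The main obstacle is the consecutiveness reduction — justifying that an optimal (or near-optimal) $\tau$ can be taken to keep parallel copies adjacent, and that this exchange never increases $\sum r'(E'_{j,\tau})$. This needs the observation that if $e^{(a)}$ and $e^{(b)}$ are two copies of the same element and $e^{(a)}$ appears earlier, then for every prefix strictly between them the rank already accounts for $e$ (since $e^{(a)}$ is present), so sliding $e^{(b)}$ leftward to sit just after $e^{(a)}$ only decreases the ranks of the prefixes it passes over, by monotonicity of $r'$. Making this precise by induction on the number of "gaps" between copies is routine but is the one place where care is required; everything else is bookkeeping with the identity $r'(E'_{j,\tau}) = r(E_{i,\sigma})$ on each block.
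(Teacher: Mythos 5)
Your proof is correct and follows essentially the same route as the paper: both reduce weighted matroid MLOP to the unweighted problem by replacing each element $e$ with $c(e)$ parallel copies and running $\alg$ on the enlarged matroid of size $c(E)$. The only difference is in how the correspondence of objectives is verified --- the paper invokes Lemma~\ref{lem:flat} to conclude that all parallel copies land in the same block $X_i$ of an optimal ordering (hence each contributes the same prefix rank), whereas you use a direct sliding/exchange argument to make the copies consecutive without increasing the objective, which is a slightly more self-contained justification of the same fact.
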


\begin{proof}
	Given a matroid $M = (E,r)$ with cost function \track{$c$}, let $N = (E',r')$ be the corresponding matroid where for each $e \in E(M)$ we add $c(e) - 1$ parallel elements to get $E'$. Let $m' = |E'| = c(E)$ and let $\sigma' \in \permutations{E'}$ be an optimal ordering for the \sout{modified instance}\track{matroid MLOP on $N$.} \sout{, i.e., $ \sum_{i = 1}^{m'} r(E'_{i,\sigma'}) = \min_{\pi \in \permutations{E'}} \sum_{i = 1}^{m'} r(E'_{i,\pi})$.} \sout{ By assumption,  $\alg$ solves matroid MLOP for the modified matroid $N$ in time $\timebound(c(E))$.} Since we only add parallel elements, the rank function of $M$ induces a natural rank function on $N$.

	Let $r'(N) = k$. By Lemma \ref{lem:flat}, there exists a partition $\{X_1, \ldots, X_k\}$ of $E'$ such that $\bigcup_{i = 1}^j X_i$ is a flat for all $1 \le j \le k$ and if $e \in X_i$ and $e' \in X_{\ell}$ for $i < \ell$, then $\sigma'(e) < \sigma'(e')$. Suppose $e'$ is parallel with $e$. As $\{e',e\}$ is a dependent set, we have that $e \in X_i$ if and only if $e' \in X_i$ for all $1 \le i \le k$. \track{ We now define a new ordering $\sigma''$  by rearranging the elements in $\sigma'$ such that parallel elements are grouped together by consecutive indices. As parallel elements appear in the same $X_i$, we have $\sum_{i = 1}^{m'} r(E'_{i,\sigma'}) = \sum_{i = 1}^{m'} r(E'_{i,\sigma''})$.}

    Note that $\sigma''$ induces a permutation $\sigma \in \permutations{E}$ on the original weighted matroid $M$ such that for any distinct $e,e' \in E$ we have $\sigma(e) < \sigma(e')$ if and only if $\sigma'(e) < \sigma'(e')$. Note then as parallel elements appear in the same partition set $X_i$ we have, 
	\begin{align*}
	    \sum_{i = 1}^{m'} r'(E'_{i,\sigma'}) = \sum_{i = 1}^{m'} r'(E'_{i,\sigma''}) = \sum_{i = 1}^m r(E_{i,\sigma})c(\sigma^{-1}(i)). 
	\end{align*}

	\track{Thus, the optimal weighted matroid MLOP value for $M$ with cost function $c$ is at most the optimal matroid MLOP value on $N$.} Furthermore, one can easily verify that if $\sigma \in \permutations{E}$ obtains the optimal weighted matroid MLOP value for the matroid $M$ with cost function $c$, there is a corresponding permutation $\sigma' \in \permutations{E'}$ that obtains the same matroid MLOP value for $N$. \track{Thus, the optimal values for both problems are equal.} \track{As we only added $c(E) - |E|$ additional elements to $N$, this is a $poly(|E|,c(E))$ time reduction.} \sout{ Thus the optimal weighted matroid MLOP value for $M$ with cost function $c$ is equal to the
	optimal matroid MLOP value of $N$ which can be found in time $\timebound(c(E))$. }
 \end{proof}

\subsection{Reducing MLVC to graphic matroid MLOP}\label{subsec:graphic_matroid_reduction_MLVC}

We now show that graphic matroid MLOP is as hard as minimum latency vertex cover \track{(MLVC). In MLVC we are given a graph $G=(V(G),E(G))$, and seek to find a permutation of vertices that minimizes the total edge cost, where the cost of each edge is the maximum label of its vertices, i.e., $$\min_{\pi\in \permutations{V(G)}} \sum_{(x,y)\in E(G)}\max\{\pi(x),\pi(y)\}.$$}

\begin{theorem}\label{theo:MLVC_graphic_MLOP_reduction}
\sout{Any instance of minimum latency vertex cover (MLVC) can be reduced polynomially in polynomial time to an instance of graphic matroid MLOP.}
\track{Minimum latency vertex cover (MLVC) problem can be reduced in polynomial time to the graphic matroid MLOP.}
\end{theorem}

\begin{proof} 
We will consider an instance of MLVC for a graph $G$, and construct an \sout{auxillary}\track{auxiliary} graph $H$ from $G$. We will then show that MLVC is equivalent to solving weighted graphic matroid MLOP on $H$ with a
specific cost function $c$. By showing a bound on the cost of edges $c(E(H)) := \sum_{e \in E(H)} c(e)$ in terms of a polynomial of \track{$|E(G)|$}, by applying Lemma \ref{lem:MLOP_weighted}, \track{we }will complete the reduction.

\noindent 
{\bf (a) Construction of the graphic matroid MLOP instance:} Let $G$ be the given graph with $n$ vertices and $m$ edges. We may assume without loss of generality, $G$ has no isolated vertices, as otherwise an optimal MLVC solution assigns isolated vertices last which play no role in the MLVC cost. \track{Therefore, we have that $n\leq 2m$, by counting the endpoints of the edges which upper bounds the number of vertices.}

Let $H$ be a copy of $G$ with an additional vertex $z$ connected to each vertex of $G$, i.e., $V(H) = V(G) \cup \{z\}$ and $E(H) = E(G) \cup \{(z,v) : v \in V(G)\}$. \track{Let $T$ be the spanning tree of $H$ with} $E(T) = \{(z,v) : v \in V(G)\}$. \track{Therefore, $H$ has $n+1$ vertices, and $m+n$ edges.} \track{Let $\eta := 9m^2 + 2$ and define $c(\cdot)$ to be a cost} function defined on $E(H)$, such that $c(e) = \eta$ if $e \in E(T)$, and $c(e) = 1$ otherwise. Therefore, the total cost of edges in $H$ is polynomially bounded by size of the input graph $G$: 
\begin{align*}
	    c(E(H)) &= \sum_{e \in E(H)} c(e) = \sum_{e \in E(G)} 1 + 
	    \sum_{e \in E(T)} \eta  \\
	    &= m + (9m^2 + 2)n \le m + (9m^2 + 2)2m. 
\end{align*}


 
\track{Now, for the sake of brevity, let $E := E(H)$, and $m^\prime = |E(H)|$}. Let $\sigma \in \permutations{E}$ be an optimal ordering for weighted graphic matroid MLOP over $H$ with costs $c(\cdot)$. \sout{By Lemma \ref{lem:MLOP_weighted}, $\sigma$ can be found in time $\timebound(c(E))$, which by assumption, is a polynomial in $m$. }Note,
\begin{align*}
   \text{MLOP}(H,c,\sigma) &:= \sum_{i = 1}^{m'}r(E_{i,\sigma})c(\sigma^{-1}(i))\\
   &= \sum_{i=1}^{m'} r(E_{i,\sigma}) + \sum_{e \in E(T)} r(E_{\sigma(e),\sigma})(\sout{k} \eta-1). 
\end{align*}

\noindent 
{\bf (b) Optimal solutions of graphic matroid MLOP are ``good'':} \track{We now argue that optimal solutions to graphic matroid MLOP on $H$ have a particular structure. We will argue that if the weights for edges of $T$ are large enough, then analogous to Lemma \ref{lem:flat}, each edge of $T$ must belong to a different flat induced by $\sigma$. This will be useful for relating the solutions of of graphic matroid MLOP on $H$ to MLVC on $G$.}

Let a permutation $\pi \in \permutations{E}$ be {\it good} if its prefix sets in the ordering has no two edges of $T$ induce the same rank, i.e., $r(E_{i,\pi})\neq r(E_{j,\pi})$ \track{for all distinct edges} $\pi^{-1}(i),\pi^{-1}(j) \in E(T)$. We now argue that $\sigma$ is an optimal permutation for weighted graphic matroid MLOP over $H$ only if $\sigma$ is also a good permutation. To show this, we will argue \track{a stronger claim:} any good permutation must achieve a lower MLOP objective \track{on $H$} compared to any non-good permutation. 

Let $\sigma' \in \permutations{E}$ be an arbitrary good permutation, and for the sake of contradiction, assume that there exists an optimal permutation $\sigma$ which is not-good. \track{We will argue $\sigma'$ must have lower MLOP value than $\sigma$, giving a contradiction.}

Note $\text{MLOP}(H,c, \sigma') \ge \text{MLOP}(H,c,\sigma)$. Furthermore, we have that
\begin{align*}
 \sum_{i = 1}^{m'} r(E_{i,\sigma'}) \le \sum_{i = 1}^{m'} i \le (m')^2 = (m + n)^2 \le 9m^2, 
\end{align*}
as $m + n \le 3m$. The difference in MLOP objective values of $\sigma$ and $\sigma'$ is as follows,
	\begin{align*}
		0 &\geq \text{MLOP}(H,c, \sigma)- \text{MLOP}(H,c,\sigma^\prime) \\
		&=\sum_{i=1}^{m'} r(E_{i,\sigma}) - \sum_{i=1}^{m'} r(E_{i,\sigma'}) + (\sout{k} \track{\eta} -1)\big( \sum_{e \in E(T)} r(E_{\sigma(e),\sigma}) - r(E_{\sigma'(e),\sigma'})\big) \\
        &\ge -\sum_{i=1}^{m'} r(E_{i,\sigma'}) + (\sout{k} \track{\eta} -1)\big( \sum_{e \in E(T)} r(E_{\sigma(e),\sigma}) - r(E_{\sigma'(e),\sigma'})\big) \\
		&\ge -9m^2 + (\sout{k} \track{\eta} -1)\large(\sum_{e \in E(T)} r(E_{\sigma(e),\sigma}) - r(E_{\sigma'(e),\sigma'})\large). 
	\end{align*}

    As $\sout{k} \track{\eta} -1 > 9m^2$, we have that $\sum_{e \in E(T)} r(E_{\sigma(e),\sigma}) - r(E_{\sigma'(e),\sigma'}) \le 0$, otherwise the right hand side will be strictly positive (contradicting optimality of $\sigma$, and we are done). 
    
    Let $Y_i$ be the collection of prefix sets induced by $\sigma$ \sout{upto}\track{up to} the edges of $T$ such that the rank is exactly $i$, i.e., $Y_i := \{ E_{\sigma(e),\sigma} : e \in E(T) \text{ and } r(E_{\sigma(e),\sigma}) = i\}$. Note that more than one $E_{\sigma(e),e}$ might belong to $Y_i$. Furthermore we have $\{Y_i: 1 \le i \le n\}$ partitions $\{ E_{\sigma(e),\sigma} : e \in E(T) \}$. Similarly, let $Y_i' := \{ E_{\sigma'(e),\sigma'} : \sigma' \in E(T) \text{ and } r(E_{\sigma'(e),\sigma'}) = i\}$. As $\sigma'$ is a good permutation, we have that $|Y_i'| = 1$ for all $i$. Note, 
    \begin{align*}
    0 \ge \sum_{e \in E(T)} r(E_{\sigma(e),\sigma}) - r(E_{\sigma'(e),\sigma'}') = \sum_{i = 1}^n i(|Y_i| - |Y_i'|) = \sum_{i = 1}^n i(|Y_i| - 1).
    \end{align*}
    
    We first argue that $\sum_{j = 1}^i |Y_j| \le i$ for all $1 \le i \le n$. \sout{Note $U_i := \bigcup \{ E_{\sigma(e),\sigma} :  E_{\sigma(e),\sigma} \in Y_j \text{ for } 1 \le j \le i \} = E_{\sigma(e'),\sigma}$  for some $E_{\sigma(e'),\sigma} \in \bigcup_{j = 1}^i Y_j$. }\track{Note the elements of $\bigcup_{j = 1}^i Y_j$ form a chain of subsets. Define $U_i$ to be the maximum element of $\bigcup_{j = 1}^i Y_j$.} \track{In particular, as $U_i \in Y_j$ for some $j \le i$, we have $r(U_i) \le i$}. Furthermore, as $E(T)$ is an independent set, 
    \begin{align*}
        r(U_i) \ge | \{ E_{\sigma(e),\sigma} : e \in E(T) \text{ and } E_{\sigma(e),\sigma} \subseteq U_i\}| = \sum_{j = 1}^i |Y_i|.
    \end{align*}
    Thus we have $\sum_{j = 1}^i |Y_j| \le i$. Note as $\{Y_i: 1 \le i \le n\}$ partitions $\{ E_{\sigma(e),\sigma} : e \in E(T) \}$, we also have $\sum_{j = 1}^{n} |Y_j| = E(T) = n$. 
    
    We now argue that the sum $\sum_{i = 1}^n i(|Y_i| - 1)$ is minimized, i.e., $\sum_{i = 1}^ni(|Y_i| - 1) = 0$, if only if $|Y_i|  = 1$ for all $1 \le i \le n$. Suppose $|Y_j| \neq 1$ for some $j$. Let $k$ be the first index such that $|Y_k| \neq 1$. As $k \ge \sum_{i = 1}^k |Y_i| = |Y_k| + (k - 1)$, we have that $|Y_k| = 0$. As $\sum_{i = 1}^{n} |Y_i| = n$, for some $l > k$, we have that $|Y_l| > 1$. Moving an element from $Y_l$ to $Y_j$ would decrease the sum $\sum_{i = 1}^n i(|Y_i| - 1)$. Thus $\sum_{i = 1}^n i(|Y_i| - 1)$ is minimized, i.e., $\sum_{i = 1}^n i(|Y_i| - 1) = 0$, only if $|Y_i| = 1$ for all $i$. \track{ As we assumed $\sigma$ to be not good, we have $\sum_{i = 1}^n(|Y_i| -1) > 0$, contradicting the optimality of $\sigma$. Thus we may conclude any optimal permutation for graphic matroid MLOP on $H$ must also be a good permutation.}

    \noindent
    {\bf (c). Translation of optimal solutions for graphic matroid MLOP to MLVC:} \track{ We have now argued all optimal solutions on graphic matroid MLOP on $H$ are good, i.e., each edge $e \in E(T)$ has a unique labeling $r(E_{e,\sigma})$. As every vertex of $V(G)$ is incident with exactly one edge of $E(T)$, this ordering of $E(T)$ naturally induces a permutation on $V(G)$. We claim such an ordering will be an optimal MLVC ordering on $V(G)$.}
    
     Note as $T$ is a star, all other edges of $E(G) = E \setminus E(T) = E(H) \setminus E(T)$ each form a unique triangle with the edges of $T$. Let $\pi : V(G) \rightarrow [n]$ such that $\pi(v) = r(E_{(v,z),\sigma})$ where $(v,z) \in E(T)$. As $\sigma$ is a good permutation, we have that $\pi \in \permutations{V(G)}$. \sout{Note then that for all $e = (u,v) \in E(G)$, we have $r(E_{\sigma(e),\sigma}) = \max\{\pi(v), \pi(u)\}$.}

    \track{We claim that for all $e = (u,v) \in E(G)$, we have $r(E_{\sigma(e),\sigma}) = \max\{\pi(v), \pi(u)\}$ for any optimal permutation $\sigma$ of $V(H)$. If $r(E_{\sigma(e),\sigma}) > \max\{\pi(u), \pi(v)\}$, consider the ordering $\sigma^{\prime}$ in which $e$ appears before the edge $f \in E(T)$ where $r(E_{\sigma(f),\sigma}) = \max \{\pi(u),\pi(v)\}$. This would have strictly decreasing MLOP objective value, a contradiction. Now suppose $r(E_{\sigma(e),\sigma}) < \max\{\pi(v),\pi(u)\}$, then the set} 
    \begin{align*}
        A = \{e\} \cup \{f^{\prime} : f^{\prime} \in E(T) \text{ and }  r(E_{\sigma(f^{\prime}),\sigma}) \le r(E_{\sigma(e),\sigma})\}
    \end{align*}
    \track{ is an independent set of size $r(E_{\sigma(e),\sigma}) + 1$.  Now let $f \in A \setminus \{e\}$ such that $r(E_{\sigma(f),f}) = r(E_{\sigma(e),e})$ As either $E_{\sigma(f),f}$ or $E_{\sigma(e),e}$ contains $A$, we have a contradiction.}
    
    Thus we may conclude
	\[ \text{MLOP}(H,c,\sigma) = \sum_{(u,v) \in E(G)} \max\{\pi(u), \pi(v)\} + \sum_{i = 1}^n i \cdot (9m^2 + 2).\]
	\track{As $c(E(H))$ is bounded by a polynomial in $m$, by Lemma \ref{lem:MLOP_weighted}, the MLVC problem can be reduced in polynomial time to the graphic matroid MLOP.}  
	\end{proof}

\subsection{Equivalence of MLVC and MSVC in decision form}\label{subsec:MLVC_MSVC_equivalence}

The following theorem shows that solving an MLVC instance on a \track{simple} graph $G=(V,E)$ where $|V|=n$ is equivalent to solving an MSVC instance on its complement $\overline{G}= (V, E(K_n)\setminus E)$. Since MSVC is known to be NP-hard, \sout{the reduction of MLVC to MSVC (Theorem \ref{theo:MLVC_MSVC_equivalence}) along with reduction of MLVC to weighted graphic matroid MLOP (Theorem \ref{theo:MLVC_graphic_MLOP_reduction}) and equivalence of weighted instance with unweighted instances of graphic matroid MLOP (Lemma \ref{lem:MLOP_weighted}) will together show that graphic matroid MLOP is NP-hard. }\track{by Theorem \ref{theo:MLVC_graphic_MLOP_reduction}, this will imply graphic matroid MLOP is NP-hard.}

\begin{theorem}\label{theo:MLVC_MSVC_equivalence}
	Let $G$ be a \track{simple} graph on $n$ vertices. For any labeling $\pi \in \permutations{V(G)}$, the MLVC objective on $G$ corresponds to the MSVC objective on \track{its} complement graph $\overline{G}$ with a linear shift, i.e., \begin{align*}
		 \sum_{(x,y) \in E(G)} \max\{\pi(x),\pi(y)\} &= (n^3 - n)/3 - (n+1)|E(\overline{G})| \\
		 &+ \sum_{(x,y) \in E(\overline{G})} \min \{\pi'(x), \pi'(y)\},\vspace{-0.5cm} 
	\end{align*}
	where $\pi' := n+1 - \pi \in \permutations{V(\overline{G})}$.
\end{theorem}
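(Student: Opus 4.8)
The plan is to convert everything into a statement about $\max$-sums over $K_n$ and then evaluate a single elementary sum. The first observation is the elementary complementation identity: since $\pi' = n+1-\pi$, for any edge $(x,y)$ we have $\min\{\pi'(x),\pi'(y)\} = \min\{n+1-\pi(x),\, n+1-\pi(y)\} = n+1 - \max\{\pi(x),\pi(y)\}$. Summing this over all edges of $\overline{G}$ gives
\[
\sum_{(x,y)\in E(\overline{G})} \min\{\pi'(x),\pi'(y)\} = (n+1)\,|E(\overline{G})| - \sum_{(x,y)\in E(\overline{G})} \max\{\pi(x),\pi(y)\}.
\]
Substituting this into the right-hand side of the claimed identity, the two terms $\pm(n+1)|E(\overline{G})|$ cancel, so the statement reduces to
\[
\sum_{(x,y)\in E(G)} \max\{\pi(x),\pi(y)\} + \sum_{(x,y)\in E(\overline{G})} \max\{\pi(x),\pi(y)\} = \frac{n^3-n}{3}.
\]

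Next I would use that $E(G)$ and $E(\overline{G})$ partition $E(K_n)$, so the left-hand side above is exactly $\sum_{(x,y)\in E(K_n)} \max\{\pi(x),\pi(y)\}$. Since $\pi$ is a bijection from $V(G)$ onto $[n]$, reindexing the sum by the labels turns it into $\sum_{1\le i<j\le n}\max\{i,j\} = \sum_{1\le i<j\le n} j$. It remains to evaluate this: grouping by the larger index, $\sum_{j=2}^{n}(j-1)j = \sum_{j=1}^{n}(j^2-j) = \frac{n(n+1)(2n+1)}{6} - \frac{n(n+1)}{2} = \frac{n(n+1)(n-1)}{3} = \frac{n^3-n}{3}$, which closes the argument.

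There is essentially no serious obstacle here; the only thing to be careful about is the bookkeeping — in particular making sure the affine shift $\pi\mapsto n+1-\pi$ indeed sends $\permutations{V(G)}$ to itself (which is immediate) and that the cancellation of the $(n+1)|E(\overline{G})|$ terms is exact. The "hardest" conceptual step is simply recognizing that passing to the complement graph together with the reversed labeling trades $\min$ for $\max$ up to the constant $n+1$, after which the whole-graph $\max$-sum is label-independent and evaluates to the stated cubic. I would present the argument in exactly the order above: complementation identity, cancellation, reduction to $K_n$, elementary summation.
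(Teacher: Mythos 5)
Your proposal is correct and is essentially the paper's own argument, just run in the opposite direction: both rest on the observation that $\sum_{(x,y)\in E(K_n)}\max\{\pi(x),\pi(y)\} = (n^3-n)/3$ independently of the labeling, combined with the identity $\min\{n+1-\pi(x),\,n+1-\pi(y)\} = n+1-\max\{\pi(x),\pi(y)\}$. The paper starts from the left-hand side and introduces the shift, while you start from the right-hand side and cancel it; the content is identical.
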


\begin{proof}
	Note that any labeling of the vertices of a complete graph $K_n$ gives an optimal MLVC objective value of $\sum_{i = 1}^n (i - 1)i = (n^3 - n)/3$. It follows for all $\pi \in \permutations{V(G)}$,
	\begin{align*}
		\sum_{(x,y) \in E(G)} \max\{\pi(x),\pi(y)\} + \sum_{(x,y) \in E({\overline{G}})} \max\{\pi(x),\pi(y)\} = (n^3 - n)/3.
	\end{align*}
	
	This key observation in turn gives the equivalence between MLVC and MSVC as following: 
	
	\begin{align*}
		\sum_{(x,y)\in E(G)} \max \{ \pi(x), \pi(y)\} &=  (n^3 - n)/3 - \sum_{(x,y) \in E(\overline{G})} \max \{ \pi(x), \pi(y)\} \\
		&=  (n^3 - n)/3 + \sum_{(x,y) \in E(\overline{G})}  \min \{ -\pi(x), -\pi(y)\}\\
		&= (n^3 - n)/3 -(n+1)|E(\overline{G})|\\
		&+ \sum_{(x,y) \in E(\overline{G})}\big( (n + 1) +   \min \{ -\pi(x), -\pi(y)\}\big)\\
		&=  (n^3 - n)/3 -(n+1)|E(\overline{G})|\\
		&+ \sum_{(x,y) \in E(\overline{G})}   \min \{ n + 1 -\pi(x),n + 1 -\pi(y)\} .
	\end{align*}
	
	As $\pi' = n + 1 - \pi \in \sout{\permutations{V(G)}} \track{\permutations{V(\overline{G})}}$, this completes the proof.

\end{proof}
 As MSVC is NP-hard \cite{FLT04}, we have that
\begin{corollary}\label{cor:MLVC_np_hard}
	MLVC is NP-hard.
\end{corollary}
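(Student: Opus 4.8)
The plan is to read Theorem~\ref{theo:MLVC_MSVC_equivalence} as a polynomial-time Karp reduction from (the decision version of) MSVC to (the decision version of) MLVC. Given an instance of MSVC consisting of a graph $H$ on $n$ vertices together with a threshold $K$, I would form the MLVC instance whose graph is the complement $\overline{H}$ and whose threshold is
\[
K' \;:=\; \frac{n^3-n}{3} \;-\; (n+1)\,|E(H)| \;+\; K .
\]
Both $\overline{H}$ and $K'$ are computable in time polynomial in the size of $H$: the complement of an $n$-vertex graph has at most $\binom{n}{2}$ edges, and $K'$ is a fixed arithmetic expression in $n$, $|E(H)|$, and $K$ (and is an integer, since $(n^3-n)/3$ always is).

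Next I would argue that the two instances have the same answer. Apply Theorem~\ref{theo:MLVC_MSVC_equivalence} with $G = \overline{H}$, so that $\overline{G} = H$: for every ordering $\pi$ of $V(\overline{H})$, the MLVC cost of $\pi$ on $\overline{H}$ equals $\tfrac{n^3-n}{3} - (n+1)|E(H)|$ plus the MSVC cost of $\pi' = n+1-\pi$ on $H$. Since $\pi \mapsto \pi'$ is an involution on the set of orderings of $V$, it is a bijection, so as $\pi$ ranges over all orderings the corresponding $\pi'$ ranges over all orderings; hence $\min_{\pi} \mathrm{MLVC}(\overline{H},\pi) = \tfrac{n^3-n}{3} - (n+1)|E(H)| + \min_{\pi'} \mathrm{MSVC}(H,\pi')$. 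Therefore there is an ordering with MLVC cost at most $K'$ on $\overline{H}$ if and only if there is an ordering with MSVC cost at most $K$ on $H$, which is exactly what a correct reduction requires.

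Finally I would invoke the NP-hardness of MSVC \citep{FLT04} (indeed it is NP-hard even to approximate within some factor $1+\varepsilon$), so the reduction above establishes NP-hardness of MLVC. There is essentially no obstacle here beyond bookkeeping: the only points to handle carefully are matching the decision-problem conventions ("cost at most $K$" versus "cost less than $k$"), noting that isolated vertices of $\overline{H}$ contribute nothing to either objective, and confirming the linear shift is an integer. The substantive work has already been done in Theorem~\ref{theo:MLVC_MSVC_equivalence}; this corollary is just the observation that the stated identity is algorithmically effective and reduces a known NP-hard problem to MLVC.
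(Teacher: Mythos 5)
Your proposal is correct and is exactly the argument the paper intends: the paper derives the corollary directly from Theorem~\ref{theo:MLVC_MSVC_equivalence} together with the NP-hardness of MSVC \citep{FLT04}, leaving the Karp-reduction bookkeeping (complementing the graph, shifting the threshold, and noting that $\pi\mapsto n+1-\pi$ is a bijection on orderings) implicit. You have simply spelled out those routine details; nothing differs in substance.
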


In Theorem \ref{theo:MLVC_graphic_MLOP_reduction}, we have reduced any instance of MLVC to graphic matroid MLOP. Combining this with Corollary \ref{cor:MLVC_np_hard}, we have the promise.

\GMMLOPTHM*


In Corollary \ref{cor:matroid_MLOP_dual}, we showed if a matroid MLOP is NP-hard for a family of matroids, we have that the corresponding dual family is NP-hard as well. It follows,

\begin{corollary} \label{cor:CGMMLOP}
    Cographic matroid MLOP is NP-hard.
\end{corollary}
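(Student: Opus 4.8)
The plan is to derive the corollary directly from the graphic case by matroid duality, with essentially no new work. First I would record the definitional identity that makes duality applicable: a matroid $M$ is cographic exactly when its dual $M^*$ is graphic, so writing $\mathcal{G}$ for the family of graphic matroids and $\mathcal{G}^* = \{M^* : M \in \mathcal{G}\}$ for its dual family, and using $(M^*)^* = M$, the family of cographic matroids is precisely $\mathcal{G}^*$. Thus the statement to prove is nothing other than ``matroid MLOP is NP-hard on $\mathcal{G}^*$''.

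Next I would invoke Theorem \ref{thm:GMMLOP}, which establishes that matroid MLOP is NP-hard on $\mathcal{G}$, and then apply Corollary \ref{cor:matroid_MLOP_dual} with $\mathcal{X} = \mathcal{G}$: since matroid MLOP is NP-hard on a family of matroids if and only if it is NP-hard on the dual family, NP-hardness transfers from $\mathcal{G}$ to $\mathcal{X}^* = \mathcal{G}^*$, which we have just identified with the cographic matroids. This is exactly the claim.

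The only point worth a sanity check is that the equivalence underlying Corollary \ref{cor:matroid_MLOP_dual} (which rests on the value identity in Lemma \ref{lem:matroid_MLOP_dual}) really is a polynomial-time reduction when the instance is presented as a cographic matroid. This is routine: if a cographic matroid is given through a graph $G$ with $M = M[G]^*$, then one has an efficient rank oracle for $M$ via $r^*(X) = |X| - r(M[G]) + r(M[G]\setminus X)$; the additive shift $\binom{m+1}{2} - r(M)\,|E(M)|$ from Lemma \ref{lem:matroid_MLOP_dual} is computed in a constant number of arithmetic operations; and the reversal $\sigma^* = |E|+1-\sigma$ is a trivial bijection on permutations. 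Hence the optimal (equivalently, threshold-$k$) MLOP solutions for $M$ and $M^*$ correspond under a computable affine map of objective values, and the hardness passes through unobstructed. In short, there is no substantive obstacle here: the content of the result lives entirely in Theorem \ref{thm:GMMLOP} and Lemma \ref{lem:matroid_MLOP_dual}, and I would present the corollary as a one-line consequence once the cographic family is recognized as the dual of the graphic family.
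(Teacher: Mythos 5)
Your proof is correct and follows exactly the paper's route: NP-hardness of graphic matroid MLOP (Theorem~\ref{thm:GMMLOP}) combined with the duality equivalence of Corollary~\ref{cor:matroid_MLOP_dual} (via the value identity of Lemma~\ref{lem:matroid_MLOP_dual}), together with the observation that cographic matroids are precisely the duals of graphic matroids. The added sanity check that the affine shift and permutation reversal are polynomial-time computable is a welcome, if unstated-in-the-paper, detail.
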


\subsection{Graphic matroid MLOP for cactus graphs is in P }\label{subsec:cactus_graphs}

We are now interested in further pushing the known boundaries of NP-hardness of graphic matroid MLOP, in particular show that there is a polynomial time algorithm to solve graphic matroid MLOP for cactus graphs. To achieve this we first introduce a new formulation for matroid MLOP which we believe will be of independent interest. In matroid MLOP we optimize over permutations of the ground set. In this new formulation, we first optimize over the bases of the matroid, and then \track{over all }permutations of \track{the} selected basis. \track{To see this, given a basis $B$ of a matroid and permutation $\pi \in \permutations{B}$, we construct an ordering $\sigma$ with the following rule, for each $e \not \in B$, find the minimal prefix set $X$ of $B$ such that $X \cup e$ is dependent. Place $e$ anywhere after $X$ but before the next element of $B$ in $\sigma$. If $B$ and $\pi$ are chosen as described, this will always result in an optimal MLOP permutation. Now we present this argument in detail.}

Let  $M = (E,r)$ be a loopless matroid, let $r(M) = k$ and let $\sigma \in \permutations{E}$ have optimal matroid MLOP value. By Lemma \ref{lem:flat}, there exists a partition of $E$, say $X = \{X_i : 1 \le i \le k\}$ such that $\bigcup_{i = 1}^j X_i$ is a flat for all $1 \le j \le k$ and there exists a basis $B = \{b_1, \ldots, b_k\}$ such that $b_i \in X_i$. Furthermore, we have that if $e \in X_i$ and $e' \in X_{\ell}$ for $i < \ell$, then $\sigma(e) < \sigma(e')$.

We now observe how this partition $\{X_1, \ldots, X_k\}$ interacts with the values of $r(E_{\sigma(e),\sigma})$ for optimal $\sigma$. For all $e \in E \setminus B$, $B + e$ has a unique circuit, $C(B,e)$. As $C(B,e) - e$ is an independent set, we have $| \{r(E_{\sigma(e'),\sigma}) : e' \in C(B,e) - e\}| = |C(B,e) - e|$.
As $C(B,e)$ is a dependent set and $\bigcup_{i = 1}^j X_i$ is a flat for all $1 \le j \le k$, we have that
\begin{align*}
    r(E_{\sigma(e),\sigma}) = \max \{r(E_{\sigma(e'),\sigma}) : e' \in C(B,e) - e\}.
\end{align*}
Furthermore as $B = \{b_1, \ldots, b_k\}$ with $b_i \in X_i$ for all $1 \le i \le k$, we have that there is a one-to-one correspondence between $\{b_1, \ldots, b_k\}$ and $\{r(E_{\sigma(e'),\sigma}) : e' \in B\} = \{1, \ldots, r(M)\}$. With this in mind, we define for all $\pi \in \permutations{B}$ and fundamental circuits $C(B,e)$, the set $C(B,e)_{\pi} := \{\pi(e') : e' \in C(B,e)  - e\}$. \track{Note that $C(B,e)_{\pi}$ is the set of positions in the ordering $\pi$ of the edges present in $C(B,e) - e$.}  \sout{Thus for all $e \in E \setminus B$, we have $r(E_{\sigma(e),\sigma}) = \max C(B,e)_{\pi}$.}

We now build a permutation $\sigma$ of $E$ as follows. First select a basis $B$ of the matroid, and permutation $\pi \in \permutations{B}$. Given this ordering of basis elements, we create a linear extension $\sigma$ of this order by ensuring that:
\begin{itemize}
    \item[$\bullet$] For all distinct $b,b' \in B$, $\sigma(b) < \sigma(b')$ if and only if $\pi(b) < \pi(b')$;
    \item[$\bullet$] For all $e \in E \setminus B$, if $\max C(B,e)_{\pi} = i$, then $\sigma(\pi^{-1}(i)) < \sigma(e) < \sigma(\pi^{-1}(i+1))$.
\end{itemize}
This process always constructs a permutation $\sigma \in \permutations{E}$, and if the correct basis and $\pi$ \track{are} chosen, will find the optimal matroid MLOP permutation. In particular,

\begin{proposition}\label{prop:fixed_basis_MLOP}
    Matroid MLOP is equivalent to the following problem, 
    \begin{align*}
         \min_{B \in {\cal B}(M)} \min_{\pi \in \permutations{B}} \sum_{e \in E(M) \setminus B} \max C(B,e)_{\pi}.
    \end{align*}
\end{proposition}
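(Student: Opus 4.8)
The plan is to prove Proposition~\ref{prop:fixed_basis_MLOP} by establishing inequalities in both directions between the matroid MLOP value $\min_{\sigma \in \permutations{E}} \sum_{i=1}^m r(E_{i,\sigma})$ and the basis-permutation expression $\min_{B}\min_{\pi} \sum_{e \in E-B} \max C(B,e)_\pi$. The discussion preceding the statement already does most of the work; I would organize it into two clean claims.

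**Step 1 (MLOP value $\geq$ basis-permutation value).** Take an optimal $\sigma \in \permutations{E}$. By Lemma~\ref{lem:flat} there is a basis $B = \{b_1,\dots,b_k\}$ and a flat-respecting partition $\{X_1,\dots,X_k\}$ with $b_i \in X_i$. Define $\pi \in \permutations{B}$ by $\pi(b_i) = i$; equivalently, $\pi$ orders $B$ by $\sigma$. For $e \in E - B$, since $\bigcup_{j\le i} X_j$ is a flat and $C(B,e)$ is a circuit, the argument already given shows $r(E_{\sigma(e),\sigma}) = \max\{ r(E_{\sigma(e'),\sigma}) : e' \in C(B,e)-e\} = \max C(B,e)_\pi$, using the bijection between $B$-elements and rank values $\{1,\dots,k\}$ along $\sigma$. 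Also for $e' = b_i \in B$ we have $r(E_{\sigma(b_i),\sigma}) = i$ because $\sigma(b_i) \le \sigma(e)$ for all $e \in X_i$. Summing, $\sum_{i=1}^m r(E_{i,\sigma}) = \sum_{i=1}^k i + \sum_{e \in E-B} \max C(B,e)_\pi \geq \sum_{e \in E - B} \max C(B,e)_\pi + \binom{k+1}{2}$, which in particular is at least the RHS minimum; so I would actually carry the additive $\binom{k+1}{2}$ through both directions to keep an exact identity rather than just an inequality.

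**Step 2 (basis-permutation value $\geq$ MLOP value).** Given any $B \in {\cal B}(M)$ and $\pi \in \permutations{B}$, construct $\sigma \in \permutations{E}$ by the two bullet rules stated just before the proposition: place the basis elements in $\pi$-order, and insert each $e \in E-B$ immediately after $\pi^{-1}(\max C(B,e)_\pi)$ (ties among such $e$'s broken arbitrarily). I must check (i) this is a well-defined permutation — fine, since $\max C(B,e)_\pi \in \{1,\dots,k\}$ always, because $C(B,e)-e$ is independent and hence contained in no flat of rank $< |C(B,e)-e|$, so it meets some $X_j$ with $j$ as large as needed; and (ii) that $r(E_{\sigma(e),\sigma}) = \max C(B,e)_\pi$ for every $e \in E-B$ and $r(E_{\sigma(b_i),\sigma}) = \pi(b_i)$. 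For (ii): at the point where $e$ is inserted, the prefix contains exactly $b_{\pi^{-1}(1)},\dots,b_{\pi^{-1}(t)}$ where $t = \max C(B,e)_\pi$, plus some already-inserted non-basis elements, each of which lies in the span of the basis-prefix it was inserted into — so the prefix has rank exactly $t$; and $C(B,e) - e \subseteq$ this prefix forces $e$ into its span, keeping rank $t$. Hence $\sum_{i=1}^m r(E_{i,\sigma}) = \binom{k+1}{2} + \sum_{e \in E-B} \max C(B,e)_\pi$, giving the matching bound.

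**Main obstacle.** The delicate point is verifying in Step~2 that every non-basis element inserted into $\sigma$ before position $\sigma(e)$ does not raise the rank of the relevant prefix above $t = \max C(B,e)_\pi$ — i.e. that the running rank of the prefix after each insertion equals the number of basis elements seen so far. This is where one needs an induction on the insertion order, showing the invariant ``after processing all elements assigned rank $\le j$, the prefix is a rank-$j$ flat equal to the closure of $\{b_{\pi^{-1}(1)},\dots,b_{\pi^{-1}(j)}\}$,'' and using that each $e$ with $\max C(B,e)_\pi = j$ satisfies $C(B,e) - e \subseteq \{b_{\pi^{-1}(1)},\dots,b_{\pi^{-1}(j)}\}$ so $e$ lies in that closure. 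Once this invariant is in place both directions collapse to the same closed form and the proposition follows; I would also remark that the construction is algorithmic, which is what Theorem~\ref{thm:small_basis_MLOP} and Theorem~\ref{thm:CacMLOP} will exploit.
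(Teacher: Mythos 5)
Your proposal is correct and follows essentially the same route as the paper: the forward direction invokes Lemma~\ref{lem:flat} to extract the basis $B$, the partition into flats, and the identity $r(E_{\sigma(e),\sigma})=\max C(B,e)_{\pi}$, while the reverse direction uses the same two-bullet linear-extension construction given just before the proposition, with both sides differing from the MLOP objective by the fixed additive term $\binom{k+1}{2}$ contributed by the basis elements. Your Step~2 invariant (the prefix after inserting all elements of assigned rank $\le j$ lies in the closure of $\{b_{\pi^{-1}(1)},\dots,b_{\pi^{-1}(j)}\}$ and hence has rank exactly $j$) is precisely the verification the paper asserts but leaves implicit, so your write-up is, if anything, slightly more complete.
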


The characterization Proposition \ref{prop:fixed_basis_MLOP} leads to a new class of matroids in which matroid MLOP is in P. 

\SmallBasisMLOP

\begin{proof}
    By Proposition \ref{prop:fixed_basis_MLOP}, matroid MLOP for ${\cal X}$ has the following formulation, 
    \begin{align*}
         \min_{B \in {\cal B}(M)} \min_{\pi \in \permutations{B}} \sum_{e \in E(M) \setminus B} \max C(B,e)_{\pi}.
    \end{align*}
    By \cite{khachiyan2005complexity}, iterating over every basis requires $poly(m,|{\cal B}(M)|)$ time. As $|{\cal B}(M)| \le g(m)$ and $|\permutations{B}| \le (h(m))!$, simply iterating over every basis $B$  and its corresponding permutations will solve matroid MLOP for ${\cal X}$ in time $O(g(m) \cdot poly(m,g(m)) \cdot (h(m))!)$. 
 \end{proof}

\sout{We will use this new formulation}\track{We will now use Proposition \ref{prop:fixed_basis_MLOP}} to solve graphic matroid MLOP for cactus graphs. We will first argue that the selection of spanning tree is arbitrary in finding an optimal solution for cactus graphs. Then we \sout{an}order greedily with respect to the size of the circuits of the graph to find an optimal solution. 

\CacMLOPTHM*

\begin{proof}
    \track{ Let $G$ be a cactus graph. We may assume $G$ is connected, as every graphic matroid $M$ has a connected graph $H$ such that $M = M[H]$. Note as $G$ is a cactus graph, each edge of $G$ belongs to at most one cycle. Our algorithm is as follows: 
    \begin{enumerate}
    \item Order the cycles by length in nondecreasing order, temporarily regarding a bridge as a cycle of infinite length.
    \item Output any linear extension that respects this prior ordering. That is, first output all edges in the shortest cycle (in any order), followed by all edges in the next shortest cycle (in any order), and so on. 
    \end{enumerate}
    
    We now show its correctness. As bridges are coloops, a straightforward consequence of Lemma \ref{lem:matroid_MLOP_dual} implies bridges must come last in an optimal order. Thus, without loss of generality we may assume $G$ is bridgeless as well. By Proposition \ref{prop:fixed_basis_MLOP}, graphic matroid MLOP can be formulated as follows,
    \vspace{-0.1cm}
   \[ \min_{T \in {\cal B}(M)} \min_{\pi \in \permutations{T}} \sum_{e \in E(M) \setminus T} \max C(T,e)_{\pi}.\]
    where $T$ is a spanning tree of $G$, which again for convenience, we regard as a set of edges. Note as $G$ is a cactus graph, the set of fundamental circuits corresponds to the set of cycles of $G$, i.e., does not depend on the choice of $T$. The algorithm to solve MLOP for $G$ is clear, first select an arbitrary spanning tree $T$, and then order the cycles of $G$ non-decreasing with respect to their lengths. Finally, choose a $\pi \in \permutations{T}$ that respects this ordering of the circuits. It is straight forward to verify that this ordering minimizes MLOP.}

    \sout{ Let $B_1, \ldots, B_k$ be an ordering of the blocks of $G$ such that $|B_i| \le |B_{i+1}|$ for all $1 \le i < k$. \mw{Note that each block } For convenience, we will furthermore regard each $B_i$ as a set of edges. By Proposition \ref{prop:fixed_basis_MLOP}, graphic matroid MLOP can be formulated as follows,}
    \sout{ Note that $T$ is a spanning tree of $G$, which again for convenience, we regard as a set of edges.}
    
    \sout{ We will first argue that for finding the optimal MLOP value, the choice of the spanning tree $T$ is arbitrary. Let $T$ be a spanning tree such that there exists a permutation belonging to  $\permutations{T}$ that minimizes the above sum. Let $T'$ be a spanning tree that does not have a permutation that obtains the minimal matroid MLOP value such that $|T' \setminus T|$ is minimal. Let $e \in T \setminus T'$, and note that $T' + e$ must contain a circuit. Let $e \in B_i$ for some $i$, and note that $B_i \subseteq T' + e$. As $T$ is a spanning tree, $B_i \not \subseteq T$. Let $e' \in B_i \setminus T$ and note then that $T' + e  - e'$ is a spanning tree. As $|(T' + e - e') \setminus T| < |T' \setminus T|$, $T' + e - e'$ has a permutation $\pi \in \permutations{T' + e - e'}$ that minimizes the MLOP objective. Let $\pi' \in \permutations{T'}$ be the permutation such that $\pi'(e') = \pi(e)$ and if restricted to $T' - e'$, we have $\pi' = \pi$. }
    \sout{
    we have a contradiction as we assume $T'$ does not have a permutation that minimizes the matroid MLOP sum.
    }

    \sout{ This is a contradiction as we assumed $T'$ does not have a permutation that minimizes the matroid MLOP sum.} \sout{Thus every spanning of tree $T$ has a permutation that minimizes the matroid MLOP value. }
    
    \sout{ Let $T$ be any spanning tree of $E(G)$. For each edge set $B_i \cap T$, we order the edges of \mw{$B_i$ as} $e_i^j$ with $1 \le j < |B_i|$.  Let $\pi = e_1^1 < \ldots < e_1^{|B_1| - 1} < e_2^1 < \ldots < e_k^{1} < \ldots < e_k^{|B_k| - 1}$. As $|B_i| \le |B_{i + 1}|$ for all $1 \le i < k$, it is straight forward to verify that $\pi$ minimizes the MLOP sum. }

 \end{proof}

\section{Approximations for minimum latency set cover (MLSC)}\label{sec:MLVC}

In Section \ref{sec:Graphic_MLOP}, we introduced the MLVC problem in a series of reductions to show the graphic MLOP is NP-hard. \sout{Now we will a randomized $\frac{4}{3}$-approximation algorithm for MLVC. }\track{Here we study its more general version MLSC, introduced by Hassin and Levin in 2005 \cite{HL05}. In Section \ref{subsec:MLSC_alg} we present a randomized factor $(2-\frac{2}{1+\ell})$-approximation algorithm for MLSC, based on techniques from scheduling theory, where $\ell$ is the size of largest input subset. Our result is better than previously best-known factor of 2 for generic instances \cite{AGY09}. In particular, our result implies a randomized factor $\frac{4}{3}$-approximation algorithm for MLVC, improving upon Azar et. al's result \cite{AGY09}. 

We also show that for $\ell$-uniform hypergraphs, the natural linear programming (LP) relaxation (see eq. (MLSC-LP)) has an integrality gap of at least $2-\frac{2}{1+\ell}$. As a special case, we show that the integrality gap for MLVC is $\frac{4}{3}$. This implies that any approximation algorithm for MLVC based on the rounding of the LP relaxation (without additional inequalities) cannot improve upon our result.}  

In Section \ref{subsec:MLVC_poly}, \track{we explore families of instances where MLVC admits polynomial time algorithms}. We show an equivalence between MLA and MLVC for regular graphs \track{in decision form}. As many classes of regular graphs have previously been studied, this yields \track{exact} polynomial time algorithms for MLVC \track{on these families of instances}, and by Theorem \ref{theo:MLVC_MSVC_equivalence}, for MSVC \track{problem for the graph complement of these families} as well.

\subsection{A randomized approximation algorithm for MLSC based on scheduling}\label{subsec:MLSC_alg}

Recall that minimum latency vertex cover is a special case of minimum latency set cover (MLSC). MLSC can be similarly defined, as in our notation for MLVC. Instead of a graph, we are given a hypergraph $H = (V,E)$  with the objective

\[
\min_{\pi \in \permutations{V(H)}} \sum_{e \in E(H)} \max_{v \in e} \pi(v)\,.
\]

The state-of-the-art approximation for MLSC is a factor $2$, using a reduction to a well studied problem in scheduling theory,
known as $1|\text{prec}| \sum w_j C_j$, or (single machine) minimum sum scheduling with precedence constraints; that is defined as follows.
The input includes a set of jobs $J$, with corresponding processing times and weights $\{p_j\}_{j \in J}, \{w_j\}_{j \in J}$, along with a partially ordered set (poset) $P$ over the jobs. We have a single machine that takes $p_j$ amount of time to process the job $j$.
A feasible schedule is one that processes job $j$ earlier than job $j'$ whenever $j <_P j'$ in the poset.
The objective is to minimize (weighted) sum of all completion times, $\sum_j w_j C_j$, where each $C_j$ is the completion time of job $j$, and is uniquely determined by the schedule and processing times.

MLSC \track{has been known to be reducible to single machine minimum sum scheduling with precedence constraints since 2005}\cite{HL05}, using a simple construction as follows.
For every vertex $v \in V$, consider a job $v \in J$ with processing time $p_v = 1$ and weight $w_v = 0$. For every hyperedge $e \in E$, consider a job $e \in J$ with processing time $p_e = 0$ and weight $w_e = 1$. The poset $P$ over the set of jobs $J = V \cup E$ is defined by all pairs $v <_P e$ such that $v \in V, e \in E$, and $v \in e$. \track{For convenience, we also have for any distinct hyperedges $e,e' \in E$ if $e \subsetneq e'$ then $e <_P e'$. Furthermore, for all multiples of the same edge in $E$, we order them as a chain in $P$ in some arbitrary manner.}  It is easy to verify the objective of this scheduling problem is equal to that of the original MLSC. Moreover, the reduction is approximation preserving, i.e., an $\alpha$-approximate solution to the scheduling instance gives an $\alpha$-approximate solution to MLSC \cite{HL05}. 

Note that the $2$-approximability of MLSC is immediate, using various 2-approxi-mations for scheduling 
\cite{chekuri1999precedence,hall1997scheduling,margot2003decompositions}. \track{Furthermore, by Proposition \ref{prop:MLSC_MSMLOP}, MLSC is an instance of monotone submodular MLOP. Thus the $(2 - 2/(|E| + 1))$ approximation of \cite{ITT12} is applicable in this case as well.}  A better \track{constant} than $2$-approximation \track{for all instances} seems unlikely, considering hardness results for the scheduling problem \cite{BK09}, or the vertex cover problem that it reduces to \cite{correa2005single,ambuhl2009single,khot2008vertex}. 
\track{We instead show an instance-dependent improvement parameterized by the maximum size of the subsets. We achieve this result by studying} the dimension of the poset and its fractional dimension (e.g., studied by \cite{hochbaum1983efficient}, \cite{AMMS11} in the context of scheduling). \track{In the rest of this section, we prove Theorem~\ref{thm:MLVC} using the scheduling algorithm by \cite{AMMS11}. }

\MLVCTHM*



We now define the fractional dimension of a poset, that was introduced by \cite{brightwell1992fractional}.
A poset $P' (<_{P^\prime})$ is an \textit{extension} of a poset $P (<_P)$, if $x <_P y$ then $x <_{P'} y$, and $P'$ is \textit{linear} if $x \ne y$ then we have $x <_{P'} y$ or  $y <_{P'} x$.
It is easy to see that the set of feasible solutions for the single machine scheduling problem are all linear extensions of the corresponding poset. 
Let $\mathcal{F} = \{\mathcal{L}_1, \cdots, \mathcal{L}_t\}$ be a multiset of linear extensions of $P$. $\mathcal{F}$ is a \textit{$k$-fold realizer} of $P$, if for every incomparable pair $(x,y)$ of $P$, there are at least $k$ linear extensions in $\mathcal{F}$ in which $y < x$. The \textit{fractional dimension} of $P$ is \track{defined as $\lim_{k\to\infty}\frac{t}{k}$, where $t$ is the size of a minimum $k$-fold realizer (note that the fractional dimension of a poset $\geq 2$ if it is not a linear order)}. Amb{\"u}hl et al. 
\cite{AMMS11} showed $1|\text{prec}| \sum w_j C_j$ can be $(2-\frac{2}{f})$-approximated, where $f$ upper bounds the fractional dimension of the corresponding poset. Specifically, they proved the following.

\begin{theorem}[\cite{AMMS11}]\label{thm:AMMS}
Given an efficient sampling algorithm for a $k$-fold realizer of $P$, of size $t$ (that is, to output each of the $\mathcal{L}_i$'s with probability at least $1/t$), the problem $1|\text{prec}| \sum w_j C_j$ has a randomized approximation algorithm of factor $2-\frac{2}{t/k}$.
\end{theorem}

 Given an oracle that outputs a random linear extension $P^\prime$ of $P$ such that \\
$
\prob{}{j <_{P^\prime} i} \ge b
$,
for every pair of incomparable jobs $(i,j)$ in $P$, Theorem~\ref{thm:AMMS} gives a $2-2b$ approximate solution to the corresponding $1|\text{prec}| \sum w_j C_j$. Let us call the sampling algorithm provided to the above theorem, a $\frac{k}{t}$-balanced linear ordering oracle for $P$. \track{We show that it is easy to construct an $\frac{1}{1+\ell}$-balanced linear ordering oracle for posets corresponding to the MLSC's reformulation to scheduling. This will result in a $(2-\frac{2}{1+\ell})$-approximation algorithm for MLSC, using the result of Amb{\"u}hl et. al  Theorem~\ref{thm:AMMS}.}

\begin{lemma} Consider an arbitrary MLSC problem defined over a hypergraph $H = (V, E)$. Let $P$ be the poset obtained from the reformulation of the MLSC instance as a scheduling problem. Then, $P$ admits a $\frac{1}{1+\ell}$-balanced linear ordering oracle, where $\ell$ is the maximum size of any hyperedge in MLSC.
\end{lemma}
\begin{proof} 
Consider the following linear extensions to the poset $P$ constructed randomly: pick any random ordering $\{v_{l_1}, v_{l_2}, \hdots, v_{l_n}\}$ of the vertices $V$ and let them appear in the schedule in this order. To schedule any hyperedge $e\in E$, insert $e$ in the ordering as soon as all its incident vertices have been scheduled. \track{If edges are scheduled concurrently, we break ties at random.} It is easy to see that this random scheduling order leads to a valid linear extension, satisfying all precedence constrains of $P$. Let's call this linear extension $P^\prime$.

Now, we claim that any random order obtained above satisfies that the probability of $j <_{P^\prime} i$ for two incomparable jobs $i,j$ of $P$ is at least $\frac{1}{1+\ell}$. To see this, note that for a pair of vertices, this trivially holds as $\prob{}{u <_{P^\prime} v} = 0.5 \ge \frac{1}{1+\ell}$ for all distinct vertices $u$ and $v$. Let us show the inequality holds for a pair of incomparable \track{hyperedges}. For an incomparable pair consisting of a vertex and a hyperedge, we overload the notation to treat any vertex as a hyperedge of size 1. We can now consider any two distinct incomparable hyperedges $e, e'$. 

Let $a = |e \setminus e'|$, let $b = |e' \setminus e|$, and let $c = |e \cap e'|$. Note that $a, b > 0$, otherwise one edge is a subset of another, i.e., they are not incomparable \sout{(as without loss of generality, we can add a precedence constraint from the subset to the superset)}.  We compute $\prob{}{e <_{P'} e'}$ conditioning on the last vertex of $e \cup e'$ with respect to the random permutation. Call this last vertex $v_{e,e'}$.
\begin{align*}
\prob{}{e <_{P^\prime} e'} &= \prob{}{e <_{P^\prime} e' | v_{e,e'} \in e \setminus e'} \cdot \prob{}{v_{e,e'} \in e \setminus e'} \\
& + \prob{}{e <_{P^\prime} e' | v_{e,e'} \in e \cap e'} \cdot \prob{}{v_{e,e'} \in e \cap e'} \\
& + \prob{}{e <_{P^\prime} e' | v_{e,e'} \in e' \setminus e} \cdot \prob{}{v_{e,e'} \in e' \setminus e}\\
&= 0 \cdot \frac{a}{a+b+c} + \frac{1}{2} \cdot \frac{c}{a+b+c} + 1 \cdot \frac{b}{a+b+c} \\
&= \frac{b + c/2}{a+b+c}.
\end{align*}

\track{We will now use the following well-known inequality: for positive numbers $\alpha, \beta, \gamma, \delta$ such that $\alpha/\beta < \gamma/\delta$,
we have
$\frac{\alpha}{\beta} < \frac{\alpha+\gamma}{\beta+\delta} < \frac{\gamma}{\delta}$.} If $c = 0$, we have $\prob{}{e <_{P^\prime} e'} = \frac{b}{a+b} \ge \frac{1}{1+\ell}$. Suppose $c > 0$, then we can write $\prob{}{e <_{P^\prime} e'} = \frac{b + c/2}{a+b+c} \ge \min\{ \frac{b}{a+b}, \frac{c/2}{c} \}$. Considering that $\frac{b}{a+b}$ is minimized at $\frac{1}{1 + \ell}$  subject to the constraints  $1 \leq a,b \leq \ell$, we have the desired lower bound on $\prob{}{e <_{P^\prime} e'}$ in both cases.

\sout{ If $c = 0$, we have $\prob{}{e <_{P'} e'} = \frac{b}{a+b} \ge \frac{1}{1+\ell}$. Suppose $c > 0$. We can write $\prob{}{e <_{P'} e'} = \frac{b + c/2}{a+b+c} \ge \min\{ \frac{b}{a+b}, \frac{c/2}{c} \}$, where we applied a well-known inequality formulated as Fact~\ref{fact:frac}.
\begin{fact}\label{fact:frac}
For positive numbers $\alpha, \beta, \gamma, \delta$ such that $\alpha/\beta < \gamma/\delta$,
we have
$\frac{\alpha}{\beta} < \frac{\alpha+\gamma}{\beta+\delta} < \frac{\gamma}{\delta}$.
\end{fact}}

 \end{proof}

\track{Therefore, we get a $\frac{1}{1+l}$-balanced linear ordering oracle for the MLSC's scheduling reformulation, which ultimately gives us a $(2-\frac{2}{1+l})$-approximation algorithm for MLSC.} 

\paragraph{Integrality Gap for $\ell$-uniform MLSC:} Next, we consider the relaxed linear program for \track{MLSC on $\ell$-uniform hypergraphs on $n$ vertices, i.e., where each hyperedge has size $\ell$.} Here, variables $u_{e,t}$ \track{represent} whether \track{a hyperedge} $e$ is still uncovered (from \track{MLSC} perspective) until time $t$, and $x_{v,t}$ indicates a vertex $v$ to be scheduled at time step $t$, when these are constrained to be integral. 


\begin{align}
\textsc{(MLSC-LP)~~~} \text{minimize \quad}\sum\limits_{e,t}   u_{e,t}  \nonumber \\
\text{subject to} \qquad 
    \sum\limits_{v}  x_{v,t} & \leq 1,   \qquad\forall~ t \in \{1, \hdots, n\}, \label{eq:lppack}\\
   u_{e,t} + \sum_{t'<t} x_{v,t^{\prime}} & \ge 1,   \qquad\forall~ v,e,t \text{ s.t. } v \in e, \label{eq:lpcovbasic}\\
   u_{e,t},\  x_{v,t} &\ge 0, \qquad \forall~ e,v,t. 
\end{align}

The constraints~\eqref{eq:lppack}~and~\eqref{eq:lpcovbasic}, respectively, ensure that at most one vertex is scheduled during each time step, and every \track{hyperedge} remains unscheduled until all incident vertices are scheduled, i.e., $u_{e,t}$ is $0$ only if all $v \in e$ are scheduled strictly before $t$. 

First we show a lower bound of \track{$2-\frac{2}{1+\ell}$} on the integrality gap, matching the approximation factor of Theorem~\ref{thm:MLVC}.

\begin{proposition}
The integrality gap of the LP relaxation for \track{MLSC on $\ell$-uniform hypergraphs} is at least \track{$2-\frac{2}{1+\ell}$}.
\end{proposition}
\begin{proof}
\track{Consider the complete $\ell$-uniform hypergraph on $n$ vertices. By a well-known binomial coefficient identity\footnote{The hockey-stick idenitity states for positive integers $\ell \le n$, $\sum_{k = \ell}^n {k \choose \ell} = {n + 1 \choose \ell + 1}$.}, any ordering on the vertices gives} the optimal objective to the combinatorial problem, which can be shown to be
$$
\track{\sum_{k = \ell}^{n} k{k-1\choose \ell-1} = \sum_{k = \ell}^{n} \ell{k\choose \ell} =\ell{n+1\choose \ell+1}}.
$$ 

For $\ell$-uniform instances, the MLSC-LP objective can be upper bounded with a uniform fractional solution, i.e., $x_{v,t} = \frac{1}{n}$ \track{and $u_{e,t} = 1 - \frac{t-1}{n}$} for all $v$, $e$, and $t$. It follows,
$$
\sum_{e,t} u_{e,t} = |E| \cdot \big(\sum_{t = 1}^{n} (1 - \frac{t-1}{n}) \big) \track{= {n\choose \ell} \cdot \frac{n+1}{2}=\frac{\ell+1}{2}{n+1\choose \ell+1}.}$$
Thus, this family of examples provides a lower bound of \track{$\frac{2\ell}{\ell+1}=2-\frac{2}{1+\ell}$} for the integrality gap.
 \end{proof}

The integrality gap of MLSC-LP is therefore at least $2 - \frac{2}{1+l}$, \track{but it can be more for certain families of graphs. We end this section by showing that the integrality gap of the MLSC-LP is exactly \track{$2-\frac{2}{1+\ell}$}, for \track{$\ell$-uniform hypergraphs where the degree of each vertex is exactly $d$. We call these hypergraphs $d$-regular $\ell$-uniform hypergraphs}. We do not know if the integrality gap for non-regular uniform hypergraphs is strictly larger than $2-\frac{2}{1+\ell}$.} 

\begin{proposition}\label{prop:mlvc_integral}
Let $H$ be any $d$-regular \track{$\ell$-uniform hypergraph} with $n$ vertices. Then the integrality gap for $H$ is at most \track{$2-\frac{2}{1+\ell}$}. 
\end{proposition}

\begin{proof}
We first show that the \track{MLSC-LP} has an  optimal objective value \track{$\frac{dn(n+1)}{2\ell}$} for any $d$-regular \track{$\ell$-uniform hypergraph} with $n$ vertices. \track{For all fixed $1\le t\le n$, }summing over constraints~\ref{eq:lpcovbasic} for all $e \in E$ and all $v\in e$, and we have:

\begin{align}
\track{\ell}\sum_{e}u_{e,t}&=\sum_e \sum_{v\in e} u_{e,t}\\
&\overset{\eqref{eq:lpcovbasic}}{\ge} \sum_e \sum_{v\in e} \left(1-\sum_{t'<t} x_{v,t'} \right) \\
&=dn-d\sum_{t'<t}\sum_{v}x_{v,t'}\\
&\overset{\eqref{eq:lppack}}{\ge} dn-d(t-1), \, \track{\text{ for all }1\le t\le n.} \label{last}
\end{align}

Now summing over \eqref{last} for $t$ \track{from 1 to $n$} we have: 
$$
\sum_{e,t}u_{e,t}\ge \track{\frac{1}{\ell}}\sum_{t=1}^n (dn-d(t-1))=\track{\frac{dn(n+1)}{2\ell}}.
$$

It is easy to see that this objective value is achieved by letting $x_{v,t}=\frac{1}{n}$ \track{ and $u_{e,t} = 1 - \frac{t-1}{n}$} for all $e,v,t$, as this makes all inequalities satisfied \track{with} equality. 

Now consider the \track{MLSC} problem. Using randomized rounding (e.g., \cite{lau2011iterative}), we will show there exists a permutation with objective value at most \track{$2-\frac{2}{1+\ell}$} of the LP optimal value. Let $\pi$ be a uniformly random permutation of vertices, i.e. $\pi(v)=k$ with probability $1/n$ for all $1\le k\le n$. Then, \track{for any hyperedge $e$ we have}
$$\track{\expec{}{\max\{\pi(v),v\in e\}}=\frac{1}{{n\choose \ell}} \sum_{k=\ell}^n k{k-1\choose \ell-1}=\frac{\ell{n+1\choose \ell+1}}{{n\choose \ell}}=\frac{\ell(n+1)}{\ell+1}}.$$
Thus, by linearity of expectation, the expectation of the objective value for \track{MLSC} is 
$$\track{\frac{dn}{\ell}\expec{}{\max\{\pi(v),v\in e\}}=\frac{dn(n+1)}{\ell+1}}.$$
Therefore, there exists a permutation with objective value at most $\track{\frac{dn(n+1)}{\ell+1}}$, which is $\track{2-\frac{2}{1+\ell}}$ of the LP optimal value. 
 \end{proof}

\subsection{Polynomial solvable instances for MLVC and MSVC}\label{subsec:MLVC_poly}

\track{We next discuss classes of instances of MLVC and MSVC that can be solved in polynomial time.} The following theorem relates the objective value of MLA with MLVC for the family of regular graphs.

\begin{theorem}\label{theo:MLA_MLVC_regular_equivalence}
	Let $G$ be a $d$-regular graph on $n$ vertices. For any labeling $\sigma \in \permutations{n}$, we have
	\begin{align*}
		  2\cdot \hspace{-4mm} \sum_{(x,y) \in E(G)} \hspace{-3mm} \max\{\pi(x),\pi(y)\} = d{n + 1 \choose 2} + \hspace{-3mm} \sum_{(x,y) \in E(G)} \hspace{-3mm} |\pi(x) - \pi(y)|.
	\end{align*}
\end{theorem}

\begin{proof}
	We have that,
	\begin{align*}
		\sum_{(x,y) \in E(G)} \hspace{-3mm} |\pi(x) - \pi(y)| &= \sum_{(x,y) \in E(G)} \hspace{-3mm} \left[ 2 \cdot \max \{\pi(x) , \pi(y)\} - \pi(x) - \pi(y) \right]\\
		&= -\sum_{v \in V(G)} \pi(v)d +  2 \cdot \hspace{-4mm} \sum_{(x,y) \in E(G)} \hspace{-3mm}  \max \{\pi(x), \pi(y)\}\\
		&= -d\sum_{i = 1}^n i +  2 \cdot \hspace{-4mm} \sum_{(x,y) \in E(G)} \hspace{-3mm}  \max \{\pi(x), \pi(y)\}\\
		&= -d{n + 1 \choose 2}+  2 \cdot \hspace{-4mm} \sum_{(x,y) \in E(G)} \hspace{-3mm} \max \{\pi(x), \pi(y)\}.
	\end{align*}
 \end{proof}

By Theorem \ref{theo:MLA_MLVC_regular_equivalence}, we have that MLA and MLVC for regular graphs are equivalent in decision form. As the family of regular graphs is closed under graph complements, we also have by Theorem \ref{theo:MLVC_MSVC_equivalence} that MSVC and MLVC for the family of regular graphs are equivalent in decision form as well. Thus we have the following,

\begin{corollary}
    For the family of regular graphs, MLA, MLVC, and MSVC are equivalent in \sout{descision}\track{decision} form.
\end{corollary}

As an illustration of the utility of Theorem \ref{theo:MLA_MLVC_regular_equivalence}, we introduce Hamming graphs $H(d,c)$, which are obtained from $d$ Cartesian graph products of the complete graph $K_c$. Motivated by designing error correcting codes, Harper
\cite{harper1964optimal} solved the MLA problem for hypercubes, i.e. $H(d,2)$ where $d$ is any positive integer. Later, Nakano
\cite{nakano2003linear} generalized this result to all Hamming graphs $H(d,c)$ where $d$ and $c$ are positive integers. As Hamming graphs are regular, we have the following corollary of Theorem \ref{theo:MLA_MLVC_regular_equivalence}.

\begin{corollary}
    MLVC is polynomial time solvable for Hamming graphs.
\end{corollary}

The literature for the MLA problem is vast and many other instances of regular graphs have been previously solved. Thus Theorem \ref{theo:MLA_MLVC_regular_equivalence}, while simple, provides a powerful tool for providing polynomial time algorithms for many families of regular graphs. Some \track{of these families of graphs}\sout{of which} include toroidal grids \cite{muradyan1980minimal}, complete $p$-partite graphs  \cite{muradyan1980problem}, and de Bruijn graphs of order 4 \cite{harper1970chassis}. This list is by no means exhaustive, and we refer the reader to the following surveys for further reading \cite{diaz2002survey,petit2013addenda,bezrukov1999edge,lai1999survey}. Furthermore by Theorem \ref{theo:MLVC_MSVC_equivalence}, the complements of these families also have polynomial time algorithms for the MSVC problem.  

\section{\track{Improved approximation for} monotone submodular MLOP}\label{sec:monotone}

Monotone submodular MLOP was introduced by Iwata et al. \cite{ITT12}, where the authors also provided a factor $(2-\frac{2}{1+|E|})$-approximation algorithm using the Lov\'{a}sz extension of submodular functions. \track{Fokkink et al. \cite{fokkink2019submodular} studied the submodular search problem, which generalizes monotone submodular MLOP, and gave an approximation factor based on the total curvature of the submodular function.} It was not known if a tighter approximation was possible. \track{They considered the greedy contraction of the principal partition induced by the submodular function, an idea that has been used as early as 1992 by Pisaruk \cite{pisaruk1992boundaries}. In this section, we give a different analysis to the same algorithm} and improve the approximation factor to 
\begin{align*}
    2-\frac{1+\ell_f}{1+|E|} \text{ where } \track{\ell_f=\frac{f(E)}{\max_{x\in E} f(\{x\})}}.
\end{align*}

Our result can be applied to special cases including matroid and graphic matroid MLOP.\sout{In particular, when the underlying graph is connected and has a linear number of edges (e.g. $d$-regular for fixed $d$), our \track{analysis} provides a strictly less than 2 constant factor approximation to graphic matroid MLOP.} \track{For general matroid MLOP, our approximation factor is $2-\frac{1+r(E)}{1+|E|}$, which \track{is strictly smaller than 2 when $r(E)=\Omega(|E|)$ (e.g., graphic matroid on sparse graphs)}. Note that both approximation factors given by \cite{fokkink2019submodular} based on total curvature and \cite{ITT12} based on Lov{\'a}sz extension are asymptotically 2 for all non-trivial instances of matroid MLOP. }

Throughout this section, let $E$ be a nonempty set of size $m$ and $f:2^E\to\R$ be a normalized ($f(\emptyset)=0$) monotone submodular set function. \track{Without loss of generality, we can also assume that the maximum minimizer of the submodular function is the empty set\footnote{We can simply contract the maximal minimizer $U$. The elements in $U$ must appear (in any order) before the other elements $E \setminus U$ in any optimal solution for MLOP on monotone submodular functions (see Appendix \ref{app:zeroset}).}, i.e., $f(S)>0$ for all $S\ne \emptyset$.} Recall from Section \ref{sec:prelims} that the steepness of a set function $f$ is defined as $\kappa_{f}=\max_{x\in E}f(\{x\})$, and linearity of $f$ is $\ell_{f}=\frac{f(E)}{\kappa_{f}}$. By submodularity and monotonicity of $f$, for all $S\subseteq T$ we have $f(T)\le f(S)+\kappa_{f} |T\setminus S|$. 

Note for any non-trivial (i.e., $f(E)>0$) normalized monotone submodular function $f:2^E\to\R$, we have $1\le \ell_{f}\le |E|$. Both of the bounds are tight, as the lower bound $\ell_{f}=1$ is attained when $f$ is the rank function on a graphic matroid with 2 vertices and $|E|$ parallel \track{edges} between them, while the upper bound $\ell_{f}=|E|$ is attained when $f(S)=|S|$ for all $S\subseteq E$. Thus, the linearity $\ell_{f}$ is a measure of how uniform and linear a submodular function is. The function will have high linearity if each singleton has approximately same function value, and the function is approximately linear, i.e., all submodular relations $f(S)+f(T)\ge f(S\cap T)+f(S\cup T)$ are close to being tight. In the special case where $f(S)$ is the rank function of some matroid, we have $\kappa_{f}=1$ and $\ell_{f}=f(E)$ (the rank of the matroid). 

In this section, we show a $(2-\frac{1+\ell_{f}}{1+|E|})$-approximation factor to monotone submodular MLOP using any linear extension of the principal partition with respect to the submodular function. Recall that a principal partition is a set of nested sets $\emptyset=\Pi_0\subsetneq \ldots \subsetneq\Pi_s=E$ ($s\geq 1$) and a set of critical values  $\lambda_0<\lambda_1< \ldots <\lambda_{s+1}$, such that for all $0\le i\le s$, $\Pi_i$ is the unique maximal optimal solution to $\min_{X\subseteq E} f(X)-\lambda |X|$, for all $\lambda\in(\lambda_i,\lambda_{i+1})$ (Section \ref{sec:prelims}).

\sout{Principal partitions reveal a flat-like structure for submodular functions, similar to the flats in an optimal matroid chain. They provide a partial ordering on elements. Our approximation relies on a linear extension of this partial order. However, the approximation bound is non-trivial due to potentially arbitrarily small fractional marginal returns on elements. See Figure \ref{fig:principal_partition} for an illustration. Here black dots represent the principal partition. Our lower and upper bounds enclose the shaded region. The red dots represents the optimal solution. Note that they cannot be below the black dots, since principal partitions are the minimizers of $f$ among sets of same size. }

\begin{theorem}\label{thm:monotone}
Let $\{\Pi_i\}_{0\le i\le s}$ be the principal partition of a non-trivial monotone submodular function $f:2^E \to\R$ satisfying $f(\emptyset)=0$. Let $\kappa_{f}=\max_{x\in E}f(\{x\})$ and $\ell_{f}=\frac{f(E)}{\kappa_{f}}$. Let $\sigma \in \permutations{E}$ be any linear extension of the principal partition, i.e., $E_{|\Pi_i|,\sigma} = \Pi_i$ for all $1\le i\le s$. Then, the MLOP objective value of $\sigma$ is at most factor $2-\frac{1+\ell_{f}}{1+|E|}$ of the optimal solution. 
\end{theorem}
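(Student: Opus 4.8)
The plan is to establish matching lower and upper bounds on the MLOP objective, both anchored at the principal partition points $\Pi_0,\Pi_1,\dots,\Pi_s$, and then take the ratio. Write $\pi_i := |\Pi_i|$ so that $0 = \pi_0 < \pi_1 < \cdots < \pi_s = m$. For the \emph{lower bound} on $\mathrm{OPT}$, I would use the defining property of the principal partition: $\Pi_i$ minimizes $f$ among all subsets of size $\pi_i$, hence $f(E_{i,\tau}) \ge f(\Pi_j)$ whenever $|E_{i,\tau}| = i \ge \pi_j$ (using also monotonicity to pass from a set of size $i$ down to the nested point $\pi_j$). More quantitatively, the convex lower-envelope of the points $(\pi_i, f(\Pi_i))$ — the largest convex function through/under those points — lies below $f(S)$ evaluated at any set $S$ of the corresponding size, because each linear piece between $(\pi_{j}, f(\Pi_j))$ and $(\pi_{j+1}, f(\Pi_{j+1}))$ has slope equal to one of the critical values $\lambda$, and $f(X) - \lambda|X| \ge f(\Pi_j) - \lambda\pi_j$ for all $X$. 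Summing this piecewise-linear lower bound over $i = 1, \dots, m$ gives a clean closed-form lower bound $L$ on $\mathrm{OPT}$ in terms of the $(\pi_i, f(\Pi_i))$ data.

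For the \emph{upper bound}, fix any linear extension $\sigma$ of the principal partition. Within the block of indices $i$ with $\pi_{j-1} < i \le \pi_j$, I bound $f(E_{i,\sigma})$ two ways: by monotonicity $f(E_{i,\sigma}) \le f(\Pi_j)$, and by the steepness inequality $f(E_{i,\sigma}) \le f(\Pi_{j-1}) + \kappa_f(i - \pi_{j-1})$ stated at the start of the section. Taking the minimum of these two gives the ``min of a flat cap and a linear ramp'' expression advertised in the introduction. Summing $\min\{f(\Pi_{j-1}) + \kappa_f(i-\pi_{j-1}),\, f(\Pi_j)\}$ over each block and then over all blocks produces an upper bound $U$ on the objective of $\sigma$. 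The key algebraic step is then to show $U / L \le 2 - \frac{1+\ell_f}{1+m}$. I would handle this by reducing to a single segment at a time: on each segment the ramp $\kappa_f(i - \pi_{j-1})$ plus the base $f(\Pi_{j-1})$, capped at $f(\Pi_j)$, compared against the convex-envelope line, and show that the worst case of the ratio over all segments is governed by the global quantities $f(E) = \kappa_f \ell_f$ and $m$; I expect the extremal configuration to be $s = 1$ (a single linear piece from $(0,0)$ to $(m, \kappa_f \ell_f)$), where the ramp stays below the cap for all $i \le \ell_f$ and equals the cap afterward, making the ratio computation explicit and giving exactly $2 - \frac{1+\ell_f}{1+m}$.

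The main obstacle I anticipate is the last step: controlling the ratio $U/L$ uniformly over all possible principal-partition profiles. The difficulty is that the lower bound is \emph{convex} piecewise-linear while the upper bound is a sum of \emph{concave} (min of linear and constant) pieces, so a naive term-by-term comparison fails; one genuinely needs a global argument. I would try to show that pushing any intermediate breakpoint $(\pi_j, f(\Pi_j))$ toward the convex envelope only increases the ratio, thereby reducing to the two-breakpoint (single-segment) case, and in that case do the summation $\sum_{i=1}^{\lfloor\ell_f\rfloor} \kappa_f i + \sum_{i>\ell_f} \kappa_f \ell_f$ against $\sum_{i=1}^m$ of the envelope value and simplify. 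A secondary technical point is that $\ell_f$ need not be an integer, so the ramp-meets-cap crossover happens between two integer indices; I would absorb this via the convexity/rounding estimates rather than tracking it exactly, since the claimed bound has enough slack (the $1+m$ in the denominator) to accommodate the off-by-one. Throughout, I would lean on Theorem~\ref{thm:pp} to know the $\Pi_i$ and $\lambda_i$ exist and are polynomial-time computable, so that the resulting algorithm — compute the principal partition, output any linear extension — runs in polynomial time, completing the proof of Theorem~\ref{thm:MSMLOP} via Theorem~\ref{thm:monotone}.
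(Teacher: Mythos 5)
Your two bounds are exactly the ones the paper uses: the lower bound on the optimum is obtained by summing, over prefix sizes, the piecewise-linear envelope through the points $(|\Pi_j|,f(\Pi_j))$, justified precisely as you say via $f(X)-\lambda_j|X|\ge f(\Pi_j)-\lambda_j|\Pi_j|$ and Lemma~\ref{lem:principal_partition_critical_value}; the upper bound on the linear extension is $\sum_j\sum_{|\Pi_{j-1}|<i\le|\Pi_j|}\min\{f(\Pi_{j-1})+\kappa_f(i-|\Pi_{j-1}|),\,f(\Pi_j)\}$. These are Propositions~\ref{prop:pp_lower} and~\ref{prop:pp_upper}. The genuine gap is the step you yourself flag as the main obstacle: the comparison $U\le\bigl(2-\tfrac{1+\ell_f}{1+|E|}\bigr)L$. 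Your plan---show that pushing interior breakpoints onto a single chord can only increase $U/L$, reducing to $s=1$---is stated as an expectation, not proved, and the monotonicity is not obvious: perturbing a breakpoint changes the capped-ramp sum $U$ and the envelope sum $L$ simultaneously, and you give no mechanism for controlling the ratio under that perturbation. As written, the proof is incomplete at its crux.

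The paper closes this step with no extremality argument at all. Write $L=C-D$ and $U\le A-B$, where $A=f(E)|E|-\frac{f(E)^2}{2\kappa_f}+\frac{f(E)}{2}$ and $C=\frac12(|E|+1)f(E)$ are the terms that survive when $s=1$, and $B,D$ are correction sums that vanish for $s=1$. One checks $A/C=2-\frac{1+\ell_f}{1+|E|}$ exactly, and then it suffices to prove $2D\le B$, i.e.
\begin{equation*}
\sum_{i=1}^s\big(f(\Pi_i)|\Pi_{i-1}|-f(\Pi_{i-1})|\Pi_i|\big)\;\le\;\sum_{i=1}^s\big(f(E)-f(\Pi_i)\big)\Delta_i-\sum_{i=1}^s\frac{f(\Pi_{i-1})\,\delta_i}{\kappa_f},
\end{equation*}
with $\delta_i=f(\Pi_i)-f(\Pi_{i-1})$ and $\Delta_i=|\Pi_i|-|\Pi_{i-1}|$. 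Expanding both sides into sums over pairs $i<j$ reduces the inequality to $\sum_{i<j}\delta_i\delta_j/\kappa_f\le\sum_{i<j}\delta_i\Delta_j$, which is immediate from $\delta_j\le\kappa_f\Delta_j$. So the multi-segment case is dispatched by one algebraic identity plus the steepness bound, rather than by reducing to $s=1$; if you want to salvage your route you would need to actually prove the perturbation monotonicity, which is likely harder than this direct computation. (Your secondary worry about $\ell_f$ not being an integer is real but routine: the paper absorbs it inside the upper bound via the fractional part $\eta_i$ of $(f(\Pi_i)-f(\Pi_{i-1}))/\kappa_f$, using $\eta_i^2-\eta_i\le0$.)
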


Since $1\le \ell_{f}\le |E|$, our result is a refinement on the $2-\frac{2}{1+|E|}$ factor approximation of monotone submodular MLOP in~\cite{ITT12}. \track{For the lower bound, our key lemma (Lemma \ref{lem:monotone_submodular_principal_partition_lower_bound}) is a more general version of the the well-known fact (see \cite{PrincipalPartitionF,nagano2011size,fokkink2019submodular}) that any member of the principal partition $\Pi_i$ is the ``sparsest'' subset\footnote{In our notation, ``sparsest'' subsets are the ones minimizing $\frac{f(S)}{|S|}$. Such subsets are referred to as being ``densest'' by Fokkink et al. in \cite{fokkink2019submodular}, as they maximize $\frac{|S|}{f(S)}$.} in the $\Pi_{i-1}$-contracted submodular function $f_{|\Pi_{i-1}}$, i.e., $\frac{f(S)-f(\Pi_{i-1})}{|S|-|\Pi_{i-1}|}\ge\frac{f(\Pi_i)-f(\Pi_{i-1})}{|\Pi_i|-|\Pi_{i-1}|}$ for all $S\supsetneq \Pi_{i-1}$. We show, in Lemma \ref{lem:monotone_submodular_principal_partition_lower_bound}, that this algebraic statement holds for \emph{all subsets} $S$ where $|S|\ne |\Pi_{i-1}|$, allowing us to lower bound the MLOP value of an arbitrary chain. For the upper bound, we consider any MLOP solution that is a linear extension of the principal partitions. The increase of the function value can be upper bounded using $\kappa_f$, the linearity parameter of submodular function $f$, as well as the function value at the principal partitions. }



\track{See Figure \ref{fig:principal_partition} for an illustration. The horizontal axis denotes the sizes of subsets appearing in an MLOP solution, and the vertical axis denotes the cost that these subsets incur in the MLOP objective. The coordinates of the black circles are the sizes and costs of the principal partitions. Between two adjacent black circles in the figure, the lower bound is the linear segment joining them, and the upper bound is formed using two linear segments, the first with positive slope $\kappa_f$ and the second with slope 0. The red points represent subsets in an optimal MLOP solution, and we show that they always lie inside the triangular shaded regions formed by the lower and upper bounds. In particular, the principal partitions must appear in any optimal MLOP solution, which is also a consequence of Theorem 1 in \cite{fokkink2019submodular}. }

\track{The proofs for the lower and upper bounds are highly algebraic, and a lot of calculations are deferred to the appendix. One of the challenges is that for the upper bound, the difference between function values of two adjacent subsets in the principal partition may not be an integer multiple of $\kappa_f$, thus additional steps are needed to deal with rounding as the upper bound approaches each horizontal segment. }

\begin{figure}[h]
    \centering
    \scalebox{0.8}{
    	\begin{tikzpicture}[bignode/.style={shape=circle, draw=black, ultra thick, minimum size=5mm}]
		\draw[black] (0,0) -- (10.75,0);
		\draw[black] (0,0) -- (0,8);
		\draw[dotted] (0,0) -- (4,1);
		\draw[dotted] (4,0) -- (4,1);
		\draw[dotted] (0,1) -- (4,1);

            \draw[dashed,line width=0.5mm, fill=gray!30] (4,1) -- (7,3) -- (5,3) -- (4,1);
		\draw[dotted] (0,3) -- (5,3);
		\draw[dotted] (7,0) -- (7,3);
		
		\draw[dashed,line width=0.5mm, fill=gray!30] (7,3) -- (8.5,4.5) --  (7.75,4.5) --(7,3);
		\draw[dotted] (0,4.5) -- (7.75,4.5);
		\draw[dotted] (8.5,0) -- (8.5,4.5);
            \draw[dotted] (5,0) -- (5,1.9);
            \draw[dotted] (0,1.9) -- (5,1.9);
		\draw[dotted] (8.5,4.5) -- (9.5,7);
		\draw[dotted] (9.5,0) -- (9.5,7);
		\draw[dotted] (0,7) -- (9.5,7);

            \node[bignode] at (0,0){};
            \node[bignode] at (4,1){};
            \node[bignode] at (7,3){};
            \node[bignode] at (8.5,4.5){};
            \node[bignode] at (9.5,7){};
		
		\node [fill = none] (fpi1) at (-1,0) {$f(\Pi_0) = 0$};
		\node [fill = none] (fpi1) at (-1,1) {$f(\Pi_{i-1})$};
          \node [fill = none] (fpis) at (-1,1.9) {$f(S)$};
		\node [fill = none] (fpi2) at (-1,3) {$f(\Pi_{i})$};
		\node [fill = none] (fpi3) at (-1,4.5) {$f(\Pi_{i+1})$};
		\node [fill = none] (fpi4) at (-1,7) {$f(\Pi_s)=f(E)$};
		\node [fill = none] (fpi5) at (-1.5,7.75) {MLOP costs};

		\node [fill = none] (pi1) at (0,-0.4) {$|\Pi_0| = 0$};
		\node [fill = none] (pi2) at (4,-0.4) {$|\Pi_{i-1}|$};
            \node [fill = none] (pis) at (5,-0.4) {$|S|$};
		\node [fill = none] (pi3) at (7,-0.4) {$|\Pi_{i}|$};
		\node [fill = none] (pi4) at (8.5,-0.4) {$|\Pi_{i+1}|$};
		\node [fill = none] (pi4) at (10,-0.38) {$|\Pi_s| = |E|$};
		\node [fill = none] (pi6) at (5,-1) {Size of the subsets};
		
		\node [fill = none] (1) at (3.9,2.3) {slope $\kappa_f$};
		\node [fill = none] (2) at (6.7,3.9) {slope $\kappa_f$};
            \node [fill = none] (3) at (6,1.5) {lower bound};
            \node [fill = none] (4) at (5,3.3) {upper bound};
            \node [fill = none] (5) at (8.3,3.4) {lower bound};
            \node [fill = none] (6) at (7.5,4.8) {upper bound};

  \node [style=vertex,color=red] (r1) at (0, 0){};
		\node [style=vertex,color=red] (r1) at (0.5, 0.25){};
		\node [style=vertex,color=red] (r1) at (1, 0.3){};
		\node [style=vertex,color=red] (r1) at (1.5, 0.45){};
		\node [style=vertex,color=red] (r1) at (2, 0.6){};
		\node [style=vertex,color=red] (r1) at (2.5, 0.7){};
		\node [style=vertex,color=red] (r1) at (3, 0.75){};
		\node [style=vertex,color=red] (r1) at (3.5, 0.9){};
		\node [style=vertex,color=red] (r1) at (4, 1){};
		\node [style=vertex,color=red] (r2) at (4.5, 1.7){};
		\node [style=vertex,color=red] (r3) at (5, 1.9){};
		\node [style=vertex,color=red] (r4) at (5.5, 2.4){};
		\node [style=vertex,color=red] (r5) at (6, 2.6){};
		\node [style=vertex,color=red] (r6) at (6.5, 2.85){};
	    \node [style=vertex,color=red] (r7) at (7, 3){};
	    \node [style=vertex,color=red] (r8) at (7.5, 3.7){};
	    \node [style=vertex,color=red] (r9) at (8, 4.2){};
	    \node [style=vertex,color=red] (r10) at (8.5, 4.5){};
	    \node [style=vertex,color=red] (r11) at (8.8, 5.65){};
	    \node [style=vertex,color=red] (r11) at (9.1, 6.3){};
     \node [style=vertex,color=red] (r11) at (9.5, 7){};

    \matrix [draw,below right] at (0.5,6.6) {
      \node [bignode,label=right:principal partition] {}; \\
      \node [style=vertex,color=red,label=right:optimal MLOP solution] {}; \\
      };
     
	\end{tikzpicture} 
	}
    \caption{Diagram of our lower and upper bounds in grey as well as the optimal solution in red. The black circles represent the principal partitions}
    \label{fig:principal_partition}
\end{figure}

\subsection{Lower and upper bound on MLOP objective value}

Consider a monotone submodular function $f:2^E\to\R$ satisfying $f(S)=0$ if and only if $S=\emptyset$, its principal partition $\{\Pi_i\}_{0\le i\le s}$ and the corresponding critical values $\{\lambda_i\}_{1\le i\le s}$ (Section \ref{sec:prelims}). \track{The following lemma gives the relationship between the critical values and the principal partition \cite{PrincipalPartitionF}.}

\begin{lemma}\label{lem:principal_partition_critical_value}
The principal partition $\{\Pi_i\}_{0\le i\le s}$ and corresponding critical values $\{\lambda_i\}_{1\le i\le s}$ satisfy the following relation: 

$$\lambda_i=\frac{f(\Pi_i)-f(\Pi_{i-1})}{|\Pi_i|-|\Pi_{i-1}|}, \text{ for all } 1\le i\le s. $$

Furthermore, $\Pi_{i-1}$ and $\Pi_i$ are the unique minimal and maximal 
minimizers of $\min_{X\subseteq E} f(X) - \lambda_i |X|$. 
\end{lemma}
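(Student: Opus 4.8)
The plan is to reason about the parametrized objective $h_\lambda(X) := f(X) - \lambda|X|$ and its optimal-value curve $g(\lambda) := \min_{X \subseteq E} h_\lambda(X)$. Since $E$ is finite, $g$ is the pointwise minimum of the finitely many affine functions $\lambda \mapsto f(X) - \lambda|X|$, hence $g$ is concave, piecewise linear, and (being everywhere finite) continuous in $\lambda$. This is the only structural input I need, so I would establish it first.

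Next I would use the definition of the principal partition to pin down $g$ on two adjacent critical intervals. For $\lambda \in (\lambda_{i-1},\lambda_i)$ the set $\Pi_{i-1}$ is a minimizer of $h_\lambda$, so $g(\lambda) = h_\lambda(\Pi_{i-1}) = f(\Pi_{i-1}) - \lambda|\Pi_{i-1}|$; likewise $g(\lambda) = f(\Pi_i) - \lambda|\Pi_i|$ for $\lambda \in (\lambda_i,\lambda_{i+1})$. Thus $g$ coincides with one affine function immediately to the left of $\lambda_i$ and with another immediately to the right.

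Now I would take the limit $\lambda \to \lambda_i$ from each side and invoke continuity of $g$ at $\lambda_i$ to get
$$ f(\Pi_{i-1}) - \lambda_i|\Pi_{i-1}| \;=\; g(\lambda_i) \;=\; f(\Pi_i) - \lambda_i|\Pi_i|. $$
Because $\Pi_{i-1} \subsetneq \Pi_i$ we have $|\Pi_i| - |\Pi_{i-1}| > 0$, so rearranging the outer equality gives $\lambda_i = \frac{f(\Pi_i) - f(\Pi_{i-1})}{|\Pi_i| - |\Pi_{i-1}|}$, the first claim. The same chain of equalities says $\Pi_{i-1}$ and $\Pi_i$ both attain the value $g(\lambda_i) = \min_{X \subseteq E} h_{\lambda_i}(X)$, i.e.\ both are minimizers of $\min_{X \subseteq E} f(X) - \lambda_i|X|$, which is the second claim. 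The argument is uniform in $i$ (for $i=1$ one just uses $f(\Pi_0)=0$, $|\Pi_0|=0$).

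The one point deserving care is the continuity of $g$ at the breakpoint $\lambda_i$ — equivalently, that the two affine pieces actually meet there with no jump — and this is where I would concentrate; it follows because a pointwise minimum of finitely many continuous functions is continuous wherever it is finite, so the one-sided limits both equal $g(\lambda_i)$. An alternative route, avoiding the envelope viewpoint, would combine the monotonicity of the maximal minimizer $\Pi(\lambda)$ in $\lambda$ (a standard uncrossing argument from submodularity of $f$) with a direct comparison of $h_{\lambda_i}$ at $\Pi_{i-1}$ and $\Pi_i$, but the envelope argument is cleaner. In any case this is Theorem~3.11 and the discussion on p.~139 of \citep{PrincipalPartitionF}, to which I would defer for a fully detailed treatment.
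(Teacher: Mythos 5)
Your proposal is correct and is essentially the paper's own argument: the paper likewise takes $\lambda \to \lambda_i$ from above (where $\Pi_i$ is optimal) and from below (where $\Pi_{i-1}$ is optimal) and passes to the limit in the optimality inequalities, which is exactly your continuity-of-the-lower-envelope step in unpackaged form. Your phrasing via the concave piecewise-linear value function $g(\lambda)$ is a slightly cleaner way of organizing the same $\varepsilon \to 0$ computation, and both routes yield the equality $f(\Pi_{i-1}) - \lambda_i|\Pi_{i-1}| = f(\Pi_i) - \lambda_i|\Pi_i|$ and the two minimizer claims identically.
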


We include a proof of Lemma \ref{lem:principal_partition_critical_value} in Appendix \ref{app:crit} for completeness. It simply uses the definition of the principal partition and submodularity of the set function.


\track{The following lemma lower gives an lower bound on the function value of any subset. As mentioned before, this lemma is more general than stating that $\Pi_i\setminus\Pi_{i-1}$ is the unique maximal sparsest subset with respect to $f_{|\Pi_{i-1}}$, the $\Pi_{i-1}$-contracted submodular function. }



\begin{lemma}\label{lem:monotone_submodular_principal_partition_lower_bound}
Let $f:2^E \to \R$ be a normalized monotone submodular function with $f(S)>0$ if $S\neq \emptyset$, and principal partition $\{\Pi_i\}_{0\le i\le s}$. Let $S\subseteq E$, then
$$
\track{\frac{f(S)-f(\Pi_{i-1})}{|S|-|\Pi_{i-1}|}\ge\frac{f(\Pi_i)-f(\Pi_{i-1})}{|\Pi_i|-|\Pi_{i-1}|},}
$$

\track{for all $i$ such that $|S|\ne |\Pi_{i-1}|$. }
\end{lemma}

\begin{proof}
One can simply fix an arbitrary critical value $\lambda_i$, and use the fact that \\\track{$f(\Pi_{i-1})-\lambda_i |\Pi_{i-1}|\le f(S)-\lambda_i |S|$} for any $S \subseteq E$. \track{Rearranging terms we get $f(S)-f(\Pi_{i-1})\ge \lambda_i\big(|S|-|\Pi_{i-1}|\big)$. }Substituting the value of $\lambda_i =\frac{f(\Pi_i)-f(\Pi_{i-1})}{|\Pi_i|-|\Pi_{i-1}|}$ (Lemma \ref{lem:principal_partition_critical_value}) \sout{where appropriate }gives us the desired result.

\end{proof}

Using the above lemma, we can sum up appropriate bounds for each subset $E_{i,\sigma}$ for any ordering $\sigma$, and obtain the following lower bound for monotone submodular MLOP. The proof after summation is purely algebraic manipulation, which is deferred to appendix.


\begin{proposition}\label{prop:pp_lower}
Let $f:2^E \to \R$ be a normalized monotone submodular function with $f(S)>0$ if $S\neq \emptyset$, and principal partition $\{\Pi_i\}_{0\le i\le s}$. Let $\sigma \in \permutations{E}$, then
$$
\sum_{k = 1}^m f(E_{k,\sigma}) \ge \frac{1}{2}(|E|+1)f(E)-\frac{1}{2}\sum_{i=1}^s \big(f(\Pi_i)|\Pi_{i-1}|-f(\Pi_{i-1})|\Pi_i|\big)\track{> 0}.
$$
\end{proposition}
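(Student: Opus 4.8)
The plan is to sum the pointwise lower bound from Lemma~\ref{lem:symmetric_submodular_principal_partition_lower_bound} over all prefix sets $E_{k,\sigma}$, choosing for each $k$ the ``right'' critical segment. Specifically, for a fixed ordering $\sigma$ and each $1 \le k \le m$, the set $S = E_{k,\sigma}$ has $|S| = k$, so I would pick the index $i = i(k)$ such that $|\Pi_{i-1}| < k \le |\Pi_i|$ (the segment of the principal partition containing cardinality $k$) and apply the bound
$$ f(E_{k,\sigma}) \ge \frac{f(\Pi_{i-1})|\Pi_i| - f(\Pi_i)|\Pi_{i-1}| + \big(f(\Pi_i) - f(\Pi_{i-1})\big)k}{|\Pi_i| - |\Pi_{i-1}|}. $$
The key observation is that the right-hand side is precisely the value at $k$ of the piecewise-linear function interpolating the points $(|\Pi_j|, f(\Pi_j))$ — call it $L(k)$ — and crucially $L(k)$ does not depend on $\sigma$ at all. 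So $\sum_{k=1}^m f(E_{k,\sigma}) \ge \sum_{k=1}^m L(k)$ for every ordering, and it remains to evaluate $\sum_{k=1}^m L(k)$ and check it equals the claimed expression.

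The evaluation of $\sum_{k=1}^m L(k)$ I would do segment by segment. On segment $i$, for $k$ ranging over the integers in $(|\Pi_{i-1}|, |\Pi_i|]$, $L(k)$ is an arithmetic progression, so its sum over that range is (number of terms) times the average of the endpoint values. Writing $a_i = |\Pi_i|$ and $\phi_i = f(\Pi_i)$, the sum over segment $i$ of $L$ at the integer points $a_{i-1}+1, \ldots, a_i$ equals $\frac{1}{2}(a_i - a_{i-1})\big(L(a_{i-1}) + L(a_i) + \text{correction}\big)$; more cleanly, since $L$ interpolates linearly between $(a_{i-1},\phi_{i-1})$ and $(a_i,\phi_i)$, summing over the $a_i - a_{i-1}$ integer points with $L(a_i) = \phi_i$ included gives $\sum_{k=a_{i-1}+1}^{a_i} L(k) = \frac{1}{2}(a_i - a_{i-1})(\phi_{i-1} + \phi_i) + \frac{1}{2}(\phi_i - \phi_{i-1})$, where the last term corrects for the fact that we include the right endpoint but not the left. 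Then I would sum this over $i = 1, \ldots, s$ and collect terms: the $\frac{1}{2}(\phi_i - \phi_{i-1})$ telescopes to $\frac{1}{2}(f(E) - 0) = \frac{1}{2}f(E)$, and the $\frac{1}{2}(a_i - a_{i-1})(\phi_{i-1} + \phi_i)$ terms I would rearrange, using $\sum_i (a_i - a_{i-1}) = a_s = |E|$ and Abel summation, into $\frac{1}{2}|E| f(E) - \frac{1}{2}\sum_{i=1}^s (\phi_i a_{i-1} - \phi_{i-1} a_i)$ (noting $\phi_0 = a_0 = 0$). Adding the telescoped $\frac{1}{2}f(E)$ gives exactly $\frac{1}{2}(|E|+1)f(E) - \frac{1}{2}\sum_{i=1}^s (f(\Pi_i)|\Pi_{i-1}| - f(\Pi_{i-1})|\Pi_i|)$, as claimed.

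The main obstacle is purely bookkeeping: correctly handling the boundary integer points $k = |\Pi_i|$ (which belong to segment $i$ under the convention $|\Pi_{i-1}| < k \le |\Pi_i|$, and where both the segment-$i$ and segment-$(i+1)$ formulas agree since $L$ is continuous), making sure no point is double-counted or missed, and tracking the off-by-$\frac12$ corrections from summing an arithmetic progression over a half-open integer range. A clean way to sidestep the casework is to first verify the algebraic identity that the segment-$i$ bound, evaluated at $k = |\Pi_i|$, equals $f(\Pi_i)$ exactly and equals the segment-$(i+1)$ bound there too; then the choice of which segment to charge a boundary $k$ to is immaterial, and the whole sum telescopes cleanly. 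I would also remark that equality throughout is attained precisely when $\sigma$ is a linear extension of the principal partition and $f$ is linear within each segment — though that observation is not needed for this proposition and properly belongs to the proof of Theorem~\ref{thm:monotone}.
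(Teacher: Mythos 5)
Your proposal is correct and follows essentially the same route as the paper: apply the pointwise bound of Lemma~\ref{lem:symmetric_submodular_principal_partition_lower_bound} with the critical value of the segment containing $|E_{k,\sigma}|=k$, sum the resulting arithmetic progression over each segment $(|\Pi_{i-1}|,|\Pi_i|]$, and telescope over $i$. Your per-segment total $\frac{1}{2}(a_i-a_{i-1})(\phi_{i-1}+\phi_i)+\frac{1}{2}(\phi_i-\phi_{i-1})$ agrees exactly with the paper's, and the final bookkeeping matches.
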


\begin{proof}
\track{The proof is deferred to Appendix \ref{app:lower}.}
\end{proof}

For the upper bound, we require that the chain must contain all sets in principal partition, i.e. $E_{|\Pi_i|,\sigma} = \Pi_i$ for all $i$. We use the fact that each added element into the subset can increase the function value by at most $\kappa_f$ to upper bound the function value of remaining sets. Pictorially, if we start from $\Pi_{i-1}$, the upper bound starts at $f(\Pi_{i-1})$ and has slope $\kappa_f$, until it reaches $f(\Pi_i)$ where it remains flat until $\Pi_i$ (refer to Figure \ref{fig:principal_partition}). Also note that the increase of function value is integer multiple of $\kappa_f$ without additional analysis, and the rounding as function value approaches $f(\Pi_i)$ has to be taken care of. 

\begin{proposition}\label{prop:pp_upper}
Let $f:2^E \to \R$ be a normalized monotone submodular function with $f(S)>0$ if $S\neq \emptyset$, and principal partition $\{\Pi_i\}_{0\le i\le s}$.  Let $\sigma \in \permutations{E}$ be such that $E_{|\Pi_i|,\sigma} = \Pi_i$ for all $1\le i\le s$. Then the MLOP objective value for $f$ \track{with permutation $\sigma$} is at most

\begin{align*}
    &f(E)|E|-\frac{f(E)^2}{2\kappa_{f}}+\frac{f(E)}{2}\\
    &-\sum_{i=1}^s (f(E)-f(\Pi_i))(|\Pi_i|-|\Pi_{i-1}|)+\sum_{i=1}^s\frac{f(\Pi_{i-1})(f(\Pi_i)-f(\Pi_{i-1}))}{\kappa_{f}}. 
\end{align*}
\end{proposition}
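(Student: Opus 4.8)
The plan is to decompose the MLOP sum $\sum_{k=1}^m f(E_{k,\sigma})$ according to the blocks $\{|\Pi_{i-1}|+1,\dots,|\Pi_i|\}$ of the principal partition, exactly as in the proof of Proposition~\ref{prop:pp_lower}, and to bound the contribution of each block using the piecewise-linear upper envelope suggested by Figure~\ref{fig:principal_partition}. For an index $j$ with $|\Pi_{i-1}| < j \le |\Pi_i|$, submodularity and monotonicity of $f$ together with $\kappa_f = \max_{x\in E} f(\{x\})$ give the two bounds $f(E_{j,\sigma}) \le f(\Pi_{i-1}) + \kappa_f\,(j - |\Pi_{i-1}|)$ and $f(E_{j,\sigma}) \le f(\Pi_i)$, because $\Pi_{i-1} = E_{|\Pi_{i-1}|,\sigma} \subseteq E_{j,\sigma} \subseteq E_{|\Pi_i|,\sigma} = \Pi_i$ by the hypothesis on $\sigma$. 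So within block $i$ the truthful per-term bound is $\min\{\,f(\Pi_{i-1}) + \kappa_f(j-|\Pi_{i-1}|),\ f(\Pi_i)\,\}$, and the slope-$\kappa_f$ line crosses the level $f(\Pi_i)$ once $j - |\Pi_{i-1}|$ reaches the threshold $t_i := (f(\Pi_i) - f(\Pi_{i-1}))/\kappa_f$.

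Next I would sum these per-block bounds in closed form. For each block, split the range $j \in \{|\Pi_{i-1}|+1,\dots,|\Pi_i|\}$ at the crossover point $|\Pi_{i-1}| + t_i$: on the first part sum the arithmetic progression $f(\Pi_{i-1}) + \kappa_f(j-|\Pi_{i-1}|)$, and on the remaining part add the constant $f(\Pi_i)$ a total of $(|\Pi_i| - |\Pi_{i-1}|) - t_i$ times. The arithmetic-series part contributes roughly $f(\Pi_{i-1}) t_i + \tfrac{\kappa_f}{2} t_i^2 = \tfrac{1}{2\kappa_f}(f(\Pi_i)^2 - f(\Pi_{i-1})^2) + $ lower-order terms once $t_i$ is substituted back, and the constant part contributes $f(\Pi_i)(|\Pi_i| - |\Pi_{i-1}|) - f(\Pi_i) t_i$. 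Adding the two and telescoping the squared terms $f(\Pi_i)^2 - f(\Pi_{i-1})^2$ across $i = 1,\dots,s$ should collapse to $f(E)^2$, while the $f(\Pi_i)(|\Pi_i|-|\Pi_{i-1}|)$ terms rearrange (via $\sum_i f(\Pi_i)(|\Pi_i|-|\Pi_{i-1}|) = f(E)|E| - \sum_i (f(E)-f(\Pi_i))(|\Pi_i|-|\Pi_{i-1}|)$, using $f(\Pi_s) = f(E)$) into the stated $f(E)|E| - \sum_i (f(E)-f(\Pi_i))(|\Pi_i|-|\Pi_{i-1}|)$ term. The leftover $\tfrac{f(E)}{2}$ comes from the $+\tfrac12$ rounding corrections in each arithmetic series summing to $\tfrac12 \sum_i t_i \kappa_f = \tfrac12 f(E)$ (telescoping $\sum_i (f(\Pi_i)-f(\Pi_{i-1})) = f(E)$), and the final $\sum_i f(\Pi_{i-1})(f(\Pi_i)-f(\Pi_{i-1}))/\kappa_f$ term is precisely the $f(\Pi_{i-1}) t_i$ piece.

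The main obstacle is that the crossover threshold $t_i = (f(\Pi_i)-f(\Pi_{i-1}))/\kappa_f$ is generally not an integer, and may even exceed the block length $|\Pi_i|-|\Pi_{i-1}|$ — in which case the slope-$\kappa_f$ bound never reaches $f(\Pi_i)$ inside the block and the ``constant'' part is empty (indeed Lemma~\ref{lem:principal_partition_critical_value} gives $\lambda_i = (f(\Pi_i)-f(\Pi_{i-1}))/(|\Pi_i|-|\Pi_{i-1}|)$, so $t_i \le |\Pi_i|-|\Pi_{i-1}|$ iff $\lambda_i \le \kappa_f$, which holds but must be argued). The clean way around this is to \emph{not} case-split on the integer crossover at all: instead bound $f(E_{j,\sigma})$ by the single continuous expression $\min\{f(\Pi_{i-1}) + \kappa_f(j-|\Pi_{i-1}|),\, f(\Pi_i)\}$, compare the discrete sum $\sum_j \min\{\cdots\}$ to the integral $\int_{|\Pi_{i-1}|}^{|\Pi_i|}\min\{\cdots\}\,dx$ (the discrete sum of a concave piecewise-linear function over unit steps exceeds its integral by at most half the total rise, giving the $+f(E)/2$ slack cleanly), evaluate the integral exactly — it is the area of a trapezoid truncated by the horizontal line $f(\Pi_i)$ — and then telescope. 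I would also dispose of one edge case first: if $\Pi$ is the trivial partition $\Pi_0 = \emptyset \subsetneq \Pi_1 = E$ the formula should still check out, and the telescoping is vacuous. With the integral computation and the ``sum $\le$ integral $+$ half-rise'' estimate in hand, the algebra is routine bookkeeping and the stated bound follows.
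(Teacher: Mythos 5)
Your proposal is correct, and it follows the paper's proof in all essentials: the same block decomposition along $|\Pi_{i-1}| < j \le |\Pi_i|$, the same per-term envelope $f(E_{j,\sigma}) \le \min\{f(\Pi_{i-1})+\kappa_f(j-|\Pi_{i-1}|),\, f(\Pi_i)\}$, the same per-block inequality (your ``integral of the truncated trapezoid plus half the rise'' evaluates to exactly the paper's inequality (\ref{equa:inequality_upperbound})), and the same telescoping identities at the end. The one genuine difference is how the per-block sum is bounded: the paper sets $a_i = \lfloor (f(\Pi_i)-f(\Pi_{i-1}))/\kappa_f\rfloor$ and runs a three-case analysis ($a_i = 0$, $a_i = |\Pi_i|-|\Pi_{i-1}|$, and the intermediate case with a fractional remainder $\eta_i$), whereas you compare the discrete sum of the concave piecewise-linear envelope to its integral, using the trapezoid-rule fact that for a concave $g$ one has $\sum_{j=a+1}^{b} g(j) \le \int_a^b g + \tfrac{1}{2}(g(b)-g(a))$. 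Your route is cleaner and avoids the floor-function bookkeeping entirely; the two small points you flag (that the crossover $t_i$ satisfies $t_i \le |\Pi_i|-|\Pi_{i-1}|$, via $f(\Pi_i)-f(\Pi_{i-1}) \le \kappa_f(|\Pi_i|-|\Pi_{i-1}|)$, and that the endpoint values of the envelope are $f(\Pi_{i-1})$ and $f(\Pi_i)$) are exactly the right things to check and both hold. No gaps.
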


\begin{proof}
\track{The proof is deferred to Appendix \ref{app:upper}.}
\end{proof}

\subsection{Proof of Improved Approximation for MLOP in Theorem \ref{thm:monotone}}

Both Proposition \ref{prop:pp_lower} and \ref{prop:pp_upper} together allow us to prove Theorem~\ref{thm:monotone}. Our goal is to show the upper bound obtained from Proposition~\ref{prop:pp_upper} is at most $2 - \frac{1+\ell_{f}}{1+ |E|}$ the lower bound obtained from Proposition~\ref{prop:pp_lower}, thus showing for a $\sigma \in \permutations{E}$ such that $E_{|\Pi_i|,\sigma} = \Pi_i$ for all $1 \le i \le s$, is our desired approximation for monotone submodular MLOP. 
First comparing the non-summation terms in Proposition~\ref{prop:pp_lower} and~\ref{prop:pp_upper} we have

\begin{align*}
    \frac{f(E)|E|-\frac{f(E)^2}{2\kappa_{f}}+\frac{f(E)}{2}}{\frac{1}{2}(|E|+1)f(E)}=\frac{2|E|-\frac{f(E)}{\kappa_{f}} + 1}{|E| + 1}  = 2 - \frac{1+\ell_{f}}{1+ |E|}. 
\end{align*}

To deal with the remaining summation terms, it suffices to prove that
\begin{align*}
    \sum_{i=1}^s \big(f(\Pi_i)|\Pi_{i-1}|-f(\Pi_{i-1})|\Pi_i|\big)&\le \sum_{i=1}^s (f(E)-f(\Pi_i))(|\Pi_i|-|\Pi_{i-1}|)\\
    &-\sum_{i=1}^s\frac{f(\Pi_{i-1})(f(\Pi_i)-f(\Pi_{i-1}))}{\kappa_{f}},
\end{align*}

i.e., the decrease of upper bound from non-summation terms is at least twice the decrease of lower bound from non-summation terms. To make computation easier we rewrite the terms using differential notation. For all $1\le i\le s$, let $\delta_i=f(\Pi_i)-f(\Pi_{i-1})$ and $\Delta_i=|\Pi_i|-|\Pi_{i-1}|$. By definition of $\kappa_{f}$, we have $0 \le \delta_i\le \kappa_{f} \Delta_i$. Note that $f(\Pi_i)=\sum_{j=1}^i \delta_j$ and $|\Pi_i|=\sum_{j=1}^i \Delta_j $. Furthermore, we have $f(\Pi_i)|\Pi_{i-1}|-f(\Pi_{i-1})|\Pi_i|=|\Pi_i|\delta_i-f(\Pi_i)\Delta_i$. 
Thus, the statement to be proved can be rewritten as

$$
\sum_{i=1}^s \big(\delta_i\sum_{j=1}^i \Delta_j-\Delta_i\sum_{j=1}^i \delta_j\big)\le\sum_{i=1}^s \big(\Delta_i\sum_{j=i+1}^s \delta_j -\frac{\delta_i}{\kappa_{f}}\sum_{j=1}^{i-1} \delta_j \big). 
$$

Suppose $s=1$, then both sides of this inequality are equal to zero.

Thus, we may assume $s\ge 2$. Rearranging terms, \track{for the left hand side} we have
$$\sum_{i=1}^s \big(\delta_i\sum_{j=1}^i \Delta_j-\Delta_i\sum_{j=1}^i \delta_j\big) = \sum_{i=1}^{s-1} \sum_{j>i}\delta_j\Delta_i-\delta_i\Delta_j,$$
and \track{the second part of right hand side can be rewritten as
$$\sum_{i=1}^s \frac{\delta_i}{\kappa_f}\sum_{j=1}^{i-1} \delta_j = \frac{1}{\kappa_f}\sum_{j=1}^{s-1} \sum_{i>j}\delta_i\delta_j = \frac{1}{\kappa_f}\sum_{i=1}^{s-1} \sum_{j>i}\delta_i\delta_j,$$ }
after exchanging summation order and changing variable names. As $\delta_j\le \kappa_{f}\Delta_j$ and hence $\sum_{i=1}^{s-1} \sum_{j>i}\delta_j\Delta_i-\delta_i\Delta_j\le \sum_{i=1}^{s-1} \sum_{j>i}\delta_j\Delta_i-\delta_i\frac{\delta_j}{\kappa_{f}}$, we have the inequality holds and thus, the proof is finished. 

Recall from Theorem \ref{thm:pp} that principal partitions $\{\Pi_i\}_{0\le i\le s}$ can be found in polynomial time. Thus, we have the following:

\MSMLOP*

Note that our analysis works for any linear extension to the partial order on subsets induced by the principal partition. It is unclear how this analysis can be extended to more structured linear extensions. We now discuss a special case of Theorem \ref{thm:monotone}, when $f$ is the rank function of a matroid $M$. Since in this case, $\ell_{f}=f(E)$, we get: 



\begin{corollary}\label{cor:matroid_MLOP_approximation}
Let $M=(E,r)$ be a matroid on ground set $E$ with rank function $r$. There exists a factor $(2-\frac{1+r(E)}{1+|E|})$-approximation algorithm to matroid MLOP on $M$ in polynomial time. 
\end{corollary}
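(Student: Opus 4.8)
The plan is to invoke Theorem~\ref{thm:MSMLOP} with $f = r$, after checking that the rank function meets its hypotheses and after identifying the relevant parameters. First I would dispose of the degenerate case: if $r(E) = 0$, then every element of $E$ is a loop, so $r(E_{i,\sigma}) = 0$ for every $\sigma$ and every $i$; any ordering is then optimal with objective value $0$, and the claimed factor is attained trivially. So from now on assume $r(E) \ge 1$.

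Next I would verify that $r$ is a nontrivial, normalized, monotone submodular function. Normalization ($r(\emptyset)=0$), monotonicity, and submodularity are built into the definition of a rank function recalled in Section~\ref{sec:prelims}, and nontriviality is precisely $r(E)\ne 0$, which holds by assumption. Then I would pin down the steepness: since $r(\{x\})\in\{0,1\}$ for every $x\in E$, and $r(E)\ge 1$ forces at least one non-loop element $x$ with $r(\{x\})=1$, we get $\kappa_r=\max_{x\in E} r(\{x\})=1$. Consequently the linearity is $\ell_r=\frac{r(E)}{\kappa_r}=r(E)$, which indeed lies in $[1,|E|]$ by the general bound on $\ell_f$ already established.

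Finally, Theorem~\ref{thm:MSMLOP} applied to $f=r$ yields a polynomial-time algorithm achieving factor $2-\frac{1+\ell_r}{1+|E|}=2-\frac{1+r(E)}{1+|E|}$, as claimed. Polynomiality is inherited directly: the principal partition of $r$ is computable in polynomial time by Theorem~\ref{thm:pp}, a linear extension $\sigma$ with $E_{|\Pi_i|,\sigma}=\Pi_i$ for all $i$ is then read off in linear time, and $r$ is evaluable in polynomial time either via an independence oracle or from the given representation of $M$. I do not expect any real obstacle here beyond the two sanity checks above --- handling the all-loops corner case and confirming $\kappa_r = 1$ --- so the corollary follows immediately once Theorem~\ref{thm:MSMLOP} is in place.
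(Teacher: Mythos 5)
Your proposal is correct and follows essentially the same route as the paper: apply Theorem~\ref{thm:MSMLOP} with $f=r$, using the fact (already noted in the preliminaries) that $\kappa_r=1$ and $\ell_r=r(E)$ for a rank function. Your extra care with the all-loops case and with confirming $\kappa_r=1$ is a harmless refinement of what the paper leaves implicit.
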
 

For graphic matroids, this improves upon the 2-factor approximation when graph \track{is connected and} has \track{a} linear \track{number} of edges. For instance, for \track{connected} $d$-regular graphs with vertex set $V$, the approximation factor is $2-\frac{2|V|}{2+d|V|}$, which is asymptotically $2-\frac{2}{d}$. 

\subsection{Application to minimum latency set cover (MLSC)}\label{subsec:MLSC_MSMLOP}

\track{

Recall that in Section \ref{subsec:MLSC_alg} we present a randomized factor $(2-\frac{2}{1+\ell})$-approximation algorithm for MLSC, where $\ell$ is the size of largest hyperedge. For the special case of MLVC the factor is $\frac{4}{3}$. In this section we make the observation that MLSC is an instance of monotone submodular MLOP, and use Theorem~\ref{thm:monotone} to show that there exists a deterministic factor $(2-\frac{\Delta+|E|}{\Delta (1+|V|)})$-approximation algorithm for MLSC, where $\Delta$ is the maximum degree of the hypergraph $H = (V,E)$. Note that for $\ell$-uniform hypergraphs this bound is never better than the one obtained in Section \ref{subsec:MLSC_alg}.

Recall that in MLSC, we are given a hypergraph $H=(V,E)$ with the objective

\[
\min_{\pi \in \permutations{V(H)}} \sum_{e \in E} \max_{v \in e} \pi(v)\,.
\]
In other words, we minimize over all permutations of the vertices, where the cost of each hyperedge is the maximum label of all vertices in it. Throughout this section we let $n=|V|$ denote the number of vertices. 

For a fixed $\pi\in\permutations{V}$, its reverse permutation is defined as $\pi'(v)=n+1-\pi(v)$ for all $v\in V$. We now prove that the MLSC value with $\pi$ is the same as the MLOP value with $\pi'$ on a particular monotone submodular function, which shows that MLSC is an instance of monotone submodular MLOP. 

\begin{proposition}\label{prop:MLSC_MSMLOP}
    For a fixed hypergraph $H=(V,E)$, let $f$ be the set function on $V$ such that for all $S \subseteq V$, $f(S)=|\{e\in E:S\cap e\ne\emptyset\}|$. Then $f$ is a monotone submodular function satisfying $f(\emptyset)=0$. Furthermore, for all $\pi\in\permutations{V}$ we have

    $$
    \sum_{e \in E} \max_{v \in e} \pi(v)=\sum_{i=0}^n f(V_{i,\pi'}),
    $$
    where $\pi' \in\permutations{V}$ is given by $\pi'(v)=n+1-\pi(v)$.
\end{proposition}

\begin{proof}
    It is straightforward to verify that $f$ is monotone and $f(\emptyset)=0$. For submodularity, for all $S,T\subseteq V$, observe that

    \begin{align*}
        &f(S)+f(T)-f(S\cup T)-f(S\cap T)\\
        =&|\{e\in E:e\cap S\ne\emptyset,e\cap T\ne\emptyset,e\cap S\cap T=\emptyset\}|\ge 0. 
    \end{align*}

    Now for all $0\le k\le n$ let $T_k=\{e\in E: \exists v\in e,\pi'(v)\le k \}$. Then it is straightforward to verify that for all $0\le k\le n$, $f(V_{k,\sigma})=|T_k|$ and furthermore for all $e\in E$, $|\{k:e\in T_k\}|=\max_{v \in e} \pi(v)$. Therefore we have $$ \sum_{e \in E} \max_{v \in e} \pi(v)=\sum_{i=0}^n |T_i|=\sum_{i=0}^n f(V_{i,\pi'}).$$
 \end{proof}

Using Theorem \ref{thm:monotone}, we obtain the following approximation algorithm for MLSC where the factor is based on maximum degree of the hypergraph. Note in this case $\kappa_{f}=\Delta(H)=\max_{e\in E}|e|$ is the maximum degree of hypergraph $H$. 

\MLSCMSMLOP*

For comparison, in Section~\ref{sec:MLVC} we presented a randomized scheduling-based approximation algorithm for MLSC within factor $2-\frac{2}{1+\ell}$, where $\ell=\max_{e\in E}|e|$ is the size of largest hyperedge. The algorithm presented in this section is deterministic, but for uniform hypergraphs this bound is never better than the randomized algorithm based on scheduling. 
}

\section{Future directions}
\label{sec:future}



\track{We conclude this work by presenting a list of open questions that stem from this work.}



In Sections \ref{sec:MMLOP} and \ref{sec:Graphic_MLOP} we investigated the hardness of restrictions of MLOP. In particular, we showed that graphic matroid MLOP is NP-hard. \track{In Section \ref{subsec:cactus_graphs}, we saw how  matroid MLOP can be viewed as an optimization problem over the bases of the matroid. \track{ However, even when a basis is fixed, the corresponding ordering problem on the ground set of elements can be non-trivial.} In particular, in the context for graphic matroid MLOP on a connected graph $G$, consider the following optimization problem,}
    \begin{align*}
    \min_{\sigma \in S_{T}} \sum_{e \in E(G) \setminus T} \max \{\sigma(e') : e' \in C(T,e) - e\} , 
    \end{align*}
    
\noindent 
where $T$ is a \track{given (fixed)} edge set of a spanning tree of $G$, \track{and $C(T,e)$ denotes the fundamental circuit with respect to $T$ and $e$.} We saw implicitly in the reduction of MLVC to graphic matroid MLOP (Theorem~\ref{theo:MLVC_graphic_MLOP_reduction}), that MLVC reduces to this problem when $T$ is the star graph (thus, this is NP-hard). We prove in Proposition \ref{prop:fixed_basis_MLOP}, that if we allow the choice of the spanning tree $T$ to vary \track{over all spanning trees of $G$}, then the above problem is equivalent to graphic matroid MLOP. Thus,  \track{when $T$ is fixed,} this problem \track{can be viewed as} a ``fixed-basis'' restriction of graphic matroid MLOP.

\begin{restatable}{openquestion}{Ofixedbasis}
\label{open:fixed_basis}
Given a graph $G$, a spanning tree $T$ and integer $k$, consider the problem of whether there exists a permutation $\sigma$ of $E(T)$ such that $$\sum_{e \in E(G) \setminus E(T)} \max \{\sigma(e') : e' \in C(T,e) - e\} \le k.$$ For what families of trees is this problem NP-hard?
\end{restatable}
This problem \track{is known to be NP-hard only when $T$ is a star graph, and} remains open for other simple families of trees, such as in the case where $T$ is a path.

In Section~\ref{sec:MLVC}, we showed that \track{MLSC} can be \track{$(2-\frac{2}{1+\ell})$}-approximated using randomized scheduling techniques. Furthermore, we showed for \track{$\ell$-uniform regular hypergraphs, MLSC can be $(2-\frac{2}{1+\ell})$}-approximated using an LP relaxation. This question for general \track{$\ell$-uniform hypergraphs} remains open.

\begin{restatable}{openquestion}{OLP}
\label{open:LP}
 Does solving the LP relaxation provide an approximation guarantee for \track{MLSC on $\ell$-uniform hypergraphs} by a factor of \track{$2-\frac{2}{1+\ell}$}? 
\end{restatable}

In Section \ref{subsec:MLVC_poly}, we show that the MLVC, MSVC, and MLA are all equivalent problems in decision form for regular graphs. Using techniques similar to Theorem \ref{theo:MLVC_MSVC_equivalence}, we can show that the optimal value for all three problems are related by linear shifts. It is known that MSVC on regular graphs can be 4/3-approximated (see \cite{FLT04}), but we have not found a formal proof that this problem is NP-hard. Thus, the following question remains open, to the best of our knowledge.  

\begin{restatable}{openquestion}{Oregular}
\label{open:MLA_MLVC_MSVC}
Are MLA, MLVC, and MSVC NP-hard for the family of simple regular graphs?
\end{restatable}

In Section \ref{sec:monotone}, we show that monotone submodular MLOP can be approximated within factor $2-\frac{1+\ell_f}{1+|E|}$, using principal partitions. A related open question is to develop algorithms when the principal partitions are trivial, i.e., $f(S)|E|\ge f(E)|S|$ for all $S\subseteq E$. In this case, the principal partition-based algorithm studied by Fokkink et al. in \cite{fokkink2019submodular} (and by us) will simply output an arbitrary solution. 


\begin{restatable}{openquestion}{PP}
    Do there exist better polynomial time approximation algorithms for monotone submodular MLOP in the case where the function $f$ satisfies $f(\emptyset)=0$ and $f(S)|E|\ge f(E)|S|$ for all $S\subseteq E$?
\end{restatable}

In the scope of symmetric submodular MLOP, the current best known approximation factor for the special case MLA is polylogarithmic in the size of the graph, i.e., $O(\sqrt{\log n} \log \log n)$, given by Feige and Lee \cite{FL07}; see also Charikar et al. \cite{charikar2010}. For the more general problem of symmetric submodular MLOP, there is currently no known efficient approximation algorithm better than $O(|E|)$. 

\begin{restatable}{openquestion}{OSS}
\label{open:symmetric_submodular}
Can symmetric submodular MLOP over a ground set $E$ 
be approximated to a factor better than $O(|E|)$?
\end{restatable}

\section{Acknowledgements}
The authors would like to thank Nikhil Bansal for insightful discussions on
Theorem~\ref{thm:MLVC}, i.e., approximation of MLVC, \track{L{\'a}szl{\'o} V{\'e}gh for remarks on principal partitions,} and Jai Moondra for comments on a preliminary version of this paper. \track{The authors would also like to thank the anonymous referees for numerous feedback and suggestions that are of great value, in particular for pointing out that MLSC is an instance of monotone submodular MLOP.}

The second author would like to acknowledge support from NSF CRII-1850182. The last author would like to acknowledge support by an NSF Graduate Research Fellowship under Grant No. DGE-165004. The first and third author were supported by ARC-TRIAD Student Fellowships. The fourth author acknowledges support by NSF grant DMS-2151283.

\section{Competing Interests}
The authors declare that they have no competing interests. 

\bibliography{refs}
\bibliographystyle{IEEEtran}

\appendix 


\section{Proofs for Section \ref{sec:monotone}}\label{app:proofs}

\subsection{Reduction to the case where $f(S)>0$ for all $S\ne\emptyset$}\label{app:zeroset}

\track{Here we formally prove the statement that in monotone submodular MLOP where $f(\emptyset)=0$, all elements with weight zero must appear (in any order) in the beginning of any optimal MLOP solution. }

\begin{lemma}\label{lemma:monotone_zeroset}
Let $f:2^E \to \R$ be a monotone submodular function with $f(\emptyset)=0$. Then there exists a unique maximal set $U$ satisfying $f(U)=0$. Furthermore, any optimal MLOP solution $\sigma \in \permutations{E}$ on $f$ must have $U$ as prefix, i.e., $E_{|U|,\sigma}=U$.  
\end{lemma}


\begin{proof}

Let $U$ be a maximal set with $f(U)=0$ and let $U'$ be any subset of $E$ such that $f(U')=0$. We claim that $U'\subseteq U$ which would show that $U$ is the unique maximal set. From submodularity we have $f(U\cup U^\prime)\le f(U)+f(U^\prime)-f(U\cap U^\prime)=-f(U\cap U^\prime)$.
From monotonicity, $f(U \cap U^\prime) \ge f(\emptyset) = 0$, therefore $f(U \cup U^\prime) = 0$.
Since $U$ is maximal, we have $U \cup U'=U$, which implies $U^\prime \subseteq U$. 

For the sake of contradiction, let $\sigma \in \permutations{E}$ be any optimal MLOP solution where $E_{|U|,\sigma} \ne U$. Let the elements in $E = \{e_1, \hdots, e_n\}$ be ordered such that $\sigma(e_j)=j$. Let $i<|U|$ be the smallest index such that $e_i \notin U$ and let $j$ be the smallest index such that $j>i$ and $e_j\in U$. Consider a new permutation $\sigma' \in \permutations{E}$ where we move $e_j$ just before $e_i$ and keep everything else unchanged. That is,
\begin{align*}
    \sigma'(e) = 
    \begin{cases}
    \sigma(e) &\text{ if } \sigma(e) < \sigma(e_i),\\
    \sigma(e_i) &\text{ if } e = e_j,\\
    \sigma(e) + 1 &\text{ if } \sigma(e) \ge \sigma(e_i) \text { and } \sigma(e) < \sigma(e_j),\\
    \sigma(e) &\text{ if } \sigma(e) \ge \sigma(e_j).
    \end{cases}.
\end{align*}
We have $E_{k,\sigma}=E_{k,\sigma'}$ for all $k<i$ and $k\ge j$.
For all $i \le k < j$, we have that $E_{k,\sigma'} = E_{k,\sigma} - e_k + e_j$. As $f(\{e_j\}) = 0$, we have by submodularity, $f(E_{k,\sigma'})\le f(E_{k,\sigma} - e_k)+f(\{e_j\})=f(E_{k,\sigma} - e_k)$. Hence by monotonicity, $f(E_{k,\sigma'})\le f(E_{k,\sigma})$ for any $i\le k<j$. Furthermore, we have $f(E_{i,\sigma})>0$ since $e_i\notin U$. As $f(E_{i,\sigma'})=0$, this shows that changing $\sigma$ to $\sigma'$ strictly decreases the MLOP value, contradicting the optimality of $\sigma$. Thus any optimal MLOP solution must contain $U$ as a prefix.  \end{proof}



\subsection{Proof for Lemma \ref{lem:principal_partition_critical_value}}\label{app:crit}

Fix $1\le i\le s$, consider $\min_{X \subseteq E} f(X) - \lambda_i |X|$. \sout{As $\{\Pi_i\}_{0 \le i \le s}$ is a principal partition, }\track{From the definition of principal partitions, $\Pi_i$ is a minimizer of $f(X)-\lambda|X|$ for all $\lambda\in(\lambda_i,\lambda_{i+1})$, and $\Pi_{i-1}$ is a minimizer of $f(X)-\lambda|X|$ for all $\lambda\in(\lambda_{i-1},\lambda_i)$.}

\track{Thus} for all $X\subseteq E$ and sufficiently small $\varepsilon > 0$, $f(\Pi_i) - (\lambda_i + \varepsilon)|\Pi_i| \le f(X) - (\lambda_i + \varepsilon) |X|$ and $f(\Pi_i) - \lambda_i |\Pi_i| \le f(X) - \lambda_i |X|+\epsilon |\Pi_i|$. Letting $\varepsilon\to 0$, we have $f(\Pi_i) - \lambda_i |\Pi_i| \le f(X) - \lambda_i |X|$ for all $X\subseteq E$.

Furthermore, for all $X\subseteq E$ and sufficiently small $\varepsilon > 0$, $f(\Pi_{i-1}) - (\lambda_i - \varepsilon)|\Pi_{i-1}| \le f(X) - (\lambda_i - \varepsilon) |X|$ and $f(\Pi_{i-1}) - \lambda_i |\Pi_{i-1}| \le f(X) - \lambda_i |X|$ + $\epsilon |\Pi_{i-1}|$. Letting $\varepsilon\to 0$, we have $f(\Pi_{i-1}) - \lambda_i |\Pi_{i-1}|
\le f(X) - \lambda_i |X|$ for all $X\subseteq E$. 

Combining these two statements we get $f(\Pi_{i-1}) - \lambda_i |\Pi_{i-1}|=f(\Pi_i) - \lambda_i |\Pi_i|$. Solving for $\lambda_i$ we obtain the desired result. \track{The proof that $\Pi_{i-1}$ and $\Pi_i$ are the unique minimal and maximal minimizers of $\min_{X\subseteq E} f(X-\lambda_i |X|$ can be found in \cite{PrincipalPartitionF}. }

\subsection{Proof for Proposition \ref{prop:pp_lower}}\label{app:lower}

\track{We first claim that for all $S\subseteq E$ and all $i$, we have

$$
f(S)\ge f(\Pi_{i-1})+\frac{f(\Pi_i)-f(\Pi_{i-1})}{|\Pi_i|-|\Pi_{i-1}|}(|S|-|\Pi_{i-1}|).
$$

If $|S|=|\Pi_{i-1}|$ then $f(S)\ge f(\Pi_{i-1})$ from the definition of principal partitions. If $|S|\ne |\Pi_{i-1}|$, this follows from rearranging terms from Lemma \ref{lem:monotone_submodular_principal_partition_lower_bound}. 
}

Note $\sum_{k = 1}^m f(E_{k,\sigma}) = \sum_{i = 1}^s \sum_{j=|\Pi_{i-1}|+1}^{|\Pi_i|} f(E_{j,\sigma})$. For each \track{$E_{j,\sigma} \subseteq E$, we apply the bound above using the unique $i$ where $|\Pi_{i-1}|<j\le |\Pi_i|$. Since $f(\Pi_i)>f(\Pi_{i-1})$, this lower bound is strictly positive, and therefore the summation in the end is also strictly positive. 

The summation of $f(E_{j,\sigma})$ for all $|\Pi_{i-1}|<j\le |\Pi_i|$ is then given by

\begin{align*}
    \sum_{j=|\Pi_{i-1}|+1}^{|\Pi_i|} f(E_{j,\sigma})\ge & f(\Pi_{i-1})\big(|\Pi_i|-|\Pi_{i-1}|\big)+\frac{f(\Pi_i)-f(\Pi_{i-1})}{|\Pi_i|-|\Pi_{i-1}|}\sum_{j=|\Pi_{i-1}|+1}^{|\Pi_i|}\big(j-|\Pi_{i-1}|\big)\\
    = & f(\Pi_{i-1})\big(|\Pi_i|-|\Pi_{i-1}|\big)+\frac{1}{2}\big(f(\Pi_i)-f(\Pi_{i-1})\big)\big(|\Pi_i|-|\Pi_{i-1}|+1\big)\\
    = & \frac{1}{2}\big(f(\Pi_i)|\Pi_i|-f(\Pi_{i-1})|\Pi_{i-1}|+f(\Pi_i)-f(\Pi_{i-1}) \big)\\
    + & \frac{1}{2}\big(f(\Pi_{i-1})|\Pi_i|-f(\Pi_i)|\Pi_{i-1}|\big)
\end{align*}

The terms are grouped in this way so that the first part telescopes and second part does not. }It follows that 


\begin{align*}
    &\sum_{k = 1}^m f(E_{k,\sigma})=\sum_{i = 1}^s \sum_{j=|\Pi_{i-1}|+1}^{|\Pi_i|} f(E_{j,\sigma})\\
                        \ge & \sum_{i=1}^s \bigg[ \frac{1}{2}\big(f(\Pi_i)|\Pi_i|-f(\Pi_{i-1})|\Pi_{i-1}|+f(\Pi_i)-f(\Pi_{i-1}) \big)\\
    +&\frac{1}{2}\big(f(\Pi_{i-1})|\Pi_i|-f(\Pi_i)|\Pi_{i-1}|\big) \bigg]\\
                        = & \frac{1}{2}(f(\Pi_s)|\Pi_s| - f(\Pi_0)|\Pi_0| + f(\Pi_s) - f(\Pi_0))\\ -&\frac{1}{2} \sum_{i=1}^s \big(f(\Pi_i)|\Pi_{i-1}|-f(\Pi_{i-1})|\Pi_i|\big)\\
                        = & \frac{1}{2}(|E|+1)f(E)-\frac{1}{2}\sum_{i=1}^s \big(f(\Pi_i)|\Pi_{i-1}|-f(\Pi_{i-1})|\Pi_i|\big).
\end{align*}

\subsection{Proof for Proposition \ref{prop:pp_upper}}\label{app:upper}

For indices $i$ and $j$  such that $1\le i\le s$ and $|\Pi_{i-1}|+1\le j\le |\Pi_i|$ we have $f(E_{j,\sigma})\le \min\{f(\Pi_{i-1})+\kappa_{f} (j-|\Pi_{i-1}|), f(\Pi_i)\}$. Let $a_i=\lfloor \frac{f(\Pi_i)-f(\Pi_{i-1})}{\kappa_{f}}\rfloor$, which is the integer multiple of $\kappa_f$ before the upper bound becomes flat $(f(\Pi_i))$. Note that we always have $0 \le a_i\le |\Pi_i|-|\Pi_{i-1}|$ by definition of $\kappa_{f}$.

Our first goal is to sum over terms between two adjacent principal partitions. We will show that

\begin{equation}\label{equa:inequality_upperbound}
\begin{aligned}
    \sum_{j=|\Pi_{i-1}|+1}^{|\Pi_i|} f(E_{j,\sigma}) \le & f(\Pi_i)(|\Pi_i|-|\Pi_{i-1}|)\\
    - & \frac{(f(\Pi_i)-f(\Pi_{i-1}))^2}{2\kappa_{f}}+\frac{f(\Pi_i)-f(\Pi_{i-1})}{2}. 
    \end{aligned}
\end{equation}

There are three cases: 

\begin{enumerate}
    \item First suppose $a_i=0$. We have $f(\Pi_i)-f(\Pi_{i-1})<\kappa_{f}$ and (\ref{equa:inequality_upperbound}) holds as the bound \\$\sum_{j=|\Pi_{i-1}|+1}^{|\Pi_i|} f(E_{j,\sigma})\le f(\Pi_i)(|\Pi_i|-|\Pi_{i-1}|)$ is true for all $a_i$.  
    \item Suppose $a_i=|\Pi_i|-|\Pi_{i-1}|$. Since $|\Pi_i|-|\Pi_{i-1}|=a_i=\lfloor \frac{f(\Pi_i)-f(\Pi_{i-1})}{\kappa_{f}}\rfloor$, we have $f(\Pi_i)-f(\Pi_{i-1})\ge \kappa_{f}(|\Pi_i|-|\Pi_{i-1}|)$. By definition the of $\kappa_f$ we have $f(\Pi_i)-f(\Pi_{i-1})\le \kappa_{f}(|\Pi_i|-|\Pi_{i-1}|)$, and therefore $f(\Pi_i)-f(\Pi_{i-1})= \kappa_{f}(|\Pi_i|-|\Pi_{i-1}|)$. Thus we have

\begin{align*}
    & \sum_{j=|\Pi_{i-1}|+1}^{|\Pi_i|} f(E_{j,\sigma})\le \sum_{j=|\Pi_{i-1}|+1}^{|\Pi_i|} \big(f(\Pi_{i-1})+\kappa_{f} (j-|\Pi_{i-1}|)\big)\\
    = & f(\Pi_{i-1})(|\Pi_i|-|\Pi_{i-1}|)\\
    + & \frac{\kappa_f(|\Pi_i|-|\Pi_{i-1}|)(|\Pi_i|-|\Pi_{i-1}|+1)}{2}\\
    = & f(\Pi_i)(|\Pi_i|-|\Pi_{i-1}|)-\kappa_f (|\Pi_i|-|\Pi_{i-1}|)^2\\
    + & \frac{\kappa_f(|\Pi_i|-|\Pi_{i-1}|)(|\Pi_i|-|\Pi_{i-1}|+1)}{2}\\
    = & f(\Pi_i)(|\Pi_i|-|\Pi_{i-1}|)- \frac{\kappa_f (|\Pi_i|-|\Pi_{i-1}|)^2}{2}+ \frac{\kappa_f(|\Pi_i|-|\Pi_{i-1}|)}{2}\\
    = & f(\Pi_i)(|\Pi_i|-|\Pi_{i-1}|)-\frac{(f(\Pi_i)-f(\Pi_{i-1}))^2}{2\kappa_{f}}+\frac{f(\Pi_i)-f(\Pi_{i-1})}{2}, 
\end{align*}

since in this case we have $f(\Pi_i)-f(\Pi_{i-1})=\kappa_{f}(|\Pi_i|-|\Pi_{i-1}|)$. Thus (\ref{equa:inequality_upperbound}) also holds when $a_i=|\Pi_i|-|\Pi_{i-1}|$.

\item Now suppose $0< a_i< |\Pi_i|-|\Pi_{i-1}|$, we have: 

\begin{align*}
&\sum_{j=|\Pi_{i-1}|+1}^{|\Pi_i|} f(E_{j,\sigma})=\sum_{j=|\Pi_{i-1}|+1}^{|\Pi_{i-1}|+a_i} f(E_{j,\sigma})+\sum_{j=|\Pi_{i-1}|+a_i+1}^{|\Pi_i|} f(E_{j,\sigma})\\
\le & \sum_{j=|\Pi_{i-1}|+1}^{|\Pi_{i-1}|+a_i} \big(f(\Pi_{i-1})+\kappa_{f} (j-|\Pi_{i-1}|)\big)+\sum_{j=|\Pi_{i-1}|+a_i+1}^{|\Pi_i|} f(\Pi_i)\\
= & a_i f(\Pi_{i-1})+\frac{\kappa_{f} a_i(a_i+1)}{2}+f(\Pi_i)(|\Pi_i|-|\Pi_{i-1}|-a_i)\\
= & f(\Pi_i)(|\Pi_i|-|\Pi_{i-1}|)+\frac{\kappa_{f} a_i(a_i+1)}{2}-a_i (f(\Pi_i)-f(\Pi_{i-1})). 
\end{align*}

Let $\eta_i := \frac{f(\Pi_i)-f(\Pi_{i-1})}{\kappa_{f}}-a_i$ be \track{the decimal part of $\frac{f(\Pi_i)-f(\Pi_{i-1})}{\kappa_{f}}$.} Substituting $a_i$ with $ \frac{f(\Pi_i)-f(\Pi_{i-1})}{\kappa_{f}} - \eta_i$ with the above inequality we have 

\begin{align*}
\sum_{j=|\Pi_{i-1}|+1}^{|\Pi_i|} f(E_{j,\sigma})&\le
f(\Pi_i)(|\Pi_i|-|\Pi_{i-1}|)\\
&+\frac{\kappa_{f}}{2}\big( \frac{f(\Pi_i)-f(\Pi_{i-1})}{\kappa_{f}} - \eta_i\big) \big(\frac{f(\Pi_i)-f(\Pi_{i-1})}{\kappa_{f}} - \eta_i+1\big)\\
&-(\frac{f(\Pi_i)-f(\Pi_{i-1})}{\kappa_{f}} - \eta_i)(f(\Pi_i)-f(\Pi_{i-1})) \\
&\le f(\Pi_i)(|\Pi_i|-|\Pi_{i-1}|) -\frac{(f(\Pi_i)-f(\Pi_{i-1}))^2}{2\kappa_{f}}\\
&+\frac{f(\Pi_i)-f(\Pi_{i-1})}{2}+\frac{\kappa_{f}}{2}(\eta_i^2-\eta_i).
\end{align*}

As $0\le \eta_i<1$ and we have $\eta_i^2-\eta_i\le 0$. It follows,

\begin{align*}
\sum_{j=|\Pi_{i-1}|+1}^{|\Pi_i|}  f(E_{j,\sigma}) \le & f(\Pi_i)(|\Pi_i|-|\Pi_{i-1}|)\\
 -&\frac{(f(\Pi_i)-f(\Pi_{i-1}))^2}{2\kappa_{f}}+\frac{f(\Pi_i)-f(\Pi_{i-1})}{2}.
\end{align*}

\end{enumerate}

Thus inequality (\ref{equa:inequality_upperbound}) always holds.

For easier manipulation and telescoping later in the computation, we rewrite the terms as $f(\Pi_i)(|\Pi_i|-|\Pi_{i-1}|)=f(E)(|\Pi_i|-|\Pi_{i-1}|)-(f(E)-f(\Pi_i))(|\Pi_i|-|\Pi_{i-1}|)$ and $(f(\Pi_i)-f(\Pi_{i-1}))^2=f(\Pi_i)^2-f(\Pi_{i-1})^2-2f(\Pi_{i-1})(f(\Pi_i)-f(\Pi_{i-1}))$. Now summing over $i$ from 1 to $s$ we have that (\ref{equa:inequality_upperbound}) implies, 

\begin{align*}
    \sum_{k = 1}^m f(E_{k,\sigma}) &= \sum_{i = 1}^s \sum_{j=|\Pi_{i-1}|+1}^{|\Pi_i|} f(E_{j,\sigma})\\
    &\le \sum_{i=1}^s f(E)(|\Pi_i|-|\Pi_{i-1}|)- \sum_{i=1}^s (f(E)-f(\Pi_i))(|\Pi_i|-|\Pi_{i-1}|)\\
    &-\sum_{i=1}^s \frac{f(\Pi_i)^2-f(\Pi_{i-1})^2}{2\kappa_{f}} +\sum_{i=1}^s \frac{f(\Pi_{i-1})(f(\Pi_i)-f(\Pi_{i-1}))}{\kappa_{f}}\\
    &+\sum_{i=1}^s \frac{f(\Pi_i)-f(\Pi_{i-1})}{2}\\
    &= f(E)|E|-\frac{f(E)^2}{2\kappa_{f}}+\frac{f(E)}{2}-\sum_{i=1}^s (f(E)-f(\Pi_i))(|\Pi_i|-|\Pi_{i-1}|)\\
    &+\sum_{i=1}^s\frac{f(\Pi_{i-1})(f(\Pi_i)-f(\Pi_{i-1}))}{\kappa_{f}}.
\end{align*}





\end{document}